\newcommand{\be}{\begin{equation}}
\newcommand{\ee}{\end{equation}}
\newcommand{\bea}{\begin{eqnarray}}
\newcommand{\eea}{\end{eqnarray}}
\def\nn{\nonumber\\}
\newtheorem{theorem}{Theorem}[section]
\newtheorem{lemma}[theorem]{Lemma}
\newenvironment{proof}[1][Proof]{\begin{trivlist}
\item[\hskip \labelsep {\bfseries #1}]}{\end{trivlist}}
\newenvironment{definition}[1][Definition]{\begin{trivlist}
\item[\hskip \labelsep {\bfseries #1}]}{\end{trivlist}}
\newenvironment{remark}[1][Remark]{\begin{trivlist}
\item[\hskip \labelsep {\bfseries #1}]}{\end{trivlist}}
\newtheorem{corollary}[theorem]{Corollary}
\newtheorem{property}[theorem]{Property}
\newcommand{\1}{\mathbbm{1}}
\begin{document}
\title{Pre-relaxation in weakly interacting models}  

\author{Bruno Bertini$^1$ and Maurizio Fagotti$^{1,2}$}
\address{$^1$The Rudolf Peierls Centre for Theoretical Physics, Oxford University, Oxford, \mbox{OX1 3NP}, United Kingdom}
\address{$^2$D\'epartement de Physique, \'Ecole normale sup\'erieure, CNRS, 24 rue Lhomond,
75005 Paris, France}
 
\begin{abstract}
We consider time evolution in models close to integrable points with hidden symmetries that 
generate infinitely many local conservation laws that do not commute with one another.
The system is expected to (locally) relax to a thermal ensemble if integrability is broken, 
or to a so-called generalised Gibbs ensemble if unbroken. In some circumstances expectation 
values exhibit quasi-stationary behaviour long before their typical relaxation time. For 
integrability-breaking perturbations, these are also called pre-thermalisation plateaux, 
and emerge e.g. in the strong coupling limit of the Bose-Hubbard model. As a result of the hidden
symmetries, quasi-stationarity appears also in integrable models, for example in the Ising limit
of the XXZ model. We investigate a weak coupling limit, identify a time window in
which the effects of the perturbations become significant and solve the time evolution
through a mean-field mapping. As an explicit example we study the XYZ spin-$\frac{1}{2}$ chain
with additional perturbations that break integrability. One of the most intriguing results of the analysis is the appearance of persistent oscillatory 
behaviour. To unravel its origin, we study in 
detail a toy model: the transverse-field Ising chain with an additional nonlocal interaction proportional to the square of the transverse spin per unit length [Phys. Rev. Lett. \href{http://dx.doi.org/10.1103/PhysRevLett.111.197203}{\bf 111}, 197203 (2013)]. Despite being nonlocal, this belongs to a class of models that emerge as intermediate steps of the mean-field mapping and shares many dynamical properties with the weakly interacting models under consideration.
\end{abstract} 
\newpage
\tableofcontents
\maketitle

\section{Introduction} %

The emergence of stationary behaviour in \emph{closed} quantum many-body systems is perhaps one of the most striking features of non-equilibrium dynamics.
Rather generally subsystems behave as they were in a ``bath'' and correlation functions relax to stationary values that can be described in a statistical fashion.

The maturation and refinement of experimental techniques have led to the design of experiments ever more effective in the extraction of information on the long time dynamics \cite{gm-02, kww-06, Getal, Tetal-12, chetal-12, schetal-12, Metal-13, Fateal-13, FSetal-13, Metal-14}. A sensitive communication between theory and experiment has then made it possible to identify the most interesting aspects of the non-equilibrium time evolution in quantum many-body systems\cite{QQrev}. 

In particular, it was realised that integrable models behave differently from generic ones. 
The stationary properties are indeed affected by the local conservation laws, which in integrable models are infinite in number. This led to the concept of generalised Gibbs ensemble (GGE) \cite{{rigol}, {rigol2}, {C-06}, {IC-09}, {CE-13}, {C-08}, {BS-08}, {FE:RDM}, {C-10}, {CEF}, {EK}, {rig2}, {CK}, {CIC}, {Gu}, {CSC-13}, {HSWR-13}, {andrei}, {DBZ-12}, {NI-13}, {KCC-14}, {RS-14}, {SC:cluster}, {FM-10}, {P-13}, {FE-12}, {M-13}, {KSCCI-13}, {FCEC-14}, {NWBC-14}, {QAXXZ-14}, {PMWKZT-14}, {A:bound}, {CC:2},FCG:FDT, EEF:dyn,P14:qb,EMP15}, which is often defined as the mixed state with maximal entropy under the constraints of the local conservation laws. 
Non-integrable models with no other local conservation laws except for the Hamiltonian itself are supposed to ``thermalise'' at some effective temperature \cite{deu-91, sred-94, rignat-08, bir-10, ban-11, bran-12, sir-14}, whereas integrable models retain infinite information about the initial state. 

In the same way as relaxation and thermalisation were associated with the late time behaviour in integrable and non-integrable models, pre-thermalisation \cite{Getal, QHubbard,  metaOpt, Kollar, IsingNI, E:preT} has been recognised as a typical feature of generic models close to integrable points. Essentially, at intermediate times the non-purely-elastic processes typical of non-integrable models are almost absent and the system behaves as if it were integrable. Despite the strenuous efforts to understand the process of thermalisation in the presence of pre-thermalisation plateaux, the picture is still far from being clear and, so far, only the earliest plateau has found satisfactory descriptions \cite{Kollar, E:preT}.

This state of affairs boosted the research into conserved and quasi-conserved operators in non-integrable models \cite{KAM,F:charge}, on the one hand, and put physicists' ingenuity to the test to propose sufficiently simple models to study pre-thermalisation \cite{IsingNI, E:preT}, on the other.

One interesting proposal \cite{IsingNI} was to break the integrability of the transverse field Ising chain (TFIC) by adding a highly nonlocal interaction proportional to the global magnetisation squared per unit length. 
Even though the model possesses infinite local conservation laws (odd under reflection symmetry) \cite{F:pair, FE:RDM}, it was argued to behave like a non-integrable model in the sector of reflection symmetric states. In particular, \cite{IsingNI} developed a perturbation theory that allows one to follow the time evolution of the ground state of a TFIC for sufficiently long times to see a pre-thermalisation plateau. 

In fact, the situation seems to be more complicated. 
Some techniques that will be developed in this paper allow us to analytically study the dynamics  of a class of nonlocal Hamiltonians that includes the model introduced in \cite{IsingNI}. 
The method we use is exact in the thermodynamic limit (in which the large-system limit is taken first): we prove some conjectures of \cite{IsingNI}, but we find that the time evolution does not result in thermalisation. Therefore, in the thermodynamic limit, models like the one of \cite{IsingNI}  can not be naively related to the physics underlying pre-thermalisation. Nevertheless, we show that similar types of nonlocal Hamiltonians can emerge  at intermediate times as effective descriptions of the dynamics generated by local Hamiltonians.  
Thus, instead of spoiling the interest in such models, our findings give new motivations for their study.

Our main goal is to investigate the time evolution of local observables under Hamiltonians with local interactions in the particular time windows where such effective descriptions can be used. 
There are indeed interesting cases where the expectation values start moving significantly from a plateau that could have been approximately described by the stationary state of the unperturbed model. We show that the crossover is driven by the presence of infinitely many local conservation laws that do not commute with one another, which will be referred to as \emph{non-abelian integrability}. 
We therefore identify two necessary requirements for a nontrivial time evolution.
First, the unperturbed model must have a non-abelian set of local conservation laws. Second, the perturbation must break non-abelian integrability. 

Since the crossover appears also in the presence of perturbations preserving integrability, we call it ``pre-relaxation'' and the limit ``pre-relaxation limit''.

The pre-relaxation limit has been already considered in \cite{F:super}, where a typical crossover behaviour between two plateaux
has been identified in noninteracting models like the XY quantum spin chain. 
Ref. \cite{F:super} also obtained similar results for a particular quench in an interacting model (XXZ spin-$\frac{1}{2}$ chain). However, despite an effective description was  proposed  that is supposed to capture the relaxation process for quite general interactions, the idea was tested only on simple cases in which the dynamics is essentially noninteracting.

In this paper we start filling this gap by investigating the pre-relaxation behaviour triggered off by more general (interacting) perturbations. This is a highly non trivial generalisation. The first universal picture of the time evolution of correlation functions after a quantum quench was delineated in conformal field theories~\cite{CC, CC:2}, but most of the analytic results have been in fact obtained in models that can be mapped to free fermions or bosons~\cite{CEF,C-06,CSC-13,KCC-14,RS-14,F:super,SC:cluster,FM-10, EEF:dyn,BKC:exc,MCKC-14,RS:long}. For the serious complications introduced by the interactions, there are far less examples~\cite{B:sG, NC-14, D-14} in which the time evolution of some nontrivial observable has been worked out in interacting models. In the pre-relaxation limit some obstacles can be overcome. In particular we show that, at the leading order in the perturbation strength, the dynamics generated by (local) weakly interacting Hamiltonians are \emph{equivalent} to those generated by time-dependent (quasi-)local (mean-field) Hamiltonians, which can be solved in a self-consistent way. Differently from the common situations, the mean-field mapping presented here is not an uncontrolled approximation, but arises naturally in the timescale investigated under few reasonable assumptions. The possibility to write a compact system of nonlinear differential equations for the time evolution of local observables can therefore be used to investigate the essential aspects of the pre-relaxation limit even analytically.

\subsection{Organisation of the paper}%

The paper is organised as follows. 

\begin{itemize}

\item[-] \Sref{s:summary} is a summary of our main results. 

\item[-] In \Sref{s:pre-rel} we propose an effective description of the dynamics within a time window in which  perturbations to a non-abelian integrable model become relevant.

\item[-] \Sref{s:int} is devoted to identify the class of effective Hamiltonians that emerge in the pre-relaxation limit. We show that they can be written as polynomials of the local conservation laws of the unperturbed model (with the correct scaling factors). 

\item[-] In \Sref{s:mf} we introduce mean-field Hamiltonians which, in the thermodynamic limit, generate \emph{exactly} the same dynamics of the effective Hamiltonians.

\item[-] The formalism is explicitly applied to the XYZ spin-$\frac{1}{2}$ chain in \Sref{s:pre-relI}, where pre-relaxation is investigated also in the presence of interactions that break integrability. 

\item[-] \Sref{s:Ising} provides a detailed analysis of the model considered in \cite{IsingNI}. 
We examine its relaxation properties and rule out thermalisation (in the thermodynamic limit). Besides its intrinsic interest, the model will be useful to understand the emergence of oscillatory behaviour observed in the pre-relaxation limit of the XYZ model.

\item[-] \Sref{s:conclusions} contains our conclusions.

\item[-] Several appendices complement the main text with the proofs of the theorems and additional details. 

\end{itemize}

\section{Summary of the results}\label{s:summary}%

We consider the time evolution of some initial state $\ket{\Psi_0}$ with cluster decomposition properties\footnote{
We say that the state $\ket{\Psi_0}$ has cluster decomposition properties if 
$$
\fl\qquad\lim_{\min\limits_{i\neq j}|x_i-x_j|\rightarrow\infty}\Bigl(\braket{\mathcal O_1(x_1)\mathcal O_2(x_2)\cdots \mathcal O_n(x_n)}-\braket{\mathcal O_1(x_1)}\braket{\mathcal O_2(x_2)}\cdots \braket{\mathcal O_n(x_n)}\Bigr)=0
$$
where the operators $\mathcal O_{i}(x_i)$ are local (act trivially far away from the site $x_i$) and the expectation values are taken with respect to $\ket{\Psi_0}$.}
under translation invariant Hamiltonians of the form
\be\label{eq:H}
H=H_0+g V\, ,
\ee
where $V$ is a global perturbation and $g$ is a small coupling constant. 

We focus on perturbations $V$ that break some symmetries of $H_0$ in such a way that the limit of infinite time of the expectation value of a local observable $\mathcal O$ does not commute with the limit of infinitesimal $g$
\be\label{eq:pre-relax}
\fl \qquad \qquad \lim_{g\rightarrow0}\lim_{t\rightarrow\infty}\braket{\Psi_0|e^{i(H_0+g V)t}\mathcal O e^{-i(H_0+g V)t}|\Psi_0}\neq \lim_{t\rightarrow\infty}\braket{\Psi_0|e^{i H_0t}\mathcal O e^{-i H_0 t}|\Psi_0}\, .
\ee
This is the typical situation in which local degrees of freedom experience a pre-thermalisation/pre-relaxation behaviour. Indeed, at not too large times the effect of the perturbation is negligible and the expectation value has time to settle at the stationary value of the unperturbed Hamiltonian. On the other hand, at later times the perturbation can not be ignored anymore and the expectation value varies with a typical timescale that depends on the perturbation strength.
Importantly, the amplitude of the variation is $\mathcal O(g^0)$, therefore the pre-relaxation behaviour can be understood in the limit of infinitesimal $g$.  

In principle, there could be many pre-relaxation plateaux, depending on how the time $t$ scales with the small parameter $g$. 
Here we focus on the limit $g\ll1$ and large time in such a way that $T=gt\sim O(g^0)$. \Fref{f:scheme} summarises the various steps of the formalism that will be developed in the next three sections to investigate the pre-relaxation limit. The analysis of explicit examples will be  instead carried out in the last two sections.

\begin{figure}[t]
\begin{center}
\includegraphics[width=0.95\textwidth]{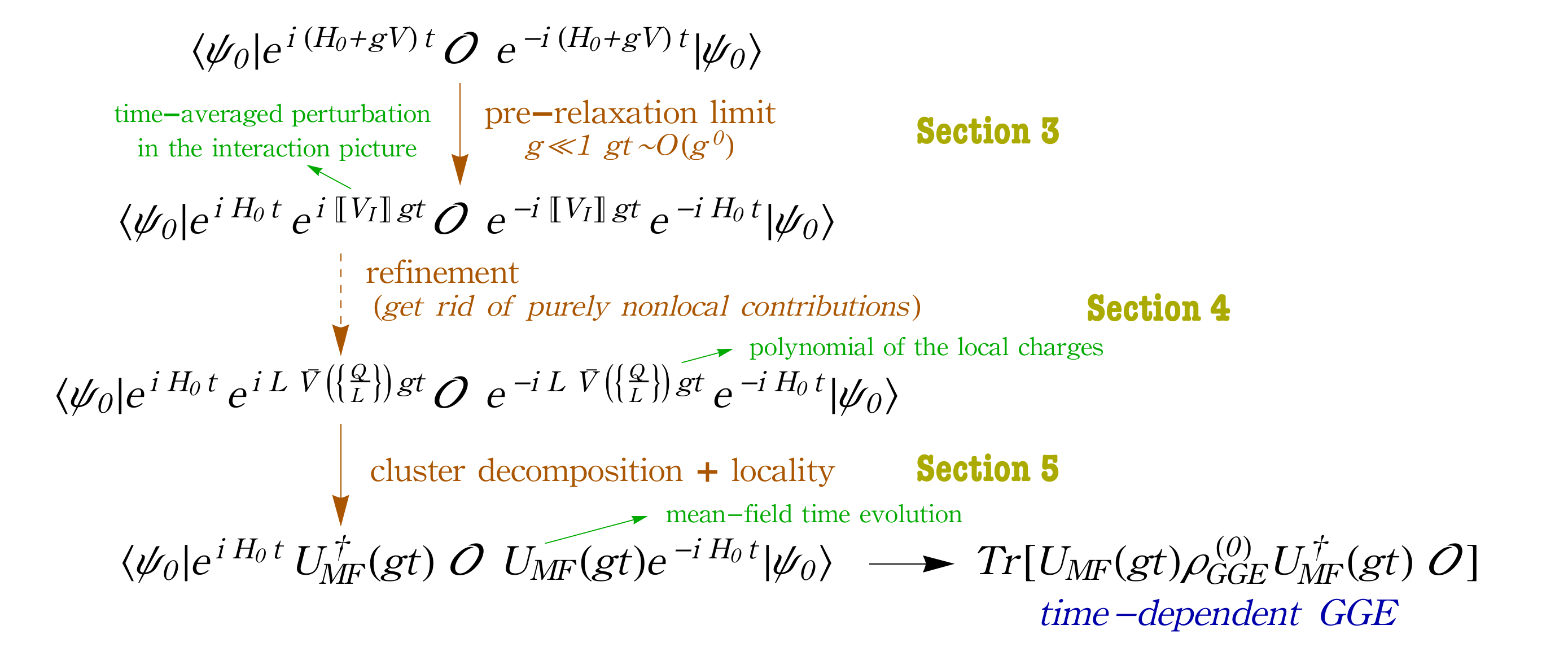}
\end{center}\caption{Scheme of the formalism employed to investigate a pre-relaxation limit in models described by Hamiltonian $H_0$ perturbed by some interacting term $g V$. In dark yellow, the section where the particular step is developed. 
}\label{f:scheme}
\end{figure}

We now present a comprehensive summary of the main results.  

\begin{itemize}

\item {}[\Sref{s:pre-rel} and \Sref{s:int}] Under some assumptions, the pre-relaxation limit of weakly interacting local Hamiltonians can be described by effective nonlocal Hamiltonians of the form 
\be\label{eq:opform0}
\bar H=\frac{1}{L^{n_1-1}}Q^{(1)}_1 \dots Q^{(1)}_{n_1}+\dots +\frac{1}{L^{n_m-1}}Q^{(m)}_1 \dots Q^{(m)}_{n_m}\, ,
\ee
where $Q_j^{(\ell)}$ are \mbox{(quasi-)local} conservation laws\footnote{It is customary to call `local' a (translation invariant) conservation law with local density.} of the (integrable) unperturbed model. In order to be a nontrivial limit,  $Q_j^{(\ell)}$ should not commute with one another.   

\item {}[Section \ref{s:mf} and Appendix \ref{a:MF}] In the thermodynamic limit, the time evolution of a state with cluster decomposition properties under Hamiltonians of the form \eref{eq:opform0} is completely equivalent to the time evolution under the time-dependent mean-field Hamiltonian
\be\label{eq:VMF}
\bar H_{\rm MF}^{\Psi_0}(gt)=\sum_{i=1}^m\sum_{j=1}^{n_i} c^{(i)}_{j}(gt;\Psi_0)Q^{(i)}_{j}\, ,
\ee
where the coefficients $c^{(i)}_{j}(gt;\Psi_0)$ are obtained self-consistently as follows:
\be
c^{(i)}_{j}(gt;\Psi_0)=\prod_{n\neq j}\mathrm{Tr}\Bigl[\rho_{\rm GGE}\bar U_{\rm MF}^{\dag}(gt;\Psi_0) \frac{Q_{n}^{(i)}}{L}\bar U_{\rm MF}(gt;\Psi_0)\Bigr]\, .
\ee
Here $\rho_{\rm GGE}$ is the generalised Gibbs ensemble that emerges in the time evolution under the unperturbed Hamiltonian and $\bar U_{\rm MF}(gt;\Psi_0)$ is the time evolution operator of $\bar H_{\rm MF}^{\Psi_0}(gt)$
\be
\bar U_{\rm MF}(gt;\Psi_0)={\rm T} \exp\Bigl(-i\int_0^{gt}\mathrm d\tau \bar H_{\rm MF}^{\Psi_0}(\tau)\Bigr)\, ;
\ee
$\rm T$ is the time-ordering operator, which formally orders operators depending on $\tau$  in such a way that those on the left are associated with larger (or equivalent) values of $\tau$.

Despite mean-field approximations being extremely common also in the field of non-equilibrium physics \cite{GC-11,SC-10},  in the thermodynamic limit the mean-field mapping presented here is actually exact under a mild assumption.

\item {}[Section \ref{s:pre-relI}] 
We consider Slater determinant initial states evolving under
 \be
\fl H=J \sum_\ell\Bigl(\frac{1+\gamma}{4}\sigma_{\ell}^x\sigma_{\ell+1}^x+\frac{1-\gamma}{4}\sigma_{\ell}^y\sigma_{\ell+1}^y+\frac{g}{4}\sigma_{\ell}^z\sigma_{\ell+1}^z+\frac{g U }{4}\sigma_{\ell}^z\sigma_{\ell+2}^z\Bigr)+\frac{gh}{2}\sum_\ell\sigma_\ell^z
\ee
in the the pre-relaxation limit $g t\sim O(g^0)$ with $g\ll 1$. 
We identify three different behaviours:

\begin{itemize}

\item[-] Emergence of a \emph{second plateau} that  can be described by a GGE constructed with the local conservation laws of the unperturbed model. The information about the initial state is \emph{not} encoded in a finite number of parameters because the effective Hamiltonian commutes with the unperturbed Hamiltonian and with infinitely many other conservation laws in involution. 
If broken, one-site shift invariance is generally not restored. 

\item[-] \emph{Oscillatory behaviour}: the expectation values of local observables keep oscillating with frequency proportional to $g$. Nevertheless, they can be described by a time-dependent GGE (that is not one-site shift invariant). 

\item[-] In the pre-relaxation limit the expectation values are \emph{independent of time}. This happens whenever the initial state is one-site shift invariant, or, more generally, when it is an excited state of $\bar H_{\rm MF}^{\Psi_0}(0)$ \eref{eq:VMF} (notice that the very definition of $\bar H_{\rm MF}^{\Psi_0}$ depends on the state, so there are implicit self-consistent conditions to be satisfied). 
In the latter case, one-site shift invariance is generally broken. 

\end{itemize}

We stress that, rather unexpectedly, a genuine one-site shift invariant interaction is not always sufficient to induce the restoration of one-site shift invariance. This is an indication that non-abelian integrability can survive interacting perturbations (at least at the lowest orders of perturbation theory).  

We will report explicit examples of the aforementioned behaviours.

\item {}[Section \ref{s:Ising}] The equivalence with the mean-field description does not rely on the fact that the operators $Q_j^{(\ell)}$ of \eref{eq:opform0} commute with the unperturbed Hamiltonian; relaxing such hypothesis allows us to construct more general models.  Ref.~\cite{IsingNI} has recently proposed the model with Hamiltonian
\be\label{eq:TFICNIS}
H(\texttt{g},\lambda)=-\sum_\ell^L(\sigma_\ell^x\sigma_{\ell+1}^x+\texttt{g}\sigma_\ell^z)+\frac{\lambda}{L}\left(\sum_\ell^L\sigma_\ell^z-\overline{\sum_\ell^L\sigma_\ell^z}\right)^2
\ee
as a convenient framework for studying pre-thermalisation/thermalisation issues. 
Here $\overline{\phantom{(}\!\!\!\cdots\phantom{)}\!\!\!}$ denotes the time average with respect to $H(\texttt{g},0)$. 
In the thermodynamic limit $L\rightarrow\infty$, 
we show that the time evolution does not result in thermalisation.

The apparent conflict between our results and \cite{IsingNI} can be traced back to the different order of limits. We indeed investigate the expectation values of local observables in the limit\footnote{As a matter of fact, using the perturbative results of \cite{IsingNI}, for which the lifetime of the Ising quasiparticles that diagonalise $H_f$ \eref{eq:Hf} scales as $L^{-1}$, it is reasonable to expect that our results hold true even in the limit $J t\sim L^\alpha$ with $\alpha<1$ and $L\rightarrow\infty$.}
\be\label{eq:limit}
(\lim_{t\rightarrow\infty})\lim_{L\rightarrow\infty}\braket{\Psi_0|e^{iH(\tilde g,\lambda) t}\mathcal Oe^{-iH(\tilde g,\lambda) t}|\Psi_0}\, ,
\ee
while, for what concerns the stationary properties,  \cite{IsingNI} considered the time average in finite systems
\be
\lim_{T\rightarrow\infty}\frac{1}{T}\int_0^T\mathrm d t \braket{\Psi_0|e^{iH(\tilde g,\lambda) t}\mathcal Oe^{-iH(\tilde g,\lambda) t}|\Psi_0}\, .
\ee
We find that the stationary behaviour of local observables in the limit \eref{eq:limit} is not characterised by a finite number of parameters.  More generally, we argue that thermal-like behaviour can be excluded when every linear combination of the operators $Q_j^{(\ell)}$ is the Hamiltonian of an integrable model.

\begin{figure}[tbp]
\begin{center}
\includegraphics[width=0.8\textwidth]{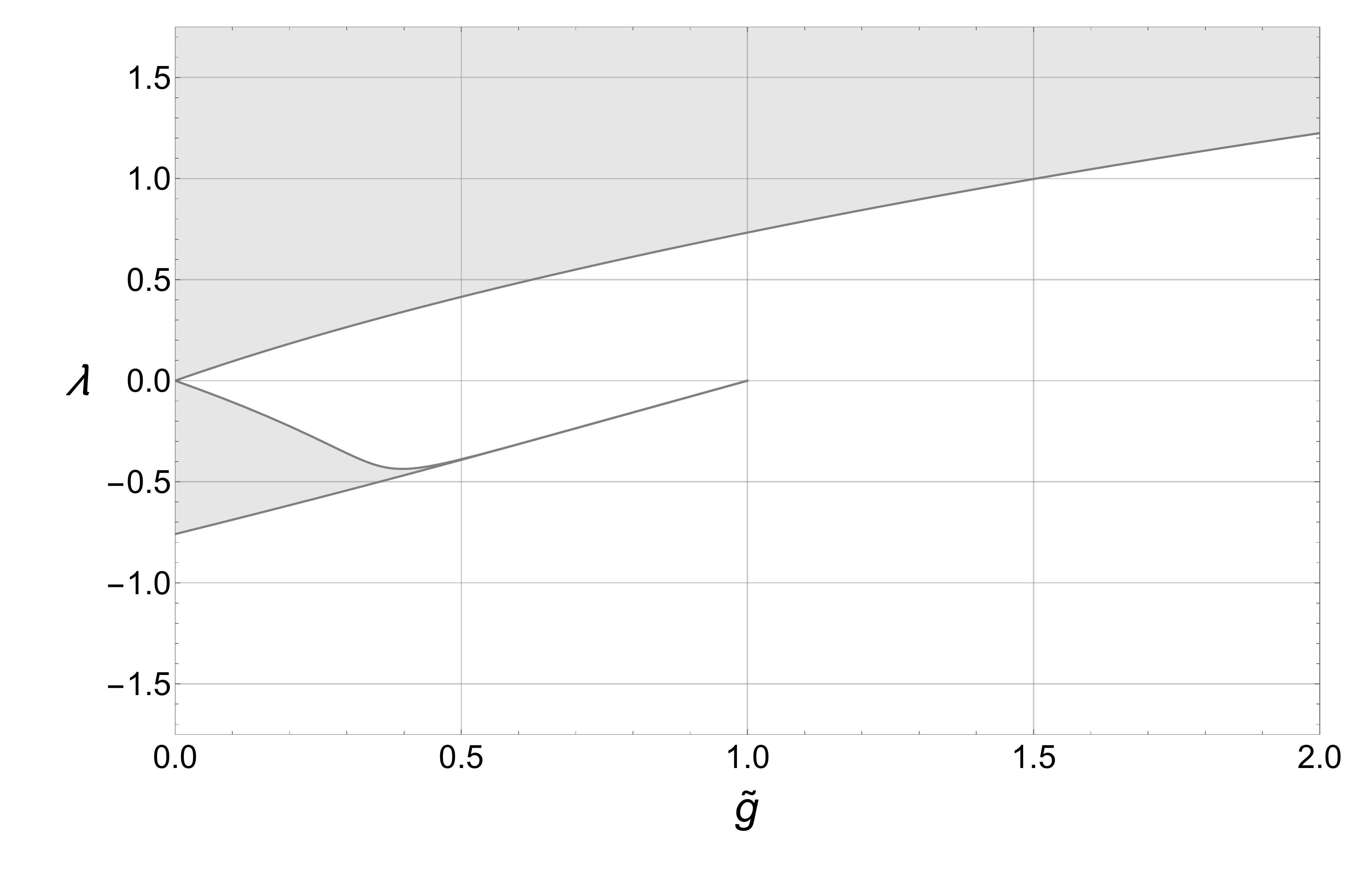}
\end{center}\caption{Quench dephasing diagram of the model \eref{eq:TFICNI1} in the limit of small quench with energy close to the ground state one. In the dark region there is persistent oscillatory behaviour at any time after the quench.  
}\label{f:diagram}
\end{figure}

\item {}[Section \ref{s:Ising}] 
For Hamiltonians like \eref{eq:TFICNI1} one can generally identify `critical regions' for which the late time dynamics is unstable under a small change of the parameters: an infinitesimal variation can lead both to relaxation and to persistent oscillatory behaviour (see \emph{e.g.} \cite{GC-11} for similar discussions in quantum field theories).
In addition, the variance of the expectation value of some local operator in an arbitrarily large time window 
\be\label{eq:variance}
\Delta\mathcal O=\lim_{T\rightarrow\infty }\Bigl(\frac{1}{T}\int_T^{2T}\mathrm d t \braket{\mathcal O(t)}^2-\Bigl(\frac{1}{T}\int_T^{2T}\mathrm d t \braket{\mathcal O(t)}\Bigr)^2\Bigr)^{1/2}
\ee
behaves like an `order parameter' for the transition, indeed it is not analytic at the boundaries of the relaxation region and vanishes inside.

The crossover between relaxation and persistent oscillations can be illustrated by means of diagrams that depict the relaxation properties as a function of the Hamiltonian parameters; in particular here we focus on the limit of small quench (see also \cite{CEF} for some clarifications about the meaning of `small quench'). To highlight the close connection with the dephasing mechanisms \cite{BS-08} that allow local relaxation we call the diagrams `quench dephasing diagrams'\footnote{
The terminology has not been yet standardised and in the scientific literature similar diagrams were sometimes called `dynamical phase diagrams' (see \emph{e.g.} \cite{SB:conn}) but also `quench phase diagrams' (see \emph{e.g.} \cite{QPD})}. 
We point out that restricting ourselves to small quenches makes it easier to interpret the results in terms of low lying excitations.

We investigated the quench dephasing diagram of a simplified version of \eref{eq:TFICNIS}, \emph{i.e.}
\be\label{eq:TFICNI1}
H(\tilde g,\lambda)=-\sum_\ell^L(\sigma_\ell^x\sigma_{\ell+1}^x+\tilde g\sigma_\ell^z)+\frac{\lambda}{L}\Bigl(\sum_\ell^L\sigma_\ell^z\Bigr)^2\, ,
\ee
in the limit in which the initial state is almost the ground state of the Hamiltonian. In Figure~\ref{f:diagram} we can identify a `critical' piecewise smooth curve at the boundaries of the relaxation region (the bright area).
Generally, as we move towards the critical lines from the inside of the regions with persistent oscillations, the variance \eref{eq:variance} of the transverse magnetisation approaches zero linearly with the distance in the $(\tilde g,\lambda)$ parameter space (\emph{cf.} \fref{f:relax}). 
Close to the Ising critical point ($\tilde g=1$ and $\lambda=0$) the region of persistent oscillatory behaviour degenerates into a line ending at the critical point. 

Finally we show that persistent oscillatory behaviour can be related to the emergence of localised excitations. 

\end{itemize}

\section{Pre-relaxation limit}\label{s:pre-rel}%

The time evolution operator for the Hamiltonian \eref{eq:H} can be formally written as follows
\be\label{eq:teo}
e^{-i H t}=U_I(T)e^{-i H_0 t}\, ,
\ee
where $T=gt$,
\be
U_I(T)=\mathrm {T}^\dag\exp\Bigl(-i \int_0^T\mathrm d \tau e^{-i \frac{H_0}{g}\tau} V e^{i \frac{H_0}{g}\tau} \Bigr)\, ,
\ee
and $\mathrm {T}^\dag$ is the anti-time-ordering operator. 
$U_I(T)$ can be interpreted as the evolution backwards in time under the time-dependent effective Hamiltonian
\be
V(\tau)=e^{-i \frac{H_0}{g}\tau} V e^{i \frac{H_0}{g}\tau}\, .
\ee
In the pre-relaxation limit, $\tau$ is finite while $g$ is infinitesimal; it is therefore convenient to isolate the stationary (\emph{i.e.} diagonal) contributions from $V(\tau)$
\be
V(\tau)=\bar V+\delta V(\tau)\, ,
\ee
where $\bar V$ can be formally written as follows
\be\label{eq:Vbar}
\bar V=\lim_{t\rightarrow\infty}\frac{1}{t}\int_0^t\mathrm d\tau V(\tau)\, .
\ee 
Le us rewrite $\delta V(\tau)$ in terms of its Fourier transform $\delta \tilde V(\varepsilon)$
\be
\delta V(\tau)=\int_{-\infty}^\infty\mathrm d \varepsilon\ e^{i\frac{\varepsilon\tau}{g}}\delta \tilde V(\varepsilon)\qquad \delta \tilde V^\dag(\varepsilon)=\delta \tilde V(-\varepsilon)\, .
\ee
We notice that for noninteracting $H_0$ (which is the case we are going to consider), the locality of $V$ (in the fermionic picture) implies that $\delta \tilde V(\varepsilon)$ is zero for $|\varepsilon|>\varepsilon_{\rm max}\sim O(1)$. 
After some formal manipulations we find
\be\label{eq:UA}
\fl\qquad U_I(T)(\mathrm I -i g A(T))=\mathrm I -i g A(0)-i\int_0^T\mathrm d\tau  U_I(\tau) [\bar V-i g \delta V(\tau)A(\tau)]
\ee
with
\be
A(\tau)=i \int_{-\infty}^\infty\mathrm d \varepsilon\ e^{i\frac{\varepsilon \tau}{g}}\frac{\delta \tilde V(\varepsilon)}{\varepsilon}\, .
\ee
Eq. \eref{eq:UA} makes sense as long as $A(\tau)$ does. In particular, if $A(\tau)$ per unit length is a bounded operator, in the limit $g\rightarrow 0$ all the terms of \eref{eq:UA} that are multiplied by $g$ can be neglected, \emph{i.e.} $\delta V(\tau)$ is negligible.
More generally, it is sufficient that the matrix elements of $A(\tau)$ that give a relevant contribution in the expectation values of local observables are bounded. 
In \cite{F:super} the operator $\delta V(\tau)$ was worked out for a particular noninteracting perturbation and the previous assumption turned out to be satisfied.  
Besides the noninteracting case, the irrelevance of $\delta V(\tau)$ was also implicitly assumed in \cite{E:preT}. There, using the continuous unitary transformation (CUT)  formalism~\cite{W:CUT,QHubbard},  the authors worked out the dynamics after a quantum quench in a weakly interacting model and checked the results against numerical data. For the readers familiar with CUT, we indeed notice that the formal simplification of the terms proportional to $g$ in \eref{eq:UA} is equivalent to CUT at $O(g^0)$: At the lowest order of perturbation theory the CUT unitary transformation can be replaced by the identity; 
a residual dependence on $g$ remains  in the CUT Hamiltonian $H_{\rm CUT}$, because we are considering the limit of infinite time with $T=g t$ finite, so $H_{\rm CUT}$ must be computed at $O(g)$. The latter is simply the time average of the Hamiltonian over its unperturbed part, namely $H_{\rm CUT}=H_0+g\bar V+o(g)$. 
The excellent agreement of \cite{E:preT} with the numerical simulations suggests that the interaction can be replaced by its time average for rather general perturbations. Therefore, from now on we shall assume that the effect of $\delta V(\tau)$ is negligible in the pre-relaxation limit.

Going back to \eref{eq:teo}, if $\delta V(\tau)$ is negligible we find
\be
e^{-i H t}\rightarrow e^{-i T \bar V}e^{-i H_0 T/g}\, ,
\ee
where $[\bar V,H_0]=0$. 
We now consider the time evolution of the expectation value of some local operator $\mathcal O$
\be\label{eq:prsl}
\braket{\Psi_0|e^{i H t}\mathcal O e^{-i H t}|\Psi_0}\rightarrow \braket{\Psi_0|e^{i H_0 T/g}e^{i T \bar V}\mathcal Oe^{-i T \bar V} e^{-i H_0 T/g}|\Psi_0}\, .
\ee
It is well established that the stationary properties of \mbox{(quasi-)local} observables after quenches in translation invariant noninteracting models from states with cluster decomposition properties can be described by means of a generalised Gibbs ensemble (GGE) of the form
\be
\rho_{\rm GGE}=\frac{e^{-\sum_j\lambda_j Q_j}}{Z}\, ,
\ee
where $Q_j$ are local conservation laws and $\lambda_j$ are real parameters determined by the initial state \cite{FE:RDM, SC:cluster}. We also remind the reader that, in order to avoid an explicit dependence of the charges on the initial state, in some special cases the set of charges $\{Q_j\}$ could be non-abelian \cite{F:super}. 

Let us now assume that the perturbation is sufficiently ``nice'' that the late time dynamics of $e^{i T \bar V}\mathcal Oe^{-i T \bar V}$ under $H_0$ can be obtained by replacing the state with the corresponding GGE
\be\label{eq:toGGE}
\fl\qquad e^{-i H_0 T/g}\ket{\Psi_0}\bra{\Psi_0}e^{i H_0 T/g}\rightarrow \rho_{\rm GGE}=\lim_{|S|\rightarrow\infty}\lim_{t\rightarrow\infty}\tr_{\bar S}[e^{-i H_0 t}\ket{\Psi_0}\bra{\Psi_0}e^{i H_0 t}]\, .
\ee
This step can be easily justified for $T=gt\ll t$ if $\bar V$ is \mbox{(quasi-)local}, however in the next sections we'll show that, in the presence of interactions, $\bar V$ belongs to a larger class of operators, so it is convenient to postpone the explanation of \eref{eq:toGGE} after having clarified the properties of $\bar V$.  

From \eref{eq:toGGE} it follows 
\be\label{eq:toGGE2}
\fl\qquad\qquad\lim_{g\rightarrow 0 }\braket{\Psi_0|e^{i H_0 T/g}e^{i T \bar V}\mathcal Oe^{-i T \bar V} e^{-i H_0 T/g}|\Psi_0}=\tr[\rho_{\rm GGE}e^{i T \bar V}\mathcal Oe^{-i T \bar V} ]\, ,
\ee
which suggests that the pre-relaxation limit can be described by the time-dependent ensemble
\be\label{eq:tGGE}
\rho_{\rm tGGE}(t)=e^{-i \bar V g t}\rho_{\rm GGE}e^{i \bar V g t}\, ,
\ee
where we re-expressed the rescaled time $T=gt$ in terms of the time. 

It is important to note that both $\rho_{\rm GGE}$ and $\bar V$ commute with the Hamiltonian. 
Consequently, if the two operators can be written in terms of the same set of local conservation laws in involution, the time dependence disappears $\rho_{\rm tGGE}(t)=\rho_{\rm GGE}$. In the next section we will show that in many cases of interest $\bar V$ can be approximated by a polynomial of the local conservation laws.
Therefore, in order to see some nontrivial pre-relaxation behaviour, the unperturbed Hamiltonian $H_0$ must have a non-abelian set of local charges. 
We refer the reader to \cite{F:super} for an extensive discussion of noninteracting models with that property.  

\section{Effective Hamiltonians}\label{s:int}%

The simplest noninteracting model that possesses local conservation laws that are not mutually commuting is the XY model, whose Hamiltonian is given by
\be\label{eq:XY}
H_{\rm XY}=J \sum_\ell(\frac{1+\gamma}{4}\sigma_\ell^x\sigma_{\ell+1}^x+\frac{1-\gamma}{4}\sigma_\ell^y\sigma_{\ell+1}^y)\, ,
\ee
where $\sigma_\ell^\alpha$ act like Pauli matrices on the site $\ell$ and like the identity elsewhere. 
If the initial state $\ket{\Psi_0}$ breaks one-site shift invariance, the latter symmetry is generally not restored in the GGE that describes local observables at infinite time after the quench.

On the other hand, an infinitesimally small one-site shift invariant perturbation that breaks the non-abelian integrability of \eref{eq:XY} is expected to catalyse symmetry restoration, which may be captured by the pre-relaxation limit. 
Similar issues of symmetry restoration have been pointed out long ago, \emph{e.g.} in \cite{Glassy}.  

A perturbation that preserves the noninteracting character of the Hamiltonian was already considered in \cite{F:super}. 
Here we investigate perturbations that have a 4-fermion representation in terms of the noninteracting fermions that diagonalise \eref{eq:XY}, namely
\be\label{eq:V}
V\sim  \sum_\ell a_{\ell+n_1}^{\alpha_1}a_{\ell+n_2}^{\alpha_2}a_{\ell+n_3}^{\alpha_3} a_{\ell+n_4}^{\alpha_4}\, ,
\ee
where $a_\ell^\alpha$ are the Majorana fermions ($\{a_\ell^\alpha,a_n^\beta\}=2\delta_{\alpha \beta}\delta_{\ell n}$)
\be\label{eq:Maj}
a_\ell^\alpha=\Bigl(\prod_{j<\ell}\sigma_j^z\Bigr) \sigma_\ell^\alpha\qquad \alpha\in\{x,y\}\, .
\ee 

From the qualitative argument presented in \Sref{s:pre-rel}, the relevant Hamiltonian in the pre-relaxation limit is determined by the time average of the perturbation.
The calculation is not difficult but rather lengthy.
However, a close inspection of the various contributions reveals a hidden structure that helps simplifying the computation. We indeed find (see Appendix \ref{a:Wick})
\begin{property}\label{L:Wick}
The time average under $H_{\rm XY}$ of a one-site shift invariant four fermion operator  can be written as follows
\be\label{eq:anom}
\frac{1}{L}\overline{\sum_\ell a_{\ell+n_1}^{\alpha_1}a_{\ell+n_2}^{\alpha_2}a_{\ell+n_3}^{\alpha_3} a_{\ell+n_4}^{\alpha_4}}= F_{\{n\}}^{\{\alpha\}}+ A_{\{n\}}^{\{\alpha\}}\, ,
\ee
where $ F $ is a linear combination of factorised terms and $ A$ is an anomalous contribution originated by the nontrivial solutions of the energy constraint
\be
\varepsilon(k_1)+\varepsilon(k_2)=\varepsilon(k_3)+\varepsilon(k_1+k_2+k_3)\, .
\ee
The latter exists only in the thermodynamic limit and strongly depends on the details of the dispersion relation, whereas $F$ has  a structure that is almost model independent:
\be\label{eq:Wick}
F_{\{n\}}^{\{\alpha\}}=\sum_{s=0}^1\underbrace{a_{n_1}^{\alpha_1} a_{n_2}^{\alpha_2}}_{s}\underbrace{a_{n_3}^{\alpha_3} a_{n_4}^{\alpha_4}}_{s}-\underbrace{a_{n_1}^{\alpha_1} a_{n_3}^{\alpha_3}}_{s}\underbrace{a_{n_2}^{\alpha_2} a_{n_4}^{\alpha_4}}_{s}+\underbrace{a_{n_1}^{\alpha_1} a_{n_4}^{\alpha_4}}_{s}\underbrace{a_{n_2}^{\alpha_2} a_{n_3}^{\alpha_3}}_{s}\, ,
\ee
with
\be
\underbrace{a_{n_1}^{\alpha} a_{n_2}^{\beta}}_s=\overline{ \frac{1}{L}\sum_\ell (-1)^{s\ell}a_{\ell+n_1}^{\alpha} a_{\ell+n_2}^{\beta} }\, .
\ee
\end{property}

Essentially, index $s$ appears because the XY model (with zero magnetic field) has local conservation laws with momentum $\pi$, while one-site shift invariance constrains the total momentum to be multiple of $2\pi$. We also notice that the factorised part of the time average can be easily generalised to an arbitrary number of fermions, keeping the same structure of the Wick decomposition.

To the best of our knowledge, the (quasi-)local conservation laws of the XY model \eref{eq:XY} are noninteracting (for $|\gamma|\neq 1$) and $A_{\{n\}}^{\{\alpha\}}$  of \eref{eq:Wick} seems to be a nonlocal conservation law that can \emph{not} be (not even approximately) written as a function of the local charges.  We then expect $ A_{\{n\}}^{\{\alpha\}}$ to become important (for local observables) only at times proportional to the chain length, which are far beyond the pre-relaxation limit.
This persuaded us to conjecture that the anomalous terms are not relevant to our problem, which is equivalent to assume
\be\label{eq:condA}
\braket{\Psi_0|e^{i H t}[\mathcal O,A] e^{-i H t}|\Psi_0}=0
\ee
for any local observable $\mathcal O$. In Appendix \ref{a:self} we check the self-consistency of our approximation, showing that it is compatible with \eref{eq:condA}.
We leave further investigations to future works.

Proposition \ref{L:Wick} suggests that in many cases of interest the effective Hamiltonian describing the pre-relaxation limit takes the form
\be\label{eq:opform}
H_{\rm eff}=\frac{1}{L^{n_1-1}}H^{(1)}_1 \dots H^{(1)}_{n_1}+\dots +\frac{1}{L^{n_m-1}}H^{(m)}_1 \dots H^{(m)}_{n_m}\, ,
\ee
where $H_j^{(\ell)}$ are \mbox{(quasi-)local} (\emph{i.e.} their density is \mbox{(quasi-)local}, see \emph{e.g.} \cite{IP:Drude,F:charge})  translation invariant operators (\emph{i.e.} $n$-site shift invariant for some $n\in \mathbb{N}$).
Indeed, provided that the anomalous terms in \eref{eq:anom} can be disregarded, similar factorisations appear whenever the unperturbed Hamiltonian is noninteracting (\emph{e.g.}, in the model considered in \cite{E:preT}). 

Hamiltonians of the form \eref{eq:opform} are therefore the perfect workbench for pre-relaxation or pre-thermalisation issues.  

We notice that the non-equilibrium dynamics generated by a subclass of Hamiltonians of the form \eref{eq:opform} have been already worked out in \cite{SB:conn}.
The authors considered `completely connected quantum models', in which the Hamiltonian is symmetric under any permutation of the sites, and exhibited a mapping onto an effective classical Hamiltonian dynamics. 

We also point out that the simplest models of the form \eref{eq:opform} (\emph{e.g.} Curie-Weiss quantum Heisenberg models) have often been used as toy models to investigate the statistical properties in the presence of long range interactions \cite{K:long}.

The rest of the paper will be focussed on the following points:
\begin{enumerate}
\item \label{I:1}Solution of the non-equilibrium problem for Hamiltonians of the form \eref{eq:opform};
\item \label{I:2}Characterisation of the pre-relaxation limit in an interacting model, also in the presence of perturbations that break integrability;
\item \label{I:3}Non-equilibrium time evolution under \eref{eq:TFICNI1}.
\end{enumerate}
For the sake of clarity, we stress again that \eref{I:2} relies on two assumptions:
\begin{enumerate}[(a)]
\item \label{ass:a}In the limit $g\rightarrow 0$ with $g t$ finite, the time evolution under $H=H_0+g V$ can be split in two steps: 
\begin{enumerate}[1.]
\item infinite time evolution under the unperturbed Hamiltonian $H_0$, which is supposed to give rise to a generalised Gibbs ensemble $\ket{\Psi_0}\bra{\Psi_0}\rightarrow\rho_{\rm GGE}$;
\item time evolution with rescaled time $T=gt$ under the effective Hamiltonian given by the perturbation $V$ averaged with respect to $H_0$ \eref{eq:Vbar};
\end{enumerate}
\be
 e^{-i (H_0+g V)t}\ket{\Psi_0}\bra{\Psi_0}e^{i (H_0+g V)t}\sim e^{-i g t \bar V}\rho_{\rm GGE}e^{i g t \bar V}\, .
\ee
\item \label{ass:b}The ``anomalous terms'' that appear in the time average of $V$ give a negligible contribution (\emph{cf.} Appendix \ref{a:self}, Property \ref{L:Wick} and discussion below).
\end{enumerate}
On the other hand, \eref{I:1} and \eref{I:3} will be treated as \emph{ab initio} problems.

\section{Solution of the non-equilibrium problem}\label{s:mf} %

In this section we work out Problem \eref{I:1}.
We are going to show that, despite the nonlocal appearance, operators of the form \eref{eq:opform} generate a dynamics which is equivalent to that of a \mbox{(quasi-)local} time-dependent mean-field Hamiltonian.

Here we only report some results and three useful corollaries, the details of the derivation can be found in Appendix \ref{a:MF}.  

For the sake of simplicity we only consider cases in which  $H_j^{(\ell)}$ have local densities, however, as far as we can see, all the results can be generalised to quasi-local operators with tails that decay exponentially with the distance. 

In the light of \eref{eq:opform}, we define a class of operator $\mathcal E$ as follows: 
\begin{definition}
We say that an operator acting on a spin-$\frac{1}{2}$ chain belongs to the class $\mathcal E$ if it is written as in \eref{eq:opform}, namely as a finite linear combination of operators of the form
\be\label{eq:form}
\frac{1}{L^{n-1}}H_1\cdots H_n\, ,
\ee
where $n$ is finite, $H_j$ are local translation invariant operators, and $L$ is the chain length.
\end{definition}
We consider spin chains so that the local Hilbert space is finite dimensional. This turns out to be a fundamental assumption for most of our results.

One of our goals is to show that the time evolution preserves cluster decomposition properties, which is the key element that allows us to simplify the calculation of expectation  values. For example we have
\begin{lemma}\label{L:1}
Let $\mathcal O\in \mathcal E$ and $\ket{\Psi}$ a state with cluster decomposition properties. The expectation value of $\mathcal O/L$ in $\ket{\Psi}$ can be reduced to the expectation values of the local translation invariant operators it consists of:
\be\label{eq:fact}
\lim_{L\rightarrow\infty }\braket{\Psi|\frac{H_1}{L}\cdots \frac{H_n}{L}|\Psi}=\lim_{L\rightarrow\infty }\prod_j \frac{\braket{\Psi|H_j|\Psi}}{L}\, .
\ee
\end{lemma}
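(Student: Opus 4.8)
The plan is to expand everything into local densities and use cluster decomposition to kill all but the fully factorised contribution. By linearity it suffices to treat a single monomial of the form \eref{eq:form}, and since $\braket{\Psi|\frac{H_1}{L}\cdots\frac{H_n}{L}|\Psi}$ is just $\frac1L$ times the expectation value of $\frac{1}{L^{n-1}}H_1\cdots H_n$, the claim is a statement about the leading $L^n$ behaviour of the unscaled correlator. First I would write each translation invariant operator as a sum of translates of a local density, $H_j=\sum_{\ell}h_j(\ell)$, where $h_j(\ell)$ acts within a finite range of the site $\ell$, so that
\be
\braket{\Psi|\frac{H_1}{L}\cdots\frac{H_n}{L}|\Psi}=\frac{1}{L^n}\sum_{\ell_1,\dots,\ell_n}\braket{\Psi|h_1(\ell_1)\cdots h_n(\ell_n)|\Psi}\, .
\ee

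Next I would split each correlator into its fully factorised part and a remainder,
\be
\braket{\Psi|h_1(\ell_1)\cdots h_n(\ell_n)|\Psi}=\prod_{j=1}^n\braket{\Psi|h_j(\ell_j)|\Psi}+C_{\ell_1,\dots,\ell_n}\, ,
\ee
which defines $C$. Summing the first term over all indices and dividing by $L^n$ reproduces exactly $\prod_j\braket{\Psi|H_j|\Psi}/L$, i.e. the right-hand side of \eref{eq:fact}. The lemma therefore reduces to showing that the remainder is subleading,
\be\label{plan:rem}
\lim_{L\to\infty}\frac{1}{L^n}\sum_{\ell_1,\dots,\ell_n}C_{\ell_1,\dots,\ell_n}=0\, .
\ee

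To establish \eref{plan:rem} I would combine two ingredients. From the cluster decomposition property of $\ket{\Psi}$, for every $\epsilon>0$ there is a distance $R$ such that $|C_{\ell_1,\dots,\ell_n}|<\epsilon$ whenever all pairwise separations $|\ell_i-\ell_j|$ exceed $R$. From the finite-dimensionality of the local Hilbert space, each density $h_j(\ell)$ is a bounded operator, so $|C_{\ell_1,\dots,\ell_n}|\le M$ uniformly, with $M$ a constant depending only on the operator norms. I would then split the sum in \eref{plan:rem} into the configurations where all pairwise distances exceed $R$ — which number at most $L^n$ and so contribute at most $\epsilon$ — and the configurations in which at least one pair lies within distance $R$. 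The latter set has cardinality $O(L^{n-1})$, since fixing one small separation costs one power of $L$, and its contribution is bounded by $M\cdot O(L^{n-1})/L^n=O(1/L)$. Letting $L\to\infty$ gives $\limsup_{L\to\infty}\bigl|\frac{1}{L^n}\sum C\bigr|\le\epsilon$, and since $\epsilon$ is arbitrary the limit vanishes.

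The main obstacle is the uniform control of the connected part $C$: cluster decomposition only guarantees smallness when all separations are large simultaneously, so the counting of the ``close'' configurations must be organised carefully to confirm that they occupy a vanishing fraction $O(1/L)$ of all index tuples. This is precisely where the finite-dimensionality of the local Hilbert space enters — it supplies the uniform bound $M$, without which the $O(L^{n-1})$ exceptional terms could not be discarded. The only other point requiring attention is that the densities $h_j(\ell_j)$ need not commute, but since the bound on $C$ follows from submultiplicativity of the operator norm it is insensitive to the ordering, and cluster decomposition is assumed to hold for each fixed ordering.
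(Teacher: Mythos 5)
Your proposal is correct and follows essentially the same route as the paper's proof: expand each $H_j$ into local densities, use cluster decomposition to factorise the contribution of configurations whose pairwise separations all exceed a cutoff, and use the uniform operator-norm bound (from locality and the finite local Hilbert space) together with the counting that the ``close'' configurations form an $O(\xi/L)$ fraction of all index tuples. The only difference is cosmetic bookkeeping — you phrase the cluster-decomposition input in $\epsilon$--$R$ form and lump correlator and product into a single connected part $C$, whereas the paper tracks an error function $f(\xi,L)$ with the iterated limit $\lim_{\xi\to\infty}\lim_{L\to\infty}$ and bounds the two close-configuration sums separately.
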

Using this lemma it is rather natural to relate the dynamics under \eref{eq:opform} to that under the mean-field Hamiltonian defined as follows: 
\begin{definition} \emph{Mean-field effective Hamiltonian.}
Let $H\in \mathcal E$. We define the time-dependent mean-field Hamiltonian $H^{\Psi_0}_{\rm MF}(t)$ as the operator resulting from mapping any generic term \eref{eq:form} of the Hamiltonian \eref{eq:opform} to an operator with local density, as follows
\be\label{eq:tran}
\frac{1}{L^{n-1}}H_1\cdots H_n\rightarrow \sum_{j=1}^n \prod_{\ell\neq j}\frac{\braket{\Psi_0|\bar U^\dag(t)H_\ell \bar U(t)|\Psi_0}}{L}H_j\, ,
\ee
where $\bar U(t)$ is the time evolution under $H_{\rm MF}^{\Psi_0}(t)$
\be\label{eq:U}
\bar U(t)={\rm T}\exp\Bigl(-i\int_0^t\mathrm d \tau H^{\Psi_0}_{\rm MF}(\tau)\Bigr)\, .
\ee
Thus, generally $H_{MF}^{\Psi_0}(t)$ must be computed in a self-consistent way.  
\end{definition}

For example, the Hamiltonian 
\be
H=-\frac{1}{4}\sum_\ell^L\Bigl(\sigma_\ell^x\sigma_{\ell+1}^x+\sigma_\ell^y\sigma_{\ell+1}^y\Bigr)+\frac{\lambda}{L}\Bigl(\sum_\ell^L\sigma_\ell^z\Bigr)^2
\ee
belongs to $\mathcal E$. 
In this trivial case $\sum_\ell\sigma_\ell^z$ commutes with $H$, so the mean-field Hamiltonian is independent of time and it is given by
\be
\fl\qquad\quad H_{\rm MF}^{\Psi_0}(t)=-\frac{1}{4}\sum_\ell\Bigl(\sigma_\ell^x\sigma_{\ell+1}^x+\sigma_\ell^y\sigma_{\ell+1}^y\Bigr)+2\lambda\braket{\Psi_0|\frac{1}{L}\sum_\ell\sigma_\ell^z|\Psi_0}\sum_\ell\sigma_\ell^z\, .
\ee

We point out that the expectation value (per unit length) of $H\in \mathcal E$ in the state $\bar U(t)\ket{\Psi_0}$ is generally different from that of $H_{\rm MF}^{\Psi_0}$:
\bea
\fl\quad\braket{\Psi_0|\bar U^\dag(t)\frac{1}{L^{n}} H_1\cdots H_n\bar U(t)|\Psi_0}=\prod\limits_{j=1}^n\braket{\Psi_0|\bar U^\dag(t)\frac{H_j}{L}\bar U(t)|\Psi_0} \nn
\fl\quad\braket{\Psi_0|\bar U^\dag(t)\sum\limits_{j=1}^n \prod\limits_{\ell\neq j}\frac{\braket{\Psi_0|\bar U^\dag(t)H_\ell \bar U(t)|\Psi_0}}{L}H_\ell \bar U(t)|\Psi_0}=n\prod\limits_{j=1}^n\braket{\Psi_0|\bar U^\dag(t)\frac{H_j}{L}\bar U(t)|\Psi_0}\, .
\eea

The main property that is proved in Appendix \ref{a:MF} is the exactness of the mean-field description in the thermodynamic limit:
\begin{lemma}\label{T:1}
Let $\ket{\Psi_0}$ be a translation invariant state with cluster decomposition properties and $H,\mathcal O\in \mathcal E$. 
Let the expectation value of $\mathcal O$ in the state that time evolves with $H_{\rm MF}^{\Psi_0}(t)$ be an analytic function of $t$ in the strip $|\mathrm{Im}[t]|<r$, with $r$ a nonzero constant.
In the thermodynamic limit, the time evolution with $H$ can be replaced by the time evolution with the mean-field Hamiltonian:
\be\label{eq:rep}
\lim_{L\rightarrow\infty}\braket{\Psi_0|e^{i H t}\frac{\mathcal O}{L} e^{-i H t}|\Psi_0}=\lim_{L\rightarrow\infty}\braket{\Psi_0|\bar U^\dag(t)\frac{\mathcal O}{L}\bar U(t) |\Psi_0}\, .
\ee
\end{lemma}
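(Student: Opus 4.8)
The plan is to show that the two sides of \eref{eq:rep} solve the same closed system of ordinary differential equations with identical initial data, and then to invoke uniqueness together with the analyticity hypothesis to conclude that they coincide throughout the strip. First I would write the Heisenberg equation of motion for the exact dynamics, $\frac{d}{dt}\braket{\Psi_0|e^{iHt}\frac{\mathcal O}{L}e^{-iHt}|\Psi_0}=i\braket{\Psi_0|e^{iHt}[H,\frac{\mathcal O}{L}]e^{-iHt}|\Psi_0}$. Because $H\in\mathcal E$, the commutator $[H,\mathcal O]$ is again a finite linear combination of terms of the form \eref{eq:form}: using the derivation property $[H_1\cdots H_n,\mathcal O]=\sum_{j}H_1\cdots[H_j,\mathcal O]\cdots H_n$ and the fact that the commutator of two extensive local operators is again extensive and local, each term of the time derivative is of the type $\frac{1}{L^{n}}H_1\cdots[H_j,\mathcal O]\cdots H_n$, which is $O(1)$ in the thermodynamic limit. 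Thus the flow never leaves the class $\mathcal E$.

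The second step is to close this hierarchy using the factorisation Lemma \ref{L:1}. Provided the time-evolved state retains cluster decomposition properties, Lemma \ref{L:1} lets me replace each product expectation value by the product of the single-operator expectation values per unit length. Introducing the one-point functions $m_j(t)=\lim_{L\to\infty}\frac{1}{L}\braket{\Psi_0|e^{iHt}H_je^{-iHt}|\Psi_0}$, the exact equations of motion close onto a nonlinear autonomous system $\dot m_j=\mathcal F_j(\{m_\ell\})$, whose right-hand side is built from the elementary commutators $\frac1L\braket{[H_j,\mathcal O]}$ weighted by the products $\prod_{\ell\neq j}m_\ell$. By direct comparison, the mean-field Hamiltonian defined in \eref{eq:tran}---which is a genuinely (quasi-)local extensive operator with c-number, spatially uniform coefficients $\prod_{\ell\neq j}m_\ell(t)$---generates precisely the same vector field $\mathcal F_j$, since its commutator with $\mathcal O$ factorises trivially (the coefficients are numbers). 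As the initial data $m_j(0)=\frac1L\braket{\Psi_0|H_j|\Psi_0}$ agree and $\ket{\Psi_0}$ clusters by hypothesis, both evolutions satisfy the same Cauchy problem.

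The last step upgrades this to \eref{eq:rep} for arbitrary $\mathcal O\in\mathcal E$. Once the one-point functions coincide, the equation of motion for a general $\mathcal O$ has the same right-hand side (now explicitly time-dependent but driven by the common $\{m_j(t)\}$) on both sides, so the two expectation values obey the same linear Cauchy problem and hence agree. To make the matching of solutions rigorous I would invoke the analyticity assumption: the mean-field side is analytic in the strip $|\mathrm{Im}[t]|<r$, hence determined by its Taylor coefficients at the origin; computing those coefficients amounts to evaluating nested commutators in $\ket{\Psi_0}$ and factorising via Lemma \ref{L:1}, which reproduces exactly the Taylor coefficients of the solution of the common system. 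By the identity theorem, equality of all derivatives at $t=0$ forces equality throughout the domain of analyticity.

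The main obstacle is justifying the factorisation for the time-evolved state at all times $t$, that is, proving that cluster decomposition is \emph{preserved} along the flow and not merely at $t=0$. This is delicate because \eref{eq:opform} is genuinely nonlocal, so a naive Lieb--Robinson argument does not apply directly. The way I would resolve it is by a \emph{bootstrap}: working order by order in the Taylor expansion, where every expectation value is evaluated in $\ket{\Psi_0}$ and clustering is guaranteed, one shows that the connected parts of the relevant correlators are suppressed by powers of $1/L$ and therefore drop out as $L\to\infty$; equivalently, one shows self-consistently that the exact dynamics is reproduced by the quasi-local mean-field Hamiltonian, for which clustering is preserved by standard locality estimates. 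Controlling these connected contributions uniformly enough to interchange the Taylor series with the limit $L\to\infty$ is the real work, and is where the finite dimensionality of the local Hilbert space and the analyticity hypothesis enter decisively. The details are deferred to Appendix \ref{a:MF}.
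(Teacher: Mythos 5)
Your proposal shares the paper's skeleton (closure of $\mathcal E$ under commutators, the factorisation Lemma \ref{L:1}, derivative matching, analyticity), but it stalls exactly where the paper's proof in Appendix \ref{a:MF} has its key idea, and the step you defer as ``the real work'' is a genuine gap. Your primary route closes the equations of motion for the \emph{exact} dynamics, which requires cluster decomposition to be preserved by $e^{-iHt}$ with $H$ nonlocal --- precisely the property that neither you nor anyone has a handle on (Lieb--Robinson fails, and your proposed $1/L$-bootstrap on connected correlators, uniform in the Taylor order, is not carried out). The paper never needs this. Its Lemmas \ref{a:L:2} and \ref{a:L:3} compute every $t$-derivative of the \emph{mean-field} expectation value, at arbitrary time $t$, as the expectation value of the nested commutator $[H,[H,\dots,[H,\mathcal O/L]\dots]]$ (still in $\mathcal E$ by Lemma \ref{L:algebra}) in the state $\bar U(t)\ket{\Psi_0}$, whose clustering is immediate because $H^{\Psi_0}_{\rm MF}(t)$ is local at every time. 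Since these nested-commutator expectations are, by a purely algebraic identity, the $s$-derivatives at $s=0$ of $f(t,s)=\lim_{L\to\infty}\braket{\Psi_0|\bar U^\dag(t)e^{iHs}\frac{\mathcal O}{L}e^{-iHs}\bar U(t)|\Psi_0}$, the only state whose clustering is ever invoked is the mean-field-evolved one; the exact-evolved state never appears inside a factorisation. You are attacking a harder problem than the one that needs to be solved.

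The second gap is your final step. Matching all derivatives at $t=0$ and invoking the identity theorem does not yield \eref{eq:rep} for all $t$: analyticity is hypothesised only for the mean-field side, the exact side is not known to be analytic uniformly in $L$, and even granting that both sides are sums of their Taylor series at the origin you only reach $|t|<r$. The paper needs the derivative identity of Lemma \ref{a:L:3} at \emph{every} time $t$, not just at the origin, precisely so that it can iterate: if $f(t,0)=f(0,t)$ holds for all $t<t_\ast$, re-expanding around times below $t_\ast$ (with radius at least $r$, guaranteed by the strip hypothesis) extends the equality up to $t_\ast+\tau$ for any $|\tau|<r$, and repetition covers the whole real axis. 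Your origin-anchored argument cannot be repaired without that along-the-trajectory identity. A smaller but related soft spot: your ``closed autonomous system $\dot m_j=\mathcal F_j(\{m_\ell\})$'' is in fact an infinite hierarchy, since each commutator $[H_j,\mathcal O]$ generates new extensive operators, so uniqueness of the Cauchy problem is not automatic; in the paper the analyticity hypothesis is exactly what substitutes for such a uniqueness statement (cf.\ the Remark following Lemma \ref{T:1}).
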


\begin{remark}
The validity of the hypothesis of analyticity on a strip can be verified \emph{a posteriori}. The idea is the following. The self-consistent mean-field problem can be generally recast into an infinite nonlinear system of ordinary differential equations. The finiteness of $n$ in  \eref{eq:form} implies that the system can be written as $\dot{\vec u}=\vec F(\vec u, t)$, with $\vec F$ a polynomial. If the system was finite, the solution would have been analytic. This is not always the case for an infinite system but, in practice, the numerical solution is obtained by introducing a cutoff parameter $N$ that makes the system finite. 
If the mean-field time evolution had a point of non-analyticity, the solution of the system of equations should display a non-trivial dependence of the mean-field parameters on the cutoff as $N\rightarrow\infty$. 
\end{remark}

\begin{corollary}\label{C:1}
Lemma \ref{T:1} holds true in particular for local operators.
\end{corollary}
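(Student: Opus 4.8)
The plan is to deduce the statement for a generic local operator from Lemma~\ref{T:1} by embedding the local operator into an extensive, translation invariant member of the class~$\mathcal E$. Note first that a bare local operator $\mathcal O$ is not itself in $\mathcal E$ (it is neither translation invariant nor extensive) and, being bounded, it would give $\braket{\mathcal O/L}\to0$ trivially; the content of the corollary is therefore the statement that the equivalence \eref{eq:rep}, read \emph{without} the factor $1/L$, holds for a single local operator $\mathcal O$.

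First I would let $\mathcal O$ be local, acting nontrivially only on a finite block of sites, and denote by $\mathcal O_\ell$ its translate by $\ell$ sites. The extensive operator $\tilde{\mathcal O}=\sum_{\ell=1}^L\mathcal O_\ell$ is then local and translation invariant, so it is an element of $\mathcal E$: it is precisely a single term of the form \eref{eq:form} with $n=1$, for which there is neither a $1/L$ prefactor nor any self-consistency condition to solve. Applying Lemma~\ref{T:1} to $\tilde{\mathcal O}$ (its mean-field expectation inherits the analyticity assumed for the building blocks of the model) gives
\be
\lim_{L\rightarrow\infty}\braket{\Psi_0|e^{i H t}\frac{\tilde{\mathcal O}}{L} e^{-i H t}|\Psi_0}=\lim_{L\rightarrow\infty}\braket{\Psi_0|\bar U^\dag(t)\frac{\tilde{\mathcal O}}{L}\bar U(t) |\Psi_0}\, .
\ee

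The remaining step is to reduce each of the two spatial averages $\tilde{\mathcal O}/L$ back to the single local operator $\mathcal O$. Since $H\in\mathcal E$ is built out of translation invariant operators with scalar coefficients, it commutes with the one-site shift, and the same holds for $H_{\rm MF}^{\Psi_0}(t)$ because the self-consistent coefficients in \eref{eq:tran} are c-numbers; hence both $e^{-iHt}\ket{\Psi_0}$ and $\bar U(t)\ket{\Psi_0}$ remain translation invariant, using that $\ket{\Psi_0}$ is. Each summand $\braket{\cdots\mathcal O_\ell\cdots}$ is therefore independent of $\ell$ and equals $\braket{\cdots\mathcal O\cdots}$, so that $\frac1L\braket{\cdots\tilde{\mathcal O}\cdots}=\braket{\cdots\mathcal O\cdots}$ on both sides. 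This turns the displayed identity into $\lim_{L\to\infty}\braket{\Psi_0|e^{iHt}\mathcal Oe^{-iHt}|\Psi_0}=\lim_{L\to\infty}\braket{\Psi_0|\bar U^\dag(t)\mathcal O\bar U(t)|\Psi_0}$, which is the claim.

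The step I expect to require the most care is the last one: the passage from the intensive density to a single local operator is legitimate only because translation invariance is \emph{preserved} by both evolutions, so that the spatial average of $\braket{\mathcal O_\ell(t)}$ collapses to a common value rather than to a genuine average of distinct values. This is where the translation invariance of $\ket{\Psi_0}$ assumed in Lemma~\ref{T:1}, together with the translation invariance of the constituents $H_j$, is essential; were one-site shift invariance broken, one would only recover the statement for the uniform spatial average of the translates of $\mathcal O$, and not for $\mathcal O$ itself.
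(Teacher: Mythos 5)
Your proof is correct and takes essentially the same route as the paper: the paper's one-line argument likewise replaces the local operator $\mathcal O$ by the extensive, translation invariant sum of its translates, which belongs to $\mathcal E$, and then invokes Lemma \ref{T:1}. Your explicit verification that both $e^{-iHt}$ and $\bar U(t)$ preserve translation invariance (so that the spatial average collapses to the single-site expectation value) is precisely the step the paper leaves implicit.
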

The local equivalence with the mean-field time evolution can also be expressed in terms of reduced density matrices: 
\begin{corollary}\label{C:2}
Let $\ket{\Psi_0}$ a translation invariant state with cluster decomposition properties and $H\in \mathcal E$. 
In the thermodynamic limit, the time evolution of the reduced density matrix (RDM) of some spin block $S$ is equal to the RDM in the state that time evolves with the mean-field Hamiltonian: 
\be\label{eq:RDM}
\rho_S(t)=\tr_{\bar S}[e^{-i H t}\ket{\Psi_0}\bra{\Psi_0}e^{i H t}]=\tr_{\bar S}[\bar U(t)\ket{\Psi_0}\bra{\Psi_0}\bar U^\dag(t)]\, .
\ee
\end{corollary}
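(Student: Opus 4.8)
The plan is to reduce the statement about reduced density matrices to the expectation-value equivalence already established for local operators in Corollary \ref{C:1}. The crucial point is that, because the block $S$ consists of a finite number of sites, the reduced density matrix $\rho_S(t)$ is a matrix of fixed size $2^{|S|}\times 2^{|S|}$, independent of $L$, and is therefore completely determined by a \emph{finite} set of local expectation values. This is what lets me avoid any issue of uniformity as $L\rightarrow\infty$.

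First I would fix a Hermitian operator basis on $S$. For a spin-$\frac{1}{2}$ chain the natural choice is the set of Pauli strings $P_{\{\alpha\}}=\prod_{i\in S}\sigma_i^{\alpha_i}$, with $\alpha_i\in\{0,x,y,z\}$ and $\sigma_i^0=\1$; these $4^{|S|}$ operators are mutually orthogonal with respect to the trace inner product and span the whole operator algebra supported on $S$. Since the trace inner product is nondegenerate on this finite-dimensional algebra, any density matrix on $S$ can be written as
\be
\rho_S=\frac{1}{2^{|S|}}\sum_{\{\alpha\}}\tr\big[\rho_S\,P_{\{\alpha\}}\big]\,P_{\{\alpha\}}\, ,
\ee
so that $\rho_S$ is uniquely fixed by the numbers $\tr[\rho_S\,P_{\{\alpha\}}]$. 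Applying this expansion to both sides of \eref{eq:RDM}, the corollary is equivalent to the equality of the corresponding coefficients.

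Next I would identify each coefficient with a local expectation value. By the definition of the reduced density matrix and the cyclicity of the full trace, $\tr[\rho_S(t)\,P_{\{\alpha\}}]$ equals $\braket{\Psi_0|e^{iHt}P_{\{\alpha\}}e^{-iHt}|\Psi_0}$ for the left-hand RDM and $\braket{\Psi_0|\bar U^\dag(t)P_{\{\alpha\}}\bar U(t)|\Psi_0}$ for the right-hand one, because $P_{\{\alpha\}}$ acts trivially outside $S$ and the trace over $\bar S$ can be carried through. Each $P_{\{\alpha\}}$ is a local operator, so Corollary \ref{C:1} guarantees that these two quantities share the same thermodynamic limit. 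As the index set $\{\alpha\}$ is finite, all coefficients agree as $L\rightarrow\infty$ and the two expansions define the same operator, which is precisely \eref{eq:RDM}.

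The only thing that requires genuine checking — and where whatever content remains actually sits — is that Corollary \ref{C:1} is applicable to every $P_{\{\alpha\}}$, i.e. that the analyticity-on-a-strip hypothesis of Lemma \ref{T:1} holds for each of them; as explained in the Remark following that lemma, this can be verified \emph{a posteriori} from the behaviour of the mean-field system under the cutoff. Beyond this, the argument is pure bookkeeping: the interchange of the $L\rightarrow\infty$ limit with the operator expansion is harmless because the sum over $\{\alpha\}$ has finitely many terms, and no uniformity in $|S|$ is ever needed since the block is held fixed while the chain grows. I therefore expect the proof to be short, with the substantive work entirely inherited from Lemma \ref{T:1} and Corollary \ref{C:1}.
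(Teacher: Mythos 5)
Your proposal is correct and follows the same route as the paper: the paper's proof of Corollary \ref{C:2} is literally the one-line statement that it is a direct consequence of Corollary \ref{C:1}, and your Pauli-string expansion merely spells out the (standard) bookkeeping showing why equality of local expectation values for a finite block determines the reduced density matrix. The substantive content sits in Lemma \ref{T:1}/Corollary \ref{C:1}, exactly as you say.
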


The previous lemmas and corollaries are sufficient to reduce the time evolution under $H\in\mathcal E$ to the time evolution under a local time-dependent Hamiltonian. There is however another simple corollary to Lemma \ref{T:1} that will be useful to assess whether or not at large times it is possible to encode the entire information about the initial state in a finite number of parameters (`thermal-like behaviour').
\begin{corollary}\label{C:3}
Let $H\in\mathcal E$ and $\ket{\Psi}$ a state with cluster decomposition properties. 
If $\ket{\Psi}$ is an excited state of the corresponding mean-field Hamiltonian $H_{\rm MF}^{\Psi}$
\be
H_{\rm MF}^{\Psi}\ket{\Psi}=E_\Psi\ket{\Psi}\, ,
\ee
the expectation value of local observables in $e^{-i H t}\ket{\Psi}$ is independent of time. Therefore, $\ket{\Psi}$ behaves locally as an excited state of $H$. 

The reverse is also true. If an excited state of $H$ is locally equivalent to a state with cluster decomposition properties, then the latter is (equivalent to) an excited state of the corresponding mean-field Hamiltonian.
\end{corollary}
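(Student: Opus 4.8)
The plan is to prove the two implications separately, in each case using Corollary \ref{C:1} to trade the full evolution under $H$ for the mean-field evolution, and then exploiting the self-consistent definition \eref{eq:tran} of $H_{\rm MF}^{\Psi}$. Throughout, $H_{\rm MF}^{\Psi}$ in the statement is read as the initial mean-field Hamiltonian $H_{\rm MF}^{\Psi}(0)$, whose coefficients are the $\prod_{\ell\neq j}\braket{\Psi|H_\ell|\Psi}/L$ obtained by setting $\bar U(0)=\1$.

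For the forward direction, assume $H_{\rm MF}^{\Psi}(0)\ket{\Psi}=E_\Psi\ket{\Psi}$. The first step is to exhibit the explicit self-consistent solution $\bar U(t)\ket{\Psi}=e^{-iE_\Psi t}\ket{\Psi}$. With this ansatz the phase cancels in every coefficient of \eref{eq:tran}, so $\braket{\Psi|\bar U^\dagger(t)H_\ell\bar U(t)|\Psi}=\braket{\Psi|H_\ell|\Psi}$ is time-independent; hence $H_{\rm MF}^{\Psi}(t)=H_{\rm MF}^{\Psi}(0)$ for all $t$, the time-ordering in \eref{eq:U} is trivial, $\bar U(t)=e^{-iH_{\rm MF}^{\Psi}(0)t}$, and this indeed returns $e^{-iE_\Psi t}\ket{\Psi}$, closing the loop. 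Since $\ket{\Psi}$ stays an eigenstate up to a phase, all local expectation values in $\bar U(t)\ket{\Psi}$ are constant, and Corollary \ref{C:1} transfers the constancy to the evolution under $H$, which is the meaning of ``$\ket{\Psi}$ behaves locally as an excited state of $H$.'' The only point that needs care is uniqueness: I would recast the self-consistent problem as the initial-value problem $\dot{\vec u}=\vec F(\vec u)$ of the Remark following Lemma \ref{T:1}, with $\vec F$ polynomial and hence locally Lipschitz, so that the stationary solution just found is the unique one and the conclusion is unambiguous.

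For the reverse direction, let $\ket{E}$ be an eigenstate of $H$ that is locally equivalent to the clustering state $\ket{\Psi}$. Being an eigenstate gives, for every local $\mathcal O$, the exact identity $\braket{E|[H,\mathcal O]|E}=0$. The key step is to evaluate this commutator by the factorisation of Lemma \ref{L:1}: writing a generic term as $\tfrac{1}{L^{n-1}}H_1\cdots H_n$, one has $[H_1\cdots H_n,\mathcal O]=\sum_j H_1\cdots[H_j,\mathcal O]\cdots H_n$ with each $[H_j,\mathcal O]$ local, and clustering factorises the expectation into $\sum_j\left(\prod_{\ell\neq j}\tfrac{\braket{H_\ell}}{L}\right)\braket{[H_j,\mathcal O]}$, which is precisely $\braket{[H_{\rm MF}^{\Psi}(0),\mathcal O]}$. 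Local equivalence lets me replace every local datum of $\ket{E}$ by that of $\ket{\Psi}$, yielding $\braket{\Psi|[H_{\rm MF}^{\Psi}(0),\mathcal O]|\Psi}=0$ for all local $\mathcal O$; equivalently $\vec u^{\Psi}(0)$ is a fixed point of the mean-field flow. Feeding this back, $H_{\rm MF}^{\Psi}(t)$ stays constant and the forward argument runs in reverse: $\ket{\Psi}$ is stationary under the local Hamiltonian $H_{\rm MF}^{\Psi}(0)$ and is therefore locally indistinguishable from one of its eigenstates, which is the asserted equivalence.

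The hard part will be the use of clustering for $\ket{E}$ in the reverse direction. A genuine eigenstate of a Hamiltonian in $\mathcal E$ is typically a symmetric superposition of macroscopically distinct configurations and does not by itself satisfy cluster decomposition; what rescues the argument is precisely the hypothesis of local equivalence to $\ket{\Psi}$, which forces the connected correlators of $\ket{E}$ entering the factorisation to coincide with the (large-distance vanishing) ones of $\ket{\Psi}$. Making this replacement rigorous—so that the factorised commutator computed in $\ket{\Psi}$ legitimately equals the exactly vanishing $\braket{E|[H,\mathcal O]|E}$—is the delicate point, together with the final passage from ``local observables are stationary under $H_{\rm MF}^{\Psi}(0)$'' to ``$\ket{\Psi}$ is equivalent to an eigenstate,'' which leans on the finite dimensionality of the local Hilbert space and on the fact that a clustering pure state with vanishing commutator expectations against a local Hamiltonian cannot be locally distinguished from one of its eigenstates.
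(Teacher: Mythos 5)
Your proof is correct and follows essentially the same route as the paper, whose own proof is just two sentences: the constant mean-field Hamiltonian solves the self-consistency condition \eref{eq:tran}, $e^{-i H_{\rm MF}^{\Psi}t}\ket{\Psi}\propto\ket{\Psi}$, so Corollary \ref{C:1} gives time-independence of local expectation values, and ``the reverse holds true for analogous reasons.'' Your fleshed-out reverse direction (the identity $\braket{E|[H,\mathcal O]|E}=0$, factorisation inherited through local equivalence with the clustering state, and the fixed-point argument) together with the Lipschitz/analyticity uniqueness remark simply supplies the details the paper leaves implicit, and you correctly flag the same delicate points (non-clustering of the true eigenstate, and the loose sense of ``equivalent to an excited state'') that the paper glosses over.
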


\subsection{Time-dependent GGE} \label{ss:t-dGGE} %

We are now in a position to justify \eref{eq:toGGE}, and in turn \eref{eq:toGGE2} and \eref{eq:tGGE}. 

In the limit of small $g$ the expectation value of a local observable $\mathcal O$ reads as \eref{eq:prsl}
\be
\braket{\Psi_0|e^{i H_0 T/g}e^{i T \bar V}\mathcal Oe^{-i T \bar V} e^{-i H_0 T/g}|\Psi_0}\, .
\ee
Since $H_0$ is local, the state $e^{-i H_0 T/g}\ket{\Psi_0}$ has cluster decomposition properties beyond some typical distance proportional to $T/g$ (in order to be outside of the light cone). From Corollary \ref{C:1} it follows that the time evolution under $\bar V$ is equivalent to that under the corresponding mean-field operator. We indeed only need $J T\ll g L$ (\emph{cf}. \eref{eq:rhs}), which is trivially satisfied in the thermodynamic limit. Thus we obtain
\be
\braket{\Psi_T|U_{\bar V}^\dag(T)\mathcal O U_{\bar V}^{\phantom \dag}(T) |\Psi_T}\, ,
\ee
with
\be\label{eq:UV}
U_{\bar V}(t)={\rm T}\exp\Bigl(-i\int_0^t\mathrm d \tau \bar V^{\Psi_T}_{\rm MF}(\tau)\Bigr)
\ee
and
\be
\ket{\Psi_T}=e^{-i H_0 T/g}\ket{\Psi_0}\, .
\ee
Incidentally, we notice that the time-ordering in \eref{eq:UV} can not be simplified because $\bar V^{\Psi_T}_{\rm MF}$ is generally written in terms of conservation laws that are not in involution with one another. 

For the sake of simplicity we assume that the time-dependent coupling constants of $\bar V^{\Psi_T}_{\rm MF}$ are bounded.
The operator $U_{\bar V}^\dag(T)\mathcal O U_{\bar V}^{\phantom \dag}(T) $ is then quasi-local with a typical range $\xi$ proportional to $T$ \cite{BHV:LRbounds}.
On the other hand $\ket{\Psi_T}$ is the time evolution of $\ket{\Psi_0}$ at the time $(\infty \leftarrow )T/g\gg T\sim \xi$, which is the limit in which it is reasonable to expect that the state can be replaced by the corresponding generalised Gibbs ensemble (of the unperturbed Hamiltonian)
\be\label{eq:UVGGE}
\fl\qquad\qquad\braket{\Psi_T|U_{\bar V}^\dag(T)\mathcal O U_{\bar V}^{\phantom \dag}(T) |\Psi_T}\sim \tr[\rho_{\rm GGE} U_{\bar V}^\dag(T)\mathcal O U_{\bar V}^{\phantom \dag}(T) ]\, .
\ee
 The operator $\bar V^{\Psi_T}_{\rm MF}$ is obtained self-consistently by computing the expectation values of (quasi-)local conservation laws, which can be obtained from \eref{eq:UVGGE}. Therefore, in the definition \eref{eq:tran} of the mean-field Hamiltonian we can replace $\ket{\Psi_0}$ by $\rho_{\rm GGE}$
\be
\frac{1}{L^{n-1}}H_1\cdots H_n\rightarrow \sum_{j=1}^n \prod_{\ell\neq j}\frac{\tr[\rho_{\rm GGE}U_{\bar V}^\dag(T)H_\ell U_{\bar V}(T)]}{L}H_\ell\, ,
\ee
which is consistent with \eref{eq:toGGE2}. We denote by $\bar H_{\rm MF}(T)$ the mean-field Hamiltonian with the expectation values computed in the GGE.

\section{Pre-relaxation in XYZ models}\label{s:pre-relI}%

In this section we investigate the (integrable) XYZ spin-$\frac{1}{2}$ chain in the limit of small anisotropy in the $z$ direction and also the effect of a small perturbation that breaks integrability.
The Hamiltonian 
\be\label{eq:XYYU}
\fl\quad  H=J \sum_\ell\Bigl(\frac{1+\gamma}{4}\sigma_{\ell}^x\sigma_{\ell+1}^x+\frac{1-\gamma}{4}\sigma_{\ell}^y\sigma_{\ell+1}^y+\frac{g}{4}\sigma_{\ell}^z\sigma_{\ell+1}^z+\frac{g U }{4}\sigma_{\ell}^z\sigma_{\ell+2}^z\Bigr)+\frac{gh}{2}\sum_\ell\sigma_\ell^z
\ee
has the form \eref{eq:H} with $H_0=H_{\rm XY}$ \eref{eq:XY} and 
\be\label{eq:Vj}
V=\frac{J}{4}\sum_{\ell}(\sigma_\ell^z\sigma_{\ell+1}^z+U\sigma_\ell^z\sigma_{\ell+2}^z)+\frac{h}{2}\sum_{\ell}\sigma_\ell^z\, .
\ee
For a fixed $g\neq 0$, the model is integrable for  $J U=h=0$, corresponding to the spin-$\frac{1}{2}$  XYZ  model, and for $J U=\gamma=0$, corresponding to the XXZ spin-$\frac{1}{2}$ chain; otherwise it is non-integrable.

Following Sections \ref{s:pre-rel} and \ref{s:int}, in the pre-relaxation limit $g\ll 1$ with $gt\sim O(g^0)$, the initial state can be replaced by the corresponding GGE of the unperturbed Hamiltonian
\be
\ket{\Psi_0}\rightarrow \rho_{\rm GGE}=\lim_{|S|\rightarrow\infty}\lim_{t\rightarrow\infty}\tr_{\bar S}[e^{-i H_{\rm XY} t}\ket{\Psi_0}\bra{\Psi_0}e^{i H_{\rm XY}t}]\, ,
\ee
and $V$ by the time averaged perturbation \eref{eq:Vbar}. We notice that the free Hamiltonian $H_{\rm XY}$ does not play any role in the pre-relaxation limit, because it commutes with $\rho_{\rm GGE}$.  
The mapping into a mean-field problem can be decomposed in the following steps:
\begin{itemize}
\item[-] Compute the time averaged perturbation $\bar V$;
\item[-] Construct the mean-field Hamiltonian $\bar H_{\rm MF}$;
\item[-] Solve the time evolution under $\bar H_{\rm MF}$ for \emph{any} local observable.
\end{itemize}
Some properties of the unperturbed Hamiltonian $H_{\rm XY}$ dramatically simplify the first step. $H_{\rm XY}$ is mapped to noninteracting fermions by a Jordan-Wigner transformation. Up to irrelevant (to our purposes) boundary terms, it can be written as follows
\be\label{eq:Hfrep}
H_{XY}\sim\frac{1}{4}\sum_{\ell, n}\left(\begin{array}{cccc}
a_{2\ell-1}^x&a_{2\ell-1}^y&a_{2\ell}^x&a_{2\ell}^y
\end{array}\right)[\mathcal H]^{(2)}_{\ell n}\left(
\begin{array}{cccc}
a_{2n-1}^x \\
a_{2n-1}^y \\
a_{2n}^x \\
a_{2n}^y
\end{array}\right)
\ee
with $a_\ell^\alpha$ the Majorana fermions \eref{eq:Maj}; 
$\mathcal H$ is the block-circulant matrix
\be\label{eq:Helln}
[\mathcal H]^{(2)}_{\ell n}\sim \int_{-\pi}^\pi\frac{\mathrm d k}{2\pi} e^{-i(n-\ell) k} \mathcal H^{(2)}(k)\, ,
\ee
where the 4-by-4 matrix $\mathcal H^{(2)}(k)$ is usually called \emph{symbol} (see also Appendix \ref{a:free}) and it is given by
\be\label{eq:Hsimb}
\mathcal H^{(2)}(k)= -\varepsilon_k \sigma^x e^{ i {k}/{2}\sigma^z}\otimes\sigma^y e^{ i\theta_k\sigma^z}\, ;
\ee 
$\varepsilon_k$ and $\theta_k$ are the dispersion relation and the Bogoliubov angle, respectively
\be
\fl\qquad\qquad\varepsilon_k=J\sqrt{\cos^2 k/2+\gamma^2\sin^2 k/2}\qquad e^{i\theta_k}=\frac{\cos k/2+ i\gamma\sin k/2}{\sqrt{\cos^2 k/2+\gamma^2\sin^2 k/2}}\, .
\ee
Here we have chosen the two-site shift invariant representation of the Hamiltonian (\emph{i.e.} we gathered together the fermionic degrees of freedom of pairs of adjacent sites) in order to be able to treat a larger class of initial states.

In translation invariant noninteracting models almost any calculation can be traced back to operations on the symbol associated with the operator, which is the Fourier transform of a block-row of the block-circulant matrix that appears in the fermionic representation of the operator as in \eref{eq:Hfrep} (see also Appendix \ref{a:free}). More generally  the two-site representation of the symbol is a 4-by-4 Hermitian matrix, function of the momentum and odd under simultaneous transposition and reversion of the momentum. 
A $2n$-by-$2n$ symbol completely identifies a noninteracting operator that is translation invariant by $k$ sites, with $k$ a divisor of $n$, by the same kind of relations that we wrote for the Hamiltonian (\emph{i.e.} \eref{eq:Hfrep}, \eref{eq:Helln} and \eref{eq:Hsimb}). Thus, we will often report the symbols instead of  writing the operators explicitly.

Coming back to the calculation of $\bar V$, we find that the three constituents of the interaction term in \eref{eq:Vj} have the following fermionic representation
\be
\frac{1}{4}\sum_\ell\sigma_\ell^z\sigma_{\ell+j}^z=\frac{1}{4}\sum_\ell i a_\ell^ya_\ell^xi a_{\ell+j}^ya_{\ell+j}^x\,,\qquad j=1,2\,,
\ee
\be
\frac{1}{2}\sum_\ell \sigma_\ell^z=\frac{1}{2}\sum_\ell  i a_\ell^ya_\ell^x\, .
\ee
Therefore, on the basis of our assumptions and decomposition \eref{eq:Wick},  in the limit $g\ll 1$ with $gt\sim O(g^0)$,  we expect the local Hamiltonian \eref{eq:XYYU} to be dynamically equivalent to the following nonlocal one
\be\label{eq:Hrel}
H\rightarrow \bar H=H_{\rm XY}+g \bar V(U, h)
\ee
where $H_{\rm XY}$ is the XY Hamiltonian \eref{eq:XY} and the nonlocal perturbation is given by
\be\label{eq:Vbar1}
\fl\qquad {\bar V(U,h)}=\frac{J}{L}\sum_{s=0}^1\sum_{j=1}^2U^{j-1}\Bigl((-1)^{sj}\bar{H}_{s}^z \bar{H}_{s}^z+\bar{H}_{s,j}^{xy} \bar{H}_{s,j}^{y x}-\bar{H}_{s,j}^{xx} \bar{H}_{s,j}^{y y}\Bigr)+h \bar{H}_0^z\, .
\ee
The time averaged quadratic operators appearing on the right hand side of \eref{eq:Vbar1} are the fundamental blocks of \eref{eq:Wick} and read as
\bea
\label{eq:timeavops}
\fl\qquad\qquad \bar H_s^z=\overline{\frac{1}{2}\sum_\ell (-1)^{s\ell}\sigma_\ell^z}=\overline{\frac{1}{2}\sum_\ell (-1)^{s\ell} i a_\ell^ya_\ell^x}\nn
\fl\qquad\qquad \bar H_{s,j}^{xy}=\overline{\frac{1}{2}\sum_\ell (-1)^{s\ell}\sigma_\ell^x (\sigma_{\ell+1}^z)^{j-1}\sigma_{\ell+j}^y}=\overline{\frac{1}{2}\sum_\ell (-1)^{s\ell} (-i) a_\ell^ya_{\ell+j}^y}\nn
\fl\qquad\qquad \bar H_{s,j}^{yx}=\overline{\frac{1}{2}\sum_\ell (-1)^{s\ell}\sigma_\ell^y (\sigma_{\ell+1}^z)^{j-1}\sigma_{\ell+j}^x}=\overline{\frac{1}{2}\sum_\ell (-1)^{s\ell} i a_\ell^x a_{\ell+j}^x}\nn
\fl\qquad\qquad \bar H_{s,j}^{xx}=\overline{\frac{1}{2}\sum_\ell (-1)^{s\ell}\sigma_\ell^x (\sigma_{\ell+1}^z)^{j-1}\sigma_{\ell+j}^x}=\overline{\frac{1}{2}\sum_\ell (-1)^{s\ell} (-i) a_\ell^y a_{\ell+j}^x}\nn
\fl\qquad\qquad \bar H_{s,j}^{yy}=\overline{\frac{1}{2}\sum_\ell (-1)^{s\ell}\sigma_\ell^y (\sigma_{\ell+1}^z)^{j-1}\sigma_{\ell+j}^y}=\overline{\frac{1}{2}\sum_\ell (-1)^{s\ell} i a_\ell^x a_{\ell+j}^y}\, .
\eea
Since we have to compute the time average of quadratic operators evolving according to a noninteracting Hamiltonian (\emph{cf.} \eref{eq:Vbar} and \eref{eq:Hfrep}), we can use \eref{timeevolution}. This allows to find the following exact result
\bea\label{eq:timeav}
\fl\qquad\bar{\mathcal{O}}(k)=\lim_{T\rightarrow+\infty}\frac{1}{T}\int_{0}^{T}\hspace{-0.2cm}\textrm{d}t\,\mathcal{O}(k,t)=\lim_{T\rightarrow+\infty}\frac{1}{T}\int_{0}^{T}\hspace{-0.2cm}\textrm{d}t\,e^{ i \mathcal{H}^{(2)}(k) t}\mathcal{O}(k)e^{- i \mathcal{H}^{(2)}(k) t} \nn
=\frac{1}{2}\mathcal{O}(k,0)+\frac{1}{2}\left[\sigma^x e^{i\frac{k}{2}\sigma^z}\otimes\sigma^y e^{ i\theta_k\sigma^z}\right]\mathcal{O}(k,0)\left[\sigma^x e^{ i\frac{k}{2}\sigma^z}\otimes\sigma^y e^{ i\theta_k\sigma^z}\right]\, , 
\eea
where $\mathcal{O}(k)$ is the symbol of a quadratic operator and $\overline{\phantom{(}\!\!\!\cdots\phantom{)}\!\!\!}$ denotes the time average.
The symbols of the operators \eref{eq:timeavops} read
\bea
\fl\qquad\quad\bar{{H}}^z_s(k) &=&\frac{J}{\varepsilon_k^2} (\delta_{s,0}\mathcal{Q}_2(k)-\gamma \delta_{s,1}\mathcal{Q}_8(k))\nn
\fl\qquad\quad\bar{{H}}^{xy}_{s, 1}(k) &=&\delta_{s,0}\mathcal{Q}_4(k)+\delta_{s,1} \mathcal{Q}_6(k)\nn
\fl\qquad\quad\bar{{H}}^{xy}_{s,2}(k) &=&\delta_{s,0} \mathcal{Q}_3(k)+\delta_{s,1} \mathcal{Q}_7(k)\nn
\fl\qquad\quad\bar{{H}}^{yx}_{s, 1}(k) &=&-\delta_{s,0} \mathcal{Q}_4(k)+\delta_{s,1}\mathcal{Q}_6(k)\nn
\fl\qquad\quad\bar{{H}}^{yx}_{s,2}(k) &=&- \delta_{s,0} \mathcal{Q}_3(k)+\delta_{s,1} \mathcal{Q}_7(k)\nn
\fl\qquad\quad\bar{{H}}^{xx}_{s,1}(k) &=&-\frac{J}{2 \varepsilon_k^2}\left((1+\gamma)+(1-\gamma)\cos k \right)(\delta_{s,0}\mathcal{Q}_1(k)-\delta_{s,1}\mathcal{Q}_5(k))\nn
\fl\qquad\quad\bar{{H}}^{xx}_{s,2}(k) &=&- \frac{J }{\varepsilon_k^2}\left[\gamma + (1-\gamma)(s+ \cos k) \right](\delta_{s,0}\mathcal{Q}_2(k)
+ \delta_{s,1}\mathcal{Q}_8(k))\nn
\fl\qquad\quad\bar{{H}}^{yy}_{s,1}(k) &=&-\frac{J }{2\varepsilon^2_k}\left((1-\gamma)+(1+\gamma)\cos k \right)(\delta_{s,0}\mathcal{Q}_1(k)+
\delta_{s,1}\mathcal{Q}_5(k))\nn
\fl\qquad\quad\bar{{H}}^{yy}_{s,2}(k) &=& \frac{J }{\varepsilon_k^2}\left( \gamma+s(1-\gamma)-(1+\gamma)(-1)^s\cos k\right)(\delta_{s,0}\mathcal{Q}_2(k)+
\delta_{s,1}\mathcal{Q}_8(k))\,.\label{eq:HQ8}
\eea
Here we expressed the results in terms of the symbols of the local charges of $H_{XY}$ \cite{F:super}
\begin{eqnarray}
 \mathcal{Q}_1(k)=\mathcal{I}^{+\rm (e)}_1(k)=\varepsilon_k\,\bigl[\sigma^xe^{i\frac{k}{2}\sigma^z}\bigr]\otimes\bigr[\sigma^y e^{i \theta_k\sigma^z}\bigr]
 \nn
\mathcal{Q}_2(k)=\mathcal{I}^{+\rm (o)}_1(k)=\cos(k/2) \varepsilon_k \,\1\otimes\bigl[\sigma^y e^{i \theta_k\sigma^z}\bigr]\nn
\mathcal{Q}_3(k)=\mathcal{I}^{-\rm (e)}_1(k)=\sin( k)\, \1\otimes\1 \nn
\mathcal{Q}_4(k)=\mathcal{I}^{-\rm (o)}_1(k)=\sin(k/2)\,\bigl[\sigma^xe^{i\frac{k}{2}\sigma^z}\bigr]\otimes\1\nn
\mathcal{Q}_5(k)=\mathcal{J}^{+\rm (e)}_1(k)=\varepsilon_k\, \bigl[\sigma^ye^{i\frac{k}{2}\sigma^z}\bigr]\otimes\bigl[\sigma^x e^{i \theta_k \sigma^z}\bigl]\nn
\mathcal{Q}_6(k)=\mathcal{J}^{+\rm (o)}_1(k)=\cos(k/2)\,\bigl[\sigma^ye^{i\frac{k}{2}\sigma^z}\bigr]\otimes\sigma^z\nn
\mathcal{Q}_7(k)=\mathcal{J}^{-\rm (e)}_1(k)=\sin(k)\,\sigma^z\otimes\sigma^z \nn
\mathcal{Q}_8(k)=\mathcal{J}^{-\rm (o)}_1(k)=\sin(k/2)\varepsilon_k\,\sigma^z\otimes\bigr[\sigma^x e^{i \theta_k \sigma^z}\bigl]\label{eq:C8}\,.
\end{eqnarray}
The first four symbols correspond to one-site shift invariant operators (the standard conservation laws of the quantum XY model), while the others change sign under a shift by one site.  
 
We remind the reader that from the symbol of an operator it is possible to infer its locality properties \cite{F:super}. In particular, a smooth symbol is associated with a quasi-local operator. If in addition the symbol has a finite number of nonzero Fourier components, as in  \eref{eq:C8}, the associated operator is local. Equations \eref{eq:HQ8} imply that  $\bar H^{xy}_s, \bar H^{yx}_s$ are local while $\bar H^{xx}_s, \bar H^{yy}_s, \bar H^{z}_s$ are quasi-local, thus the Hamiltonian \eref{eq:Hrel} is a member of the quasi-local extension of the class $\mathcal E$ studied in Section \ref{s:mf}.
 As pointed out in \Sref{s:mf}, we expect all the theorems of \Sref{s:mf}, in particular Corollary \ref{C:2}, to remain valid also for quasi-local operators. This guarantees the time evolution generated by \eref{eq:Hrel} to be \emph{locally} equivalent to the one generated by the following mean-field Hamiltonian 
\bea
\label{eq:meanfieldXYZ}
\fl \bar H_{MF}(T)= H_{XY}+2 J g \sum_s ((-1)^{s}+U)\frac{\braket{\bar{H}^z_s}_T}{L} \bar{H}^z_s\nn
+Jg\sum_{s,j} U^{j-1}\Bigl(\frac{\braket{\bar{H}^{xy}_{s j}}_T}{L} \bar{H}^{yx}_{s j}+\frac{\braket{\bar{H}^{yx}_{s j}}_T}{L} \bar{H}^{xy}_{s j}\Bigr)\nn
\qquad-Jg\sum_{s,j} U^{j-1}\Bigl(\frac{\braket{\bar{H}^{xx}_{s j}}_T}{L}\bar{H}^{yy}_{s j}+\frac{\braket{\bar{H}^{yy}_{s j}}_T}{L} \bar{H}^{xx}_{s j}\Bigr)+hg \bar{H}^z_0\, ,
\eea
where $\braket{\mathcal O}_T$ is the expectation value of the operator $\mathcal O$ in the mean-field description (\emph{cf.} \eref{eq:UVGGE})
\be
\label{eq:mfexpval}
\braket{\mathcal O}_T=\textrm{Tr}\left[ U_{\bar V}(T) \rho_{GGE} U^{\dag}_{\bar V}(T) \mathcal O\right]\,.
\ee
To determine the time evolution generated by $\bar H_{MF}(T)$ we need to solve the self-consistency conditions encoded in \eref{eq:meanfieldXYZ} and \eref{eq:mfexpval}. To this end, it is again convenient to exploit the representation in terms of symbols. Using \eref{eq:HQ8}, the symbol $\mathcal{H}_{MF}(k, T)$ of the time-dependent mean-field Hamiltonian can be written in terms of the symbols $\{\mathcal Q_\alpha(k),\alpha=1,\dots,8\}$, as follows
\footnote{From now on we set $J=1$.}
\be
\label{eq:MFO}
\mathcal{H}_{MF}(k, T)=-\mathcal{Q}_{1}(k)+g\mathcal{V}_{MF}(k, T)\,,
\ee
\be
\label{eq:VMFO}
\mathcal{V}_{MF}(k,T)=\frac{h}{\varepsilon_k^2}\mathcal{Q}_{2}(k)+ \sum_{\alpha=1}^{8}c_{\alpha}(k;\tilde y_{\alpha})\mathcal{Q}_{\alpha}(k)\,.
\ee
The coefficients are given by
\bea\label{eq:c8}
\fl \qquad  c_{1}(k; \tilde y_1)=-\frac{1+\cos k}{2\varepsilon^2_k}(\tilde y_{1}^{(0)}+\tilde y_{1}^{(1)})+\gamma^2\frac{1-\cos k}{2\varepsilon^2_k}(\tilde y_{1}^{(0)}-\tilde y_{1}^{(1)})\, \nn
\fl \qquad c_{2}(k; \tilde y_2)=2\frac{1+U}{ \varepsilon_k^2} \tilde y_{2}^{(0)}-2 U  \frac{\cos k}{\varepsilon_k^2}\tilde y_{2}^{(1)}
+2 U \gamma^2 \frac{1-\cos k}{\varepsilon_k^2}(\tilde y_{2}^{(0)}-\tilde y_{2}^{(1)})\, \nn
\fl \qquad c_{3}(k;\tilde y_{3})=-U(1+ \gamma^2) \tilde y_{3}^{(0)}-U(1-\gamma^2) \tilde y_{3}^{(1)}\, \nn
\fl \qquad c_{4}(k;\tilde y_{4})=-(1+ \gamma^2) \tilde y_{4}^{(0)}-(1-\gamma^2) \tilde y_{4}^{(1)}\, \nn
\fl \qquad c_{5}(k;\tilde y_5)= \frac{ 1+\cos k}{2\varepsilon^2_k} (\tilde y_{5}^{(0)}+\tilde y_{5}^{(1)})-\gamma^2 \frac{1-\cos k}{2\varepsilon^2_k} (\tilde y_{5}^{(0)}-\tilde y_{5}^{(1)})\, \nn
\fl \qquad c_{6}(k; \tilde y_{6})= (1+ \gamma^2) \tilde y_{6}^{(0)}+(1-\gamma^2) \tilde y_{6}^{(1)}\, \nn
\fl \qquad c_{7}(k; \tilde y_{7})= U(1+ \gamma^2) \tilde y_{7}^{(0)}+U(1-\gamma^2) \tilde y_{7}^{(1)}\, \nn
\fl \qquad c_{8}(k; \tilde y_{8})=2 \gamma^2\frac{U-1}{\varepsilon_k^2} \tilde y_{8}^{(0)}-2 \gamma^2 U \frac{\cos k}{ \varepsilon_k^2}  \tilde y_{8}^{(1)}+2 U \frac{1+ \cos k}{\varepsilon_k^2} (\tilde y_{8}^{(0)}+\tilde y_{8}^{(1)})\, , 
\eea
where we defined
\be
\tilde y_\alpha^{(\ell)}(T)=\int_{-\pi}^\pi\frac{\mathrm d p}{2\pi}\frac{\cos(\ell p)}{\varepsilon_k^2}y_\alpha(p,T)\, ,
\ee
and
\bea
\label{eq:defy}
y_\alpha(k,T)=\frac{1}{8}\textrm{Tr}\left[U_{\mathcal H_{MF}}(k,T) \Gamma_{GGE}(k)U^{\dag}_{\mathcal H_{MF}}(k,T) \mathcal{Q}_\alpha(k)\right]\, \nn
U_{\mathcal H_{MF}}(k,T)=\textrm{T}\exp\left[-i \int_0^{T}\textrm{d}s\,\mathcal V_{MF}(k,s)\right]\, .
\eea
By taking the first derivative of \eref{eq:defy} with respect to $T$ and using the (closed) commutator algebra of $\mathcal Q_\alpha(k)$ we get
\be
\label{eq:equations}
\fl\qquad\qquad \dot{y}_\alpha(k,T)=\frac{h}{\varepsilon_k^2}\sum_{\gamma =1}^{8}f^{2\alpha\gamma}_ky_{\gamma}(k,T)+\sum_{\beta,\gamma =1}^{8}c_{\beta}(k;\tilde y_\gamma)f^{\beta\alpha\gamma}_ky_{\gamma}(k,T)\,.
\ee
The nonzero structure constants $f^{\alpha\beta\gamma}_k$ that are not connected to one another by symmetry are given by 
\bea
\fl\quad f^{562}_k=f^{548}_k=2f^{647}_k=2 &\qquad f^{782}_k=f^{746}_k=2f^{845}_k=-2(1-\cos k)
\nn
\fl\quad f^{584}_k=f^{526}_k=2f^{827}_k=-2\varepsilon_k^2
&\qquad f^{728}_k=f^{764}_k=2f^{625}_k=2(1+\cos k)\nn
\eea
The others follow from $f^{\beta\alpha\gamma}_k=-f^{\alpha\beta\gamma}_k$. In particular $\mathcal Q_{1}(k)$ and $\mathcal Q_{3}(k)$ commute with all the other charges, so $y_1(k)$ and $y_3(k)$ are conserved and the system \eref{eq:equations} is reduced to 6 first order integro-differential equations that depend on a continuous variable $k$. 

The solution of  $\eref{eq:equations}$ entirely determines the time evolution generated by $\bar H_{\rm MF}$. Indeed, the expectation value of any local observable in the pre-relaxation limit can be computed using the Wick theorem with the correlation matrix
\bea
\label{eq:correlmatrix}
\fl
\qquad\qquad\Braket{\left(\begin{array}{c}
a_{2n-1}^x \\
a_{2n-1}^y \\
a_{2n}^x \\
a_{2n}^y
\end{array}\right)
\left(\begin{array}{cccc}
a_{2\ell-1}^x&a_{2\ell-1}^y&a_{2\ell}^x&a_{2\ell}^y
\end{array}\right)}=\nn
\qquad\qquad\qquad\delta_{\ell n} \1_4+\int_{-\pi}^\pi\frac{\mathrm d k}{2\pi} e^{-i(n-\ell)k}\sum_{i=1}^8  \frac{8y_i(k,T)}{\tr(\mathcal Q_i(k)^2)}\mathcal Q_i(k)\, .
\eea
This also means that the reduced density matrices of subsystems are gaussian at any time, so the  two assumptions \eref{ass:a} and \eref{ass:b} could be also reformulated as a single hypothesis of RDMs being gaussian.   

Equations \eref{eq:correlmatrix} and \eref{eq:equations} are the main results of this section: they allow us to compute the expectation values of local observables in the pre-relaxation limit of a weakly interacting model by solving a  nonlinear system of differential equations, which is rather easy from a numerical point of view.

\paragraph{Reflection symmetry}
The Hamiltonian \eref{eq:XYYU} is reflection symmetric, that is to say it is invariant under the transformation
\be
\sigma_\ell^\alpha\rightarrow \sigma_{s+L-\ell}^\alpha\qquad \alpha\in \{x,y,z\}\, ,
\ee
where $s$ is odd for reflections about a bond and even for those about a site. 

The reflection operator acts on the Majorana fermions as follows
\begin{eqnarray}
a_\ell^x\rightarrow i \Bigl(\prod_{j}\sigma_j^z\Bigr)a_{s+L-\ell}^y\nn
a_\ell^y\rightarrow -i \Bigl(\prod_{j}\sigma_j^z\Bigr)a_{s+L-\ell}^x\, .
\end{eqnarray}
Therefore the symbol $\mathcal H$ of a one-site shift invariant operator transforms as
\be
\mathcal H^{(1)}(k)\rightarrow\sigma^y\mathcal H^{(1)}(-k)\sigma^y\, ,
\ee
while for two-site shift invariant operators we find
\be
\mathcal H^{(2)}(k)\rightarrow\left\{
\begin{array}{ll}
\sigma^x\otimes \sigma^y\ \mathcal H^{(2)}(-k)\ \sigma^x\otimes \sigma^y&s \rm{\ odd}\\
e^{-i\frac{k}{2}\sigma^z}\otimes \sigma^y\ \mathcal H^{(2)}(-k)\ e^{i\frac{k}{2}\sigma^z}\otimes \sigma^y&s \rm{\ even}\, .
\end{array}\right.
\ee
The symbols \eref{eq:C8} of the conservation laws have the simple transformation rules
\begin{eqnarray}\label{eq:transf}
\mathcal Q_{1,2}(k)    &\longrightarrow \mathcal Q_{1,2}(k)\nn
\mathcal Q_{3,4}(k)    &\longrightarrow -\mathcal Q_{3,4}(k)\nn
\mathcal Q_{5,6}(k)&\longrightarrow -(-1)^sQ_{5,6}(k)\nn
\mathcal Q_{7,8}(k)&\longrightarrow (-1)^s\mathcal Q_{7,8}(k)\, .
\end{eqnarray}
Since a shift by one site is equivalent to a reflection about a bond followed by a reflection about a site, we recover the transformation rules pointed out 
below \eref{eq:C8}.

If the initial state is reflection symmetric about a \emph{bond}, $\mathcal Q_j(k)=0$ for $j=3,4,7,8$.
Thus, the system of equations \eref{eq:equations} can be reduced to
\bea\label{eq:simplesystem}
\fl\qquad\quad \dot{y}_2(k,T)&=-2 c_{5}(k;\tilde y_5)\varepsilon_k^2 y_{6}(k,T)+ c_{6}(k;\tilde y_6)(1+\cos k) y_{5}(k,T) \nn
\fl\qquad\quad \dot{y}_5(k,T)&=-2 c_{6}(k;\tilde y_6) y_{2}(k,T)+2 \Bigl(\frac{h}{\varepsilon_k^2}+c_{2}(k;\tilde y_2)\Bigr)\varepsilon_k^2 y_{6}(k,T)\nn
\fl\qquad\quad \dot{y}_6(k,T)&=2 c_{5}(k;\tilde y_5) y_{2}(k,T)-   \Bigl(\frac{h}{\varepsilon_k^2}+c_{2}(k;\tilde y_2)\Bigr)(1+\cos k) y_{5}(k,T)\, .
\eea
We numerically identified three different behaviours:
\begin{itemize}
\item \emph{Stationarity}: The expectation values of the observables remain equal to the initial values given by the unperturbed GGE (\fref{f:noevolution}). 
\item \emph{Local relaxation}: The observables relax to a different stationary value: one-site shift invariance is restored in some cases (\fref{f:relaxrest}) while remaining broken in others (\fref{f:relaxbrok}).  
\item \emph{Persistent oscillations}: The amplitude of the oscillations of the expectation values of the observables does not approach zero (\fref{f:persistent}).  
\end{itemize} 
We point out that, even when there is relaxation (at some intermediate times with $Jt\gg g^{-1}$), the stationary state is not thermal, being the local conservation laws of $H_{\rm XY}$ with symbol proportional to $\mathcal Q_1(k)$ and $\mathcal Q_3(k)$ (namely the charges that preserve non-abelian integrability) conserved in the pre-relaxation limit. 

\begin{figure}[tbp]
\begin{center}
\includegraphics[width=0.8\textwidth]{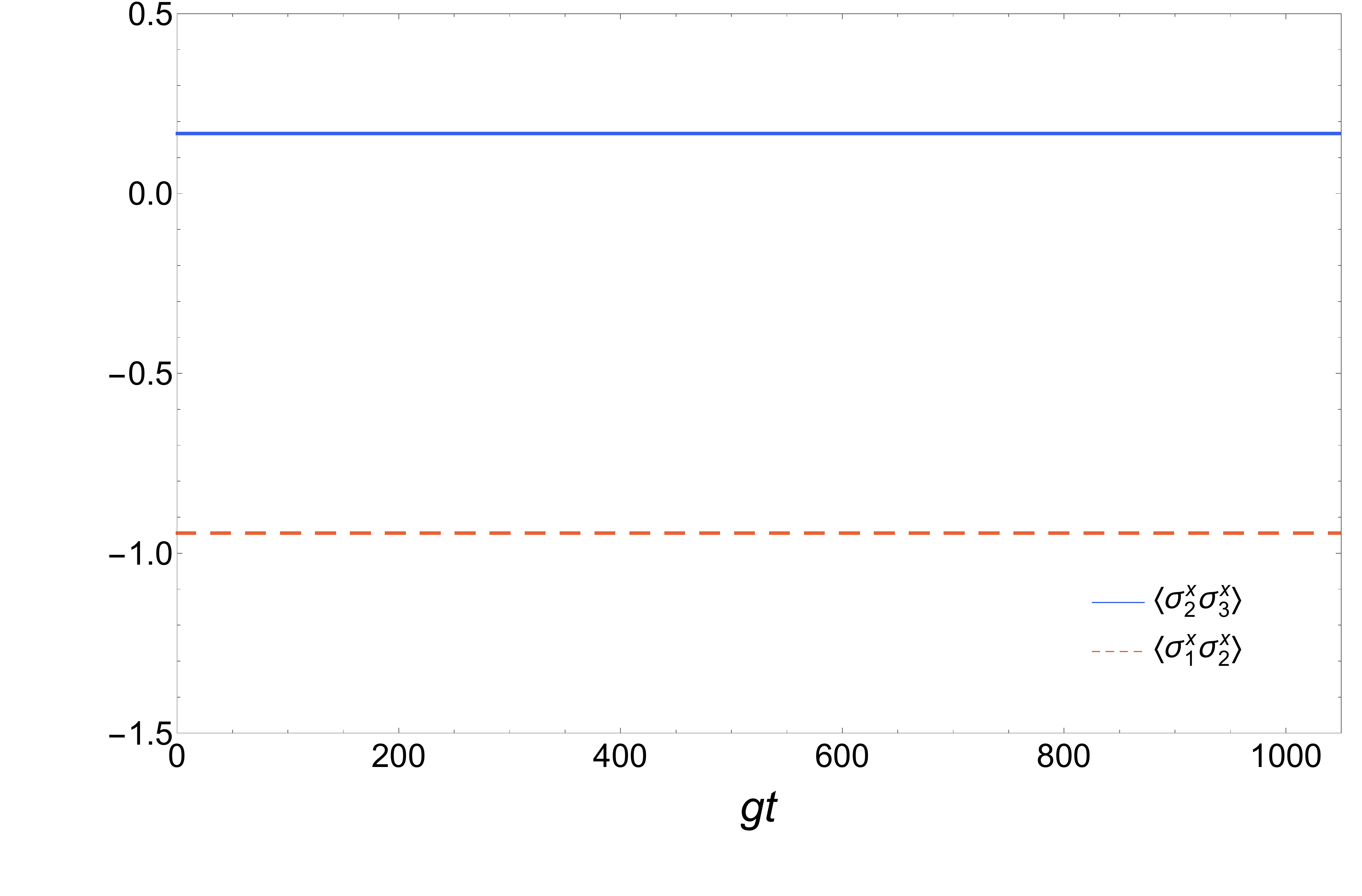}
\end{center}\caption{The time evolution of $\braket{\sigma^x_1\sigma^x_{2}}$ (red dashed) and $\braket{\sigma^x_2\sigma^x_{3}}$ (blue) after a quench from the state $\ket{\textrm{MG}}$ \eref{eq:MG} and Hamiltonian $H$ \eref{eq:XYYU} with $\gamma=2$, $h=0$, and $U=5$. The correlators are stationary.  We find stationary behaviour whenever the initial state is reflection symmetric, $y_2(k)=y_6(k)=0$ (\emph{cf}. \eref{eq:defy}), and $h=0$.}\label{f:noevolution}
\end{figure}

\begin{figure}[tbp]
\begin{center}
\includegraphics[width=0.8\textwidth]{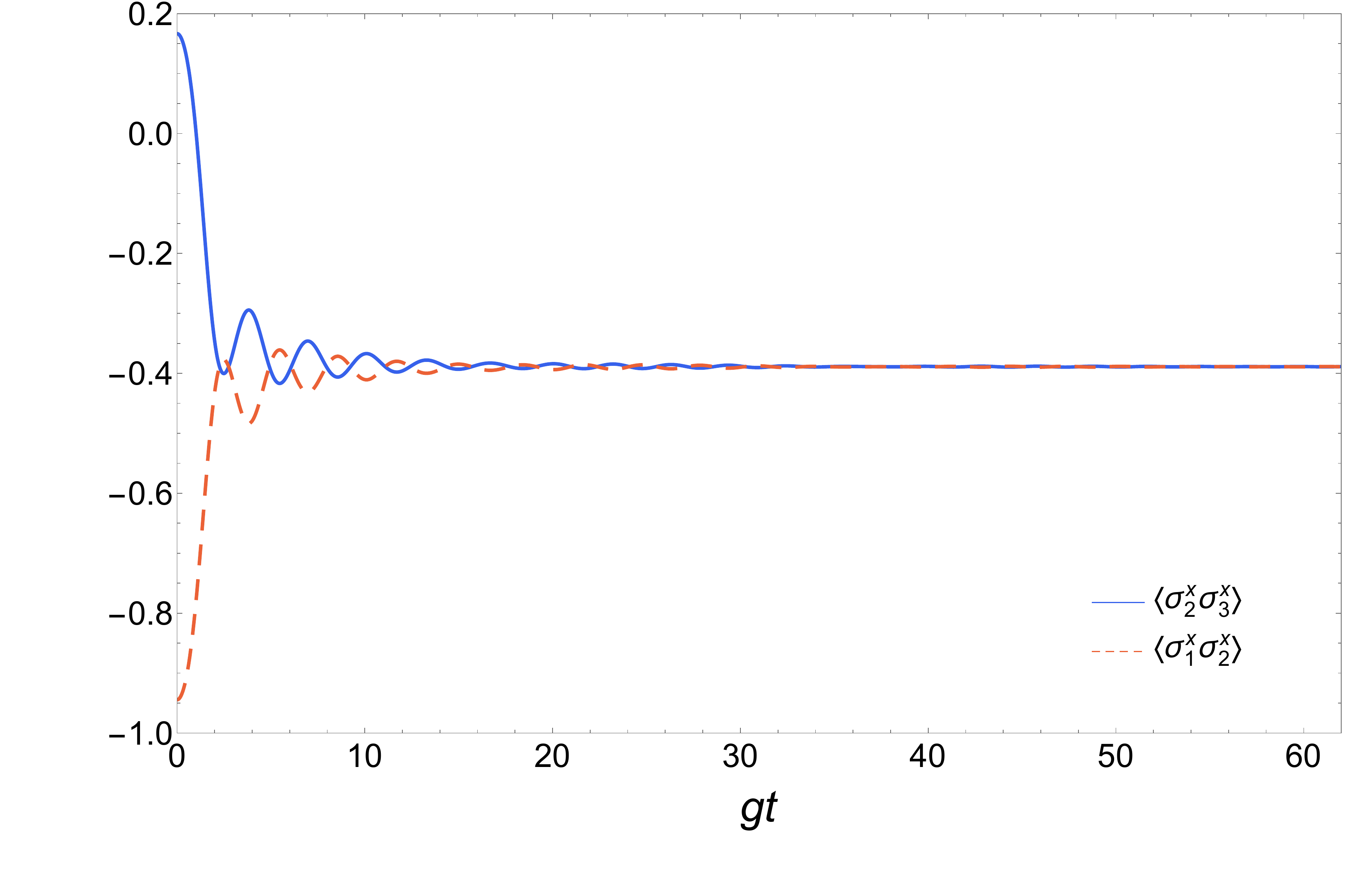}
\end{center}\caption{The time evolution of $\braket{\sigma^x_1\sigma^x_{2}}$ (red dashed) and $\braket{\sigma^x_2\sigma^x_{3}}$ (blue) after a quench from the state $\ket{\textrm{MG}}$ \eref{eq:MG} and Hamiltonian $H$ \eref{eq:XYYU} with $\gamma=2$, $h=1$ and $U=-2$. The correlators rapidly relax to the same stationary value, restoring translation invariance.  We verified relaxation up to $gt=1000$. }\label{f:relaxrest}
\end{figure}

\begin{figure}[tbp]
\begin{center}
\includegraphics[width=0.8\textwidth]{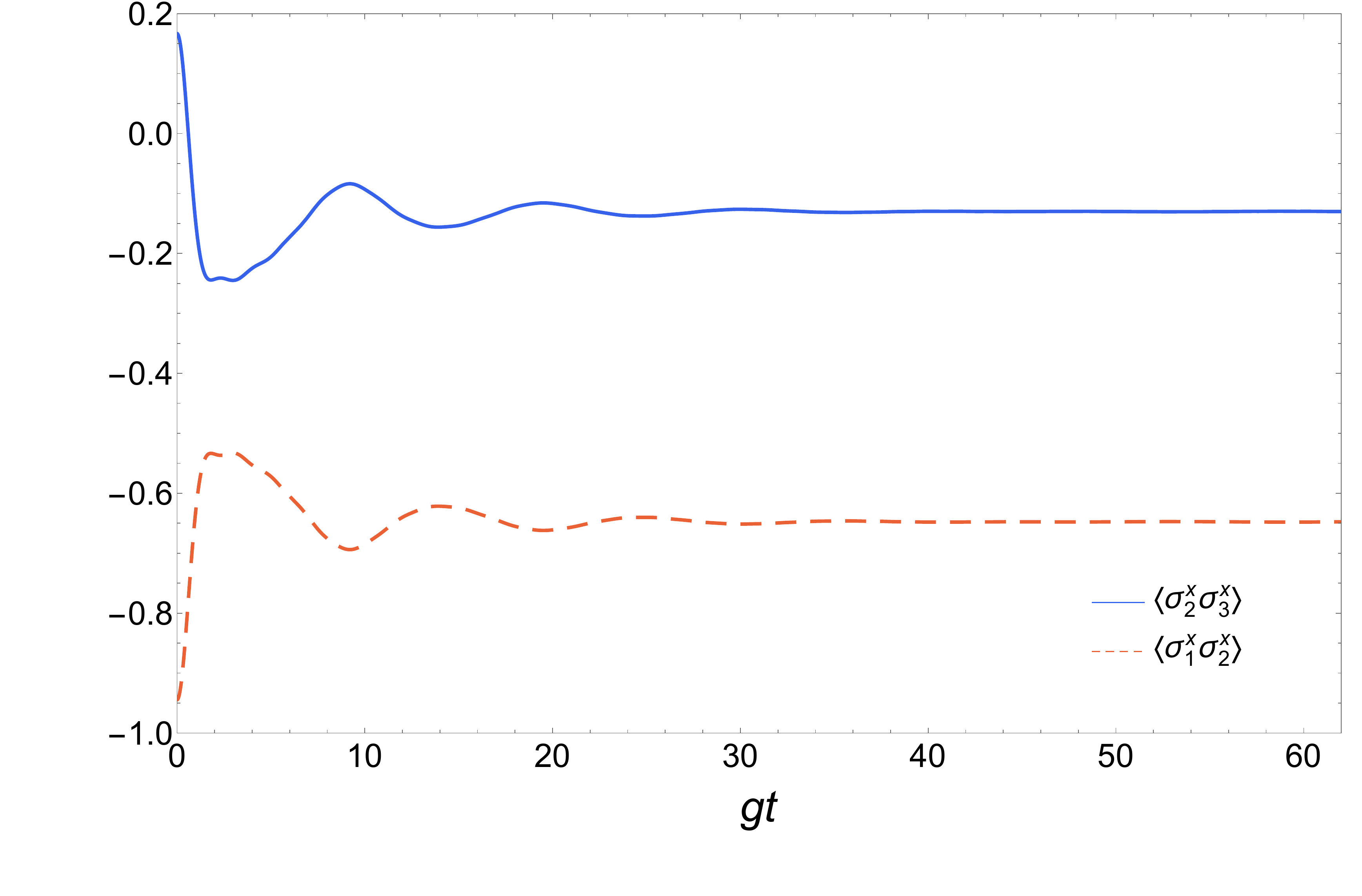}
\end{center}\caption{The time evolution of $\braket{\sigma^x_1\sigma^x_{2}}$ (red dashed) and $\braket{\sigma^x_2\sigma^x_{3}}$ (blue) after a quench from the state $\ket{\textrm{MG}}$  \eref{eq:MG} and Hamiltonian $H$ \eref{eq:XYYU} with $\gamma=2$, $h=2$ and $U=2$. The correlators rapidly relax to different stationary values. We verified relaxation up to $gt=1000$.}\label{f:relaxbrok}
\end{figure}

\begin{figure}[tbp]
\begin{center}
\includegraphics[width=0.8\textwidth]{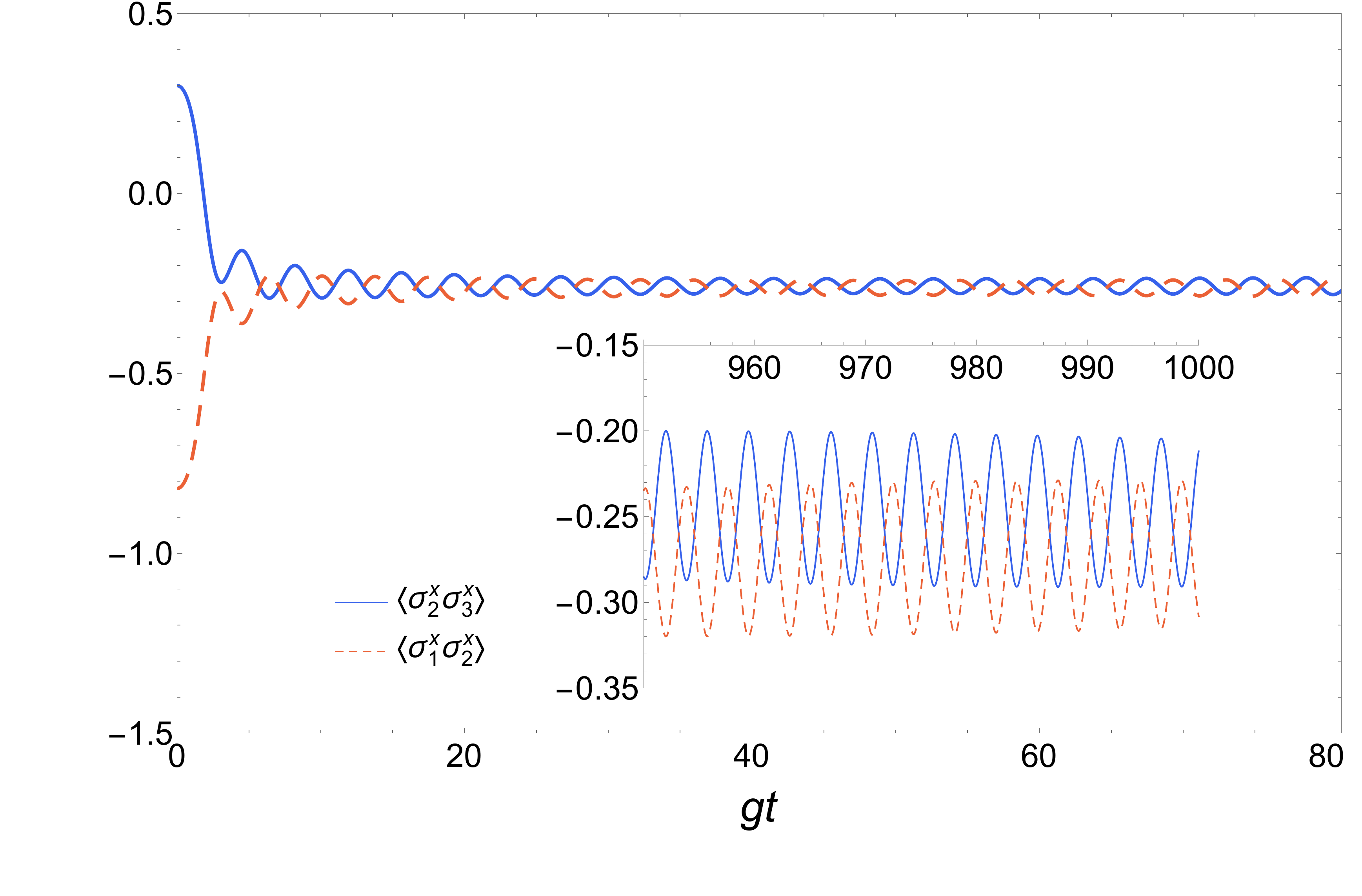}
\end{center}\caption{The time evolution of $\braket{\sigma^x_1\sigma^x_{2}}$ (red dashed) and $\braket{\sigma^x_2\sigma^x_{3}}$ (blue) after a quench from the state $\ket{\textrm{MG}}$ \eref{eq:MG} and Hamiltonian $H$ \eref{eq:XYYU} with $\gamma=4$, $h=1$ and $U=-2$. The correlators exhibit persistent oscillations on the time window explored. Inset: the amplitude of the oscillations is still unabated at $gt= 1000$.}\label{f:persistent}
\end{figure}

\subsection{Perturbations preserving integrability}\label{ss:int}
In this section we consider the case $U=\gamma=0$, in which $H$ \eref{eq:XYYU} is the Hamiltonian of the $XXZ$ spin-$\frac{1}{2}$ chain. Because of the $U(1)$ symmetry of rotations around $z$, there are many simplifications and the system of equations \eref{eq:simplesystem} can be rewritten as follows:
\bea\label{eq:simplesystemisot1}
\fl\qquad \frac{1}{2}\dot{y}_2^{[n]}(T)&=y_5^{[n]}(T) y_6^{[0]}(T)-y_6^{[n]}(T)y_5^{[0]}(T) \nn
\fl\qquad \frac{1}{2}\dot{y}_5^{[n]}(T)&=-(2y_2^{[n]}(T)+y_2^{[n-1]}(T)+y_2^{[n+1]}(T))y_6^{[0]}(T)+(h+4y_2^{[0]}(T))y_6^{[n]}(T)\nn
\fl\qquad \frac{1}{2}\dot{y}_6^{[n]}(T)&=(2y_2^{[n]}(T)+y_2^{[n-1]}(T)+y_2^{[n+1]}(T))y_5^{[0]}(T)-(h+4y_2^{[0]}(T))y_5^{[n]}(T)\, ,
\eea
where we defined
\be
\label{eq:ntransf}
\fl \quad y_2^{[n]}(T)=\int_{-\pi}^\pi\frac{\mathrm d p}{2\pi}\frac{\cos(n p)}{1+\cos p}y_2(p,T)\, ,\qquad y_{5,6}^{[n]}(T)=\int_{-\pi}^\pi\frac{\mathrm d p}{2\pi}\cos(n p)y_{5,6}(p,T)\, .
\ee
We notice that, despite the denominator, $y_2^{[n]}$ are expectation values of local operators, as well as $y_{5,6}^{[n]}$.
Since $S^z=\frac{1}{2}\sum_\ell\sigma_\ell^z$ commutes with the Hamiltonian, the 
dependence on $h$ is simple and, in particular, the expectation value of the one-site shift invariant conservation laws is independent of the magnetic field. This means that the functions $y_2^{[n]}$ are independent of $h$.

It is useful to rewrite the system for $n=0$. We find
\bea\label{eq:simplesystemisot20}
\fl\qquad\quad \dot{y}_2^{[0]}(T)&=0\nn
\fl\qquad\quad \dot{y}_5^{[0]}(T)&=2(h+2y_2^{[0]}(T)-2y_2^{[1]}(T))y_6^{[0]}(T)\nn
\fl\qquad\quad \dot{y}_6^{[0]}(T)&=-2(h+2y_2^{[0]}(T)-2y_2^{[1]}(T))y_5^{[0]}(T)\, . 
\eea
Inspecting the system we conclude that $y_2^{[0]}$ and $(y_5^{[0]})^2+(y_6^{[0]})^2$ are conserved. Moreover, the system \eref{eq:simplesystemisot20} can be directly solved, it yields
\bea\label{eq:sol0}
\fl\quad y_2^{[0]}(T)=\frac{\braket{S^z}}{2L}\equiv \frac{s^z}{2}\nn 
\fl\quad y_5^{[0]}(T)=y_5^{[0]}(0)\cos\Bigl(\int_0^T \mathrm d \tau\, (2h+4m(\tau))\Bigr)+y_6^{[0]}(0)\sin\Bigl(\int_0^T \mathrm d \tau\, (2h+4m(\tau))\Bigr)\nn
\fl\quad  y_6^{[0]}(T)=y_6^{[0]}(0)\cos\Bigl(\int_0^T \mathrm d \tau\, (2h+4m(\tau))\Bigr)-y_5^{[0]}(0)\sin\Bigl(\int_0^T \mathrm d \tau\, (2h+4m(\tau))\Bigr)\, .
\eea
Here we defined
\be\label{eq:mT}
\fl \quad m(T)\equiv \frac{s^z}{2}-y_2^{[1]}(T)=s^z-\frac{\braket{Q_2}_T}{L}=\frac{1}{4}\Braket{\sigma_\ell^z}+\frac{1}{8}\Braket{\sigma_{\ell-1}^x\sigma_{\ell}^z\sigma_{\ell+1}^x+\sigma_{\ell-1}^y\sigma_{\ell}^z\sigma_{\ell+1}^y}_T\, .
\ee
If both $y_5^{[0]}(T)$ and $y_6^{[0]}(T)$ are zero, \eref{eq:simplesystemisot1} has the solution 
\bea
y_2^{[n]}(T)=y_2^{[n]}(0)\nn
y_5^{[n]}(T)=y_5^{[n]}(0)\cos(2 (h+2s^z)T)+y_6^{[n]}(0)\sin(2 (h+2s^z) T)\nn
y_6^{[n]}(T)=y_6^{[n]}(0)\cos(2 (h+2s^z)T)-y_5^{[n]}(0)\sin(2 (h+2s^z) T)\, .
\eea
For $h\neq -2s^z$, local observables keep oscillating in time, otherwise, on the pre-relaxation timescale, the expectation values of local observables do not move from the values reached at times $1\ll Jt\ll g^{-1}$.  

More generally ($(y_5^{[0]})^2+(y_6^{[0]})^2\neq 0$), from \eref{eq:sol0} and \eref{eq:mT} we immediately infer that relaxation is possible only if 
\be\label{eq:hrelax}
\exists \lim_{T\rightarrow\infty}m(T)=-\frac{h}{2}\, ,
\ee
\be\label{eq:hrelax2}
 \exists \lim_{T\rightarrow \infty}\Bigl|\int_{0}^{T}\textrm{d}\tau\Bigl(m(\tau)+\frac{h}{2}\Bigr)\Bigr|<\infty \, .
\ee
We see that $m(T)$ could be interpreted as a sort of `induced magnetisation' that $h$ must compete with.  

The trivial dependence on $h$ is manifest choosing the variables
\bea\label{eq:simplesystemisot2}
Y_n=y_2^{[n]}(T)\nn
\Phi_n=y_5^{[n]}(T)y_5^{[0]}(T)+y_6^{[n]}(T)y_6^{[0]}(T)\, ,
\eea
which satisfy the following system of equations independent of $h$:
\bea
\label{eq:YPHI}
\fl\quad\qquad m(T)\equiv \frac{s^z}{2}-Y_1(T)\nn
\fl\quad\qquad\ddot Y_n(T)=-4\Phi_0(2Y_n(T)+Y_{n+1}(T)+Y_{n-1}(T))+8(s^z-m(T))\Phi_n(T)\nn
\fl\quad\qquad\dot \Phi_n(T)=-2(s^z-m(T))\dot Y_n(T)\, .
\eea
Here we omitted the time dependence in the conserved quantity $\Phi_0$. 
Since $\Phi_0\neq 0$ by assumption, the original variables are obtained from the inverse transformation
\bea\label{eq:invt}
y_5^{[n]}(T)=\frac{2y_5^{[0]}(T)\Phi_n(T)+y_6^{[0]}(T)\dot Y_n(T)}{2\Phi_0}\nn
y_6^{[n]}(T)=\frac{2y_6^{[0]}(T)\Phi_n(T)-y_5^{[0]}(T)\dot Y_n(T)}{2\Phi_0}\, ,
\eea
and \eref{eq:sol0}. Performing a qualitative analysis of the system \eref{eq:YPHI} we conclude that condition \eref{eq:hrelax} and $m(T)+{h}/{2}$ approaching 0 faster than $1/T$ imply relaxation of local degrees of freedom.
Therefore the variance \eref{eq:variance} of $m(T)$ is what in \Sref{s:summary} we called an `order parameter' for the transition between relaxation and oscillatory behaviour. 

As we will show in \Sref{s:Ising}, some aspects of the solutions of nonlinear systems like \eref{eq:YPHI} can be worked out analytically.  In the present context this would involve the study of quantum quenches from rather artificial initial states. Therefore we prefer to leave the entire discussion to \Sref{s:Ising}, where we will obtain a system of equations extremely similar to \eref{eq:YPHI}, with the advantage that the qualitative analysis can be carried out for more conventional initial states.

\subsection{Perturbations breaking integrability: linearisation}
In order to gain some insights into the time evolution under the Hamiltonian \eref{eq:XYYU} in the non-integrable case we focus on quantum quenches starting from the dimer product state
\be\label{eq:MG}
\ket{\textrm{MG}} =\frac{\ket{\uparrow\downarrow}-\ket{\downarrow\uparrow}}{2}\otimes\cdots\otimes\frac{\ket{\uparrow\downarrow}-\ket{\downarrow\uparrow}}{2}\,,
\ee 
which is the ground state of the Majumdar-Ghosh Hamiltonian
\be
H_0=\frac{J}{4}\sum_{\ell=1}^{L}\vec \sigma_{\ell}\cdot\vec \sigma_{\ell+1}+\frac{1}{2}\vec \sigma_{\ell}\cdot\vec \sigma_{\ell+2}\,.
\ee
Despite the model being interacting, \eref{eq:MG} is a two-site shift invariant Slater determinant, whose correlation matrix has the following symbol 
\be
\Gamma_{\rm MG}(k)=\sigma^x\otimes\sigma^y\, .
\ee
The initial conditions for $\{y_\alpha(k)\}$  \eref{eq:defy} are determined by the GGE correlation matrix for $g=0$. They can be obtained by expanding $\Gamma_{\rm MG}(k)$ in the base of the symbols  \eref{eq:C8} of the conserved charges of $H_{XY}$ (the remaining space is zeroed by the time evolution, as $1\ll Jt$, \emph{cf}. \sref{ss:t-dGGE})
\be
\Gamma(k;0)=\sum_{i=1}^8 \frac{\tr[\Gamma_{\rm MG}(k)\mathcal Q_i(k)]}{\tr[(\mathcal Q_i(k))^2]}\mathcal Q_i(k) \, .
\ee
We find
\be\label{eq:Gamma0}
\fl\quad \Gamma(k;0)=\frac{1+\cos k}{1+\cos k+\gamma^2(1-\cos k )}\mathcal Q_1(k)-\gamma \frac{1-\cos k}{1+\cos k+\gamma^2(1-\cos k )}\mathcal Q_5(k)\, .
\ee
The only nonzero initial conditions are given by (\emph{cf.} \eref{eq:defy})
\bea
y_1(k,0)=\frac{1+\cos k}{4}\, \nn
y_5(k,0)=-\gamma \frac{1-\cos k}{4}\, .\label{eq:intialconditionsMG2}
\eea
The initial state is reflection symmetric about a bond, so we can use the reduced system \eref{eq:simplesystem}.
Since  $y_5$ is the only nonzero initial condition that appears in \eref{eq:simplesystem}, for $\gamma=0$ (see previous section) the solution of the system of equations is independent of time, namely the pre-relaxation limit is trivial.

It is easy to see that also for $h=0$  system \eref{eq:simplesystem}\eref{eq:intialconditionsMG2} 
has a stationary solution. We therefore assume $\gamma, h\neq 0$. 
Since $c_j(k;y_j)$ are linear homogeneous functions of $y_j$, the magnetic field $h$ enters into the equations essentially as a scale factor. We rescale the variables as follows
\be
\label{eq:rescaling}
\tau=2 h T=2 h g t\qquad \epsilon=\frac{\gamma}{2h }\qquad z_j=\frac{2 y_j}{\gamma}\qquad \gamma_j=\frac{2 c_j}{\gamma}\, .
\ee
From \eref{eq:simplesystem} we then obtain
\bea\label{eq:linears}
\fl\qquad\qquad\partial_\tau z_2(k,\tau)&=&-\epsilon\, \varepsilon_k^2\gamma_5(k,\tau) z_6(k,\tau)+\epsilon\cos^2\frac{k}{2}\gamma_6(k,\tau) z_5(k,\tau)\nn
\fl\qquad\qquad\partial_\tau z_5(k,\tau)&=&-\epsilon\gamma_6(k,\tau) z_2(k,\tau)+(1+\epsilon\, \varepsilon_k^2\gamma_2(k,\tau))z_6(k,\tau)\nn
\fl\qquad\qquad\partial_\tau z_6(k,\tau)&=&\epsilon\gamma_5(k,\tau) z_2(k,\tau)-(1+\epsilon\, \varepsilon_k^2\gamma_2(k,\tau))\frac{\cos^2\frac{k}{2}}{ \varepsilon_k^2}z_5(k,\tau)\, ,
\eea
with the initial conditions
\be
z_2(k,0)=z_6(k,0)=0\qquad z_5(k,0)=-\sin^2\frac{k}{2}\, .
\ee
For generic $\epsilon$ the system of equations is not exactly solvable, but the limit of small $\epsilon$ allows a linear approximation. For not too large rescaled times 
(we'll come back to this point later)
the terms that are multiplied by $\epsilon$ in the last two equations can be neglected, while the functions that appear on the right hand side of the first equation can be computed at $O(\epsilon^0)$.  

For $z_5$ and $z_6$ we obtain the simple solution
\bea
\label{eq:linearisedsolution}
\fl\qquad\qquad z_5(k,\tau)\approx -\sin^2\frac{k}{2}\cos\Bigl(\frac{\cos \frac{k}{2}}{\varepsilon_k} \tau\Bigr)\nn
\fl\qquad\qquad z_6(k,\tau)\approx \sin^2\frac{k}{2}\frac{\cos \frac{k}{2}}{\varepsilon_k} \sin\Bigl(\frac{\cos \frac{k}{2}}{\varepsilon_k} \tau\Bigr)\, ,
\eea
while $z_2$ is a slightly more complicated function that involves integrals over the momentum of $z_{5,6}$, namely
\be
\label{eq:linearisedsolution2}
\fl\quad\quad z_2(k,\tau)\approx \epsilon\frac{\sin k\sin\frac{k}{2}}{2}  \int_{-\pi}^{\pi}\frac{\textrm{d}p}{2\pi}\sum_{\sigma=\pm}g_\sigma(k,p)f(\frac{\cos \frac{k}{2}}{\varepsilon_k}+\sigma\frac{\cos \frac{p}{2}}{\varepsilon_p};\tau)\,.
\ee
Here we defined
\be
\fl\quad\quad g_{\sigma}(k,p)\equiv\frac{\sin^2\frac{p}{2}}{\varepsilon_p}\Bigl[\frac{\cos^2\frac{k}{2}\cos^2\frac{p}{2}-\gamma^2\sin^2\frac{k}{2} \sin^2\frac{p}{2}}{\varepsilon_k\varepsilon_p}+\sigma\cos\frac{k}{2} \cos \frac{p}{2}\Bigr]\,,
\ee
\be
\fl\quad\quad f(x;\tau)\equiv\frac{1-\cos(x \tau)}{x}\, .
\ee
For small $\epsilon$ and given $\gamma$, $z_2(k,\tau)$ relaxes to the stationary value  
\be\label{eq:z2}
 \label{eq:stationary}
\fl\qquad\qquad z_2(k,\infty)=\frac{\epsilon\sin^2 k}{8 \gamma^2} \Bigl[1- 3\gamma^2+(2+4\gamma^2)\cos k+(1-\gamma^2)\cos2k\Bigr]\,.
\ee
Let us now estimate the time window in which the linear approximation is applicable. 
From \eref{eq:linearisedsolution} it follows that $\gamma_5$ and $\gamma_6$ decay to zero as $\tau^{-\frac{3}{2}}$. Instead, since $z_2$ approaches a nonzero stationary value, $\gamma_2$ is of the same order of $z_2$. This means that, as the time increases, the first term on the right hand side of the last two equations of \eref{eq:linears} becomes more and more negligible with respect to the other term multiplied by $\epsilon$. By neglecting the former we obtain essentially the same solution \eref{eq:linearisedsolution} as before, with the replacement   
\be
\tau\rightarrow\tau+\epsilon\ \varepsilon_k^2\int_0^\tau\mathrm d s \gamma_2(k,s)=\tau\left(1+\epsilon\ \varepsilon_k^2 \gamma_2(k,\infty)\right)+\dots\, .
\ee
Being $\gamma_2\sim O(\epsilon)$, after a rescaled time $\tau\sim \frac{1}{\epsilon^2}$, the correction to $z_{5,6}$ becomes comparable with the function itself. Assuming that the relevant part of the time evolution occurs within this time scale, the linear approximation is justified only if $|z_2|\ll 1$ (and $\epsilon\ll 1$).
For $\gamma<1/2$ we find $|z_2(k,\infty)|< \frac{\epsilon}{6\gamma^2}$, so we obtain the consistency condition
\be\label{eq:limitval}
\frac{1}{12h}\ll \gamma\ll 2h\, .
\ee
Figures \ref{f:q5} and \ref{f:q2} report a comparison between the solution of the linearised problem and the full numerical solution of system \eref{eq:simplesystem} for a set of parameters fulfilling \eref{eq:limitval}.

From the expressions \eref{eq:linearisedsolution} of $z_5(k,\tau), z_6(k,\tau)$ and \eref{eq:z2} of  $z_2(k,\tau)$, we can directly compute the time evolution of the expectation value of any local observable in the pre-relaxation limit. Indeed Corollary \ref{C:2} allows us to apply the Wick theorem at any time (in the limit under examination) and the correlation matrix is given by \eref{eq:correlmatrix}.

\begin{figure}[tbp]
\begin{center}
\includegraphics[width=0.8\textwidth]{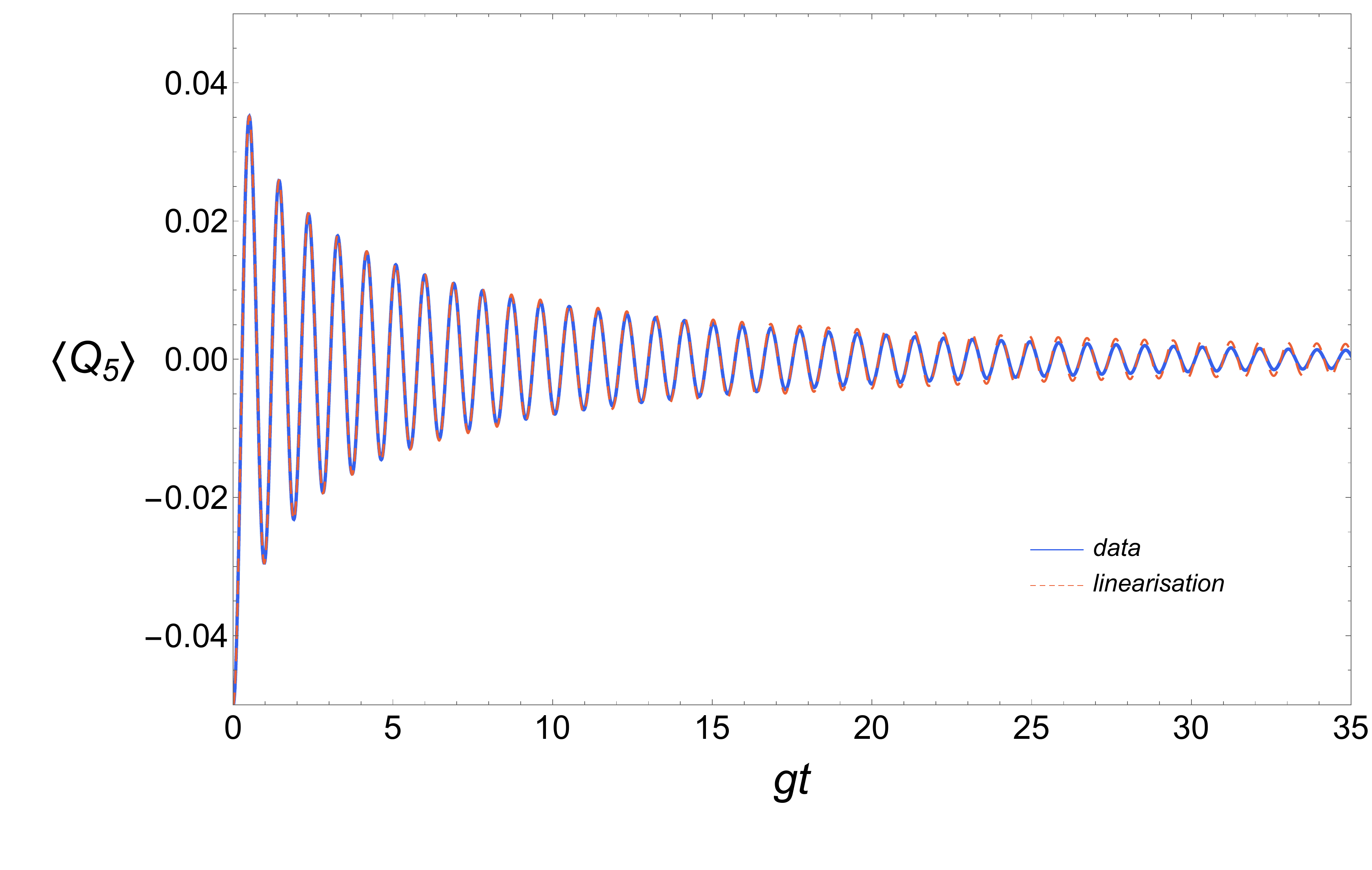}
\end{center}\caption{The time evolution of $\braket{Q_5}_t=\gamma\int\frac{{\rm d}k}{4\pi} z_5(k,t)$, where $Q_5$ is the conserved charge of $H_{XY}$ corresponding to the symbol $\mathcal Q_5(k)$ \eref{eq:C8} for a time evolution starting from the state  $\ket{MG}$ \eref{eq:MG}. The parameters of the Hamiltonian \eref{eq:XYYU} are $\gamma=0.2$, $h=3.5$ and $U=-1$, hence $\epsilon\approx0.029$ (\emph{cf.} \eref{eq:rescaling}) and  $\gamma$ fulfils the consistency condition \eref{eq:limitval} of the linearisation procedure. The analytical prediction of \eref{eq:linearisedsolution} (red dashed line) is in excellent agreement with the numerical data (blue line).}\label{f:q5}
\end{figure}

\begin{figure}[tbp]
\begin{center}
\includegraphics[width=0.8\textwidth]{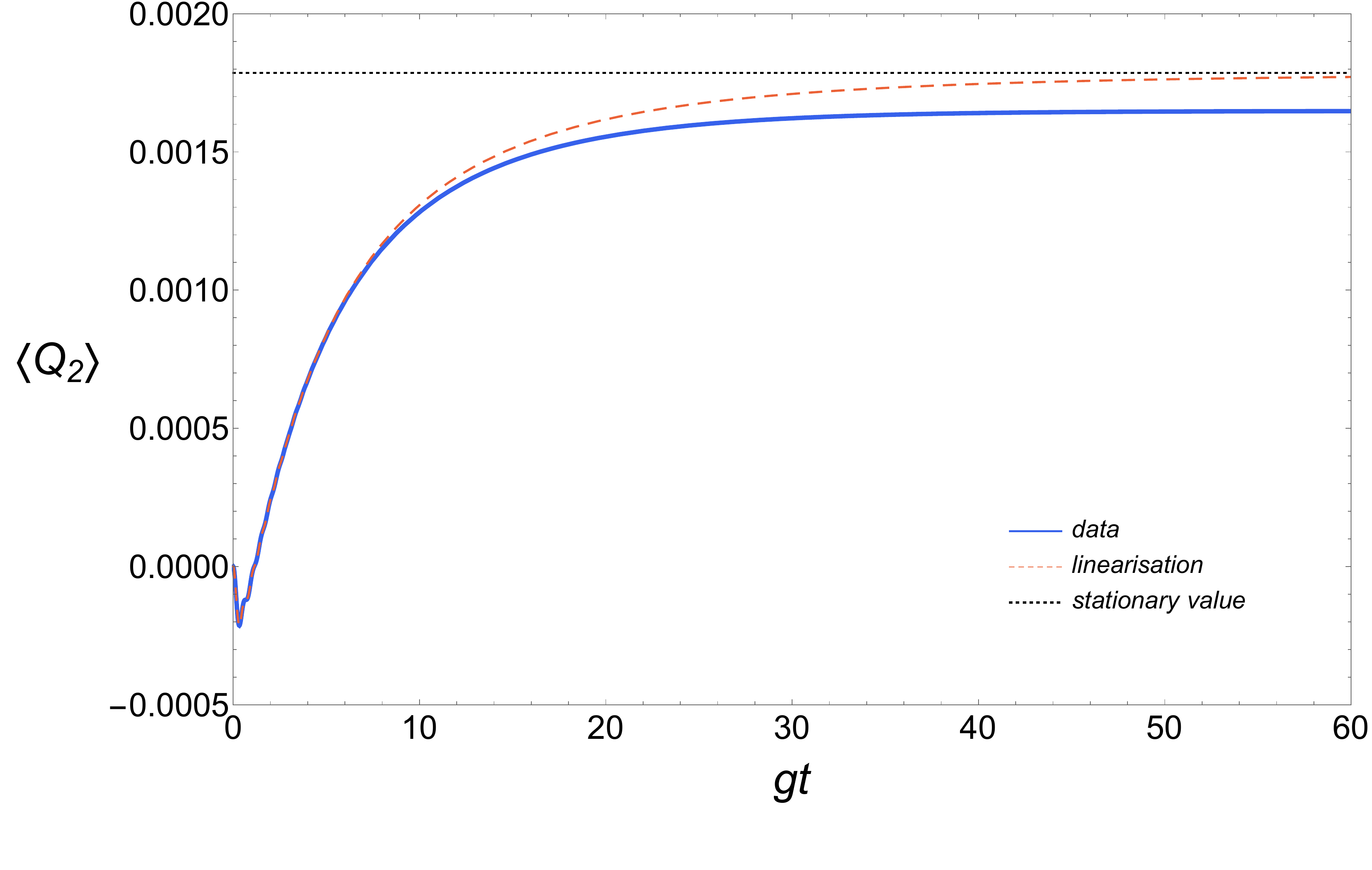}
\end{center}\caption{The time evolution of $\braket{Q_2}_t=\gamma\int\frac{{\rm d}k}{4\pi} z_2(k,t)$, where $Q_2$ is the conserved charge of $H_{XY}$ corresponding to the symbol $\mathcal Q_2(k)$ \eref{eq:C8} for a time evolution starting from the state  $\ket{MG}$ \eref{eq:MG}. The parameters of the Hamiltonian \eref{eq:XYYU} are $\gamma=0.2$, $h=3.5$ and $U=-1$, hence $\epsilon\approx0.029$ (\emph{cf.} \eref{eq:rescaling}) and  $\gamma$ fulfils the consistency condition \eref{eq:limitval} of the linearisation procedure. The analytical prediction of \eref{eq:linearisedsolution} (red dashed line) is in fairly good agreement with the numerical data (blue line). The stationary value produced by the solution of the linearised problem (black dotted line) is $\braket{Q_2} =\frac{(1- 5\gamma^2)}{128h} $ (\emph{cf.} \eref{eq:stationary}).}\label{f:q2}
\end{figure}

Any integral involving $z_5$ and $z_6$ approaches zero and $z_2$ becomes independent of time even if not integrated. Therefore, in the limit \eref{eq:limitval} and for large times the expectation value of any local observable relaxes to a stationary value that can be described by the correlation matrix with symbol
\be\label{eq:inflin}
\lim_{\tau\rightarrow\infty^{(*)}}\Gamma(k;\tau)= \frac{\gamma z_1(k,\infty)}{\varepsilon_k^2}\mathcal Q_1(k)+\frac{\gamma z_2(k,\infty)}{\varepsilon_k^2\cos^2(k/2)}\mathcal Q_2(k)\, ,
\ee
where the infinite time limit $\tau\rightarrow\infty^{(*)}$ must be understood within the limits of validity of the linear approximation.

We point out that one-site shift invariance is restored, indeed  the only contributions to the correlation matrix at infinite times arise from $\mathcal Q_1(k)$ and $\mathcal Q_2(k)$, which are symbols of one-site shift invariant operators. The manifestly one-site shift invariant expression of the correlation matrix in the limit \eref{eq:inflin} reads
\be
\fl
\qquad\qquad\lim_{\tau\rightarrow\infty^{(*)}}\Braket{\left(\begin{array}{c}
a_{n}^x \\
a_{n}^y 
\end{array}\right)
\left(\begin{array}{cccc}
a_{\ell}^x&a_{\ell}^y
\end{array}\right)}= \delta_{\ell n} \1_2+\int_{-\pi}^\pi\frac{\mathrm d k}{2\pi} e^{-i(n-\ell)k}\,\Gamma^{(1)}(k)\, ,
\ee
with
\bea
\fl \Gamma^{(1)}(k)=\frac{2h(1+\cos 2k) +\sin^2 k\cos k(1- 3\gamma^2+(2+4\gamma^2)\cos 2k+(1-\gamma^2)\cos 4k)}{2 h(1+\cos 2 k+\gamma^2(1-\cos 2k))}\nn
\qquad\qquad\qquad\qquad\qquad\qquad\times(\cos k\  \sigma^y-\gamma\sin k\ \sigma^x)\, .
\eea

\subsection{Remarks on the late time dynamics} We notice that, despite one-site shift invariance being restored in \eref{eq:inflin}, the asymptotic value is not given by the average over a shift of the expectation value of the operator in the GGE of the unperturbed model. Indeed the one-site shift average of \eref{eq:Gamma0} is proportional to $\mathcal Q_1(k)$ (\emph{cf}.  \eref{eq:transf}) but the symbol of the large time correlation matrix \eref{eq:inflin} has also a term proportional to $\mathcal Q_2(k)$.
Consequently, the shift-averaged stationary values can \emph{not} be recovered from those in the limit of small perturbation $g\rightarrow 0$. For example we have
\bea
\lim_{1\ll t\ll \frac{1}{g}}\Braket{\frac{\sigma_{2\ell-1}^z+\sigma_{2\ell}^z}{2}}=O(g)\label{eq:zerom}\\
\lim_{2 h g t\rightarrow\infty^{(*)}}\Braket{\sigma_{\ell}^z}
=-\frac{\gamma^2}{16 h}\frac{3+\gamma}{(1+\gamma)^3}+O(g)\label{eq:nonzerom}\, ,
\eea
where we highlighted that there are $O(g)$ corrections. Besides this particular quench in a non-integrable model, similar issues arise also in the integrable case, where it is generally believed that at infinite time after the quench the expectation values can be computed in a GGE constructed with the (quasi-)local conservation laws of the model. 
In the rest of the section we propose a reasoning that relates this kind of deviations to possible atypical properties of the model. 

We notice that at times $1\ll t\ll\frac{1}{g}$, shift-symmetrised expectation values of quadratic operators, \emph{e.g.}  \eref{eq:zerom}, can be obtained from the GGE of the unperturbed model constructed with only the local translation invariant conservation laws. This is because the limit $1\ll t\ll\frac{1}{g}$ with $g\rightarrow 0$ can be described by the GGE of the unperturbed model, but 
the conservation laws that break translation invariance are odd under a shift by one site, so they can in fact be neglected (this equivalence breaks down for operators that consist of the product of more than two Jordan-Wigner fermions). 

For nonzero $g$ non-abelian integrability is supposed to break down and the relevant charges are generally assumed to be one-site shift invariant and in involution with one another. 
We now speculate about the stationary state in the limit of small $g$ if the perturbation does not break integrability. 
Let the (quasi-)local conservation laws of the interacting (integrable) model be in a smooth one-to-one correspondence with the local one-site shift invariant conservation laws for $g=0$. In the limit of small $g$, the stationary state should locally approach the GGE constructed with the local one-site shift invariant conservation laws of the unperturbed model.  
The expectation values of shift-symmetrised quadratic operators  at times $1\ll t\ll\frac{1}{g}$ are compatible with such a one-site shift invariant GGE. However, discrepancies like that between \eref{eq:zerom} and  \eref{eq:nonzerom} show that at larger times there is a time window in which the expectation values approach a different value. 
Our assumption of regularity of the conservation laws as a function of $g$ requires that at even larger times the expectation values should eventually relax to the same values they had in the earliest plateau.
This is clearly possible, but an infinite number of operators displaying a similar behaviour is rather surprising. This makes us wonder whether the hypothesis of regularity could break down, that is to say there are (quasi-)local conservation laws for $g\neq 0$ that become nonlocal when $g=0$ (\emph{e.g.} their typical range could be singular as $g\rightarrow 0$) or, \emph{vice versa}, some one-site shift invariant conservation laws of the unperturbed Hamiltonian do not have analogues at nonzero $g$.

From this point of view, discrepancies like that between \eref{eq:zerom} and \eref{eq:nonzerom}  could be indications that in the XYZ model there might be quasi-local conservation laws that do not behave well in the limit $g\rightarrow 0$.
This scenario becomes even more plausible if one takes into account the issues in the construction of the GGE in XXZ spin-$\frac{1}{2}$ chains that were recently pointed out \cite{QAXXZ-14,PMWKZT-14,A:bound}.

\subsection{Summary}
In this section we considered the pre-relaxation limit in an XY spin-$\frac{1}{2}$ chain perturbed by interacting operators. 

As an example of pre-relaxation in integrable models we investigated the XXZ spin-$\frac{1}{2}$ chain. The model has $U(1)$ symmetry, manifested by the conservation of the spin in the $z$-direction. Consequently, the external magnetic field (along $z$) generally produces oscillatory behaviour in local observables. In fact we showed that there is a specific (generally nonzero) value of the magnetic field for which the time evolution in the pre-relaxation limit may end up in a second plateau.

For non-integrable perturbations we exhibited examples of the typical time evolution of the expectation values of local observables in the pre-relaxation limit. We also described a linearisation scheme that allowed us to predict the time evolution of the dimer product state \eref{eq:MG} when the Hamiltonian parameters satisfy particular conditions. In that limit we found local relaxation to a one-site shift invariant state. In order to characterise the crossover between persistent oscillatory behaviour and relaxation, one should go beyond that linearisation scheme.

In the next section we will consider the model \eref{eq:TFICNI}, which has several dynamical aspects in common with the Hamiltonian \eref{eq:Hrel}, especially in the integrable case $\gamma=U=0$ considered in \Sref{ss:int}.
Specifically, we will present a method that resembles the linearisation considered in this section but that allows us to extract some information about the `quench dephasing diagram' of the model.

\section{An exactly solvable model }\label{s:Ising} %

The results of Section \ref{s:int} are a compelling motivation for the study of nonlocal Hamiltonians of the form \eref{eq:opform}. 
We now go beyond that rigid derivation: we skip the formal steps that relate \eref{eq:opform} to a pre-relaxation limit and start off directly with a Hamiltonian of the form \eref{eq:opform}.
We then query whether such models with (apparently) non-integrable long-range interactions could display thermal-like behaviour at late times after a quench. Specifically, we consider the Hamiltonian 
\be\label{eq:TFICNI}
H(\tilde g,\lambda)=-\sum_\ell^L(\sigma_\ell^x\sigma_{\ell+1}^x+\tilde g\sigma_\ell^z)+\frac{\lambda}{L}\Bigl(\sum_\ell^L\sigma_\ell^z\Bigr)^2\, .
\ee
This  has been recently proposed as a convenient model to investigate pre-thermalisation issues~\cite{IsingNI}. In fact, in order to recover some temporal cluster decomposition properties, the authors of \cite{IsingNI} considered a slightly modified version of \eref{eq:TFICNI}, where, in the term proportional to $\lambda$, $\sum_\ell^L\sigma_\ell^z$ was replaced by $\sum_\ell^L\sigma_\ell^z-\overline{\sum_\ell^L\sigma_\ell^z}$, the latter being its time average for $\lambda=0$.  The time average is a simple quadratic quasi-local operator (\emph{cf.} the first equation of \eref{eq:HQ8}), so our formalism could be readily applied. However, for the sake of simplicity,
we consider \eref{eq:opform} and show later that a redefinition of $\tilde g$ is sufficient to recover the results shown in \cite{IsingNI}.

From Corollary \ref{C:2}, in the thermodynamic limit $L\rightarrow\infty$ the time evolution under \eref{eq:TFICNI} is locally equivalent to the time evolution under the time-dependent mean-field Hamiltonian
\be\label{eq:HMFIS}
H_{\rm MF}^{\Psi_0}(t)=-\sum_\ell(\sigma_\ell^x\sigma_{\ell+1}^x+h(t)\sigma_\ell^z)\, ,
\ee
with $h(t)$ the solution of the self-consistent equation
\be\label{eq:hm}
\fl\qquad h(t)=\tilde g-2\lambda \braket{\Psi_0|{\rm T^\dag}\exp\Bigl(i\int_0^t\mathrm d\tau H_{\rm MF}^{\Psi_0}(\tau)\Bigr)\sigma_\ell^z\ {\rm T}\exp\Bigl(-i\int_0^t\mathrm d\tau H_{\rm MF}^{\Psi_0}(\tau\Bigr)|\Psi_0}\, .
\ee
Here we used translation invariance to replace $\frac{1}{L}\sum_\ell\sigma_\ell^z$ by the local operator $\sigma_\ell^z$ (which removes the nasty dependence on the chain length $L$). 
For Slater determinant initial states the expectation value can be conveniently written in terms of the symbol $\Gamma(k)$ of the initial correlation matrix (see Appendix~\ref{a:free}). In particular we find
\be\label{eq:ht}
h(t)=\tilde g-\lambda\int_{-\pi}^\pi\frac{\mathrm d k}{2\pi} y_k(t)\, ,
\ee
with
\be
y_k(t)=\tr\Bigl[U_{{\mathcal H}_{\rm MF}}(k,t)\Gamma(k)U^\dag_{{\mathcal H}_{\rm MF}}(k,t) \sigma^y\Bigr]\, ,
\ee
\be
U_{{\mathcal H}_{\rm MF}}(k,t)={\rm T}\exp\Bigl[2i\int_0^t\mathrm d\tau \sigma^y(h(\tau)-e^{i k \sigma^z})\Bigr]\,.
\ee
One can easily verify that $y_k$ is the solution of 
\be\label{eq:system}
\begin{array}{l}
y_k^{''}(t)=4(h(t)-\cos k)\phi_k(t)+16(h(t)\cos k  -1)y_k(t)\\
\phi_k^{'}(t)=-4h(t)y_k^{'}(t)\, ,
\end{array}
\ee
with
\be
\phi_k(t)=-4\tr\Bigl[U_{{\mathcal H}_{\rm MF}}(k,t)\Gamma(k)U^\dag_{{\mathcal H}_{\rm MF}}(k,t) \sigma^ye^{i k \sigma^z}\Bigr]\, .
\ee

Equations \eref{eq:ht} and \eref{eq:system} could be solved by discretising the momenta $k$; however working in real space is more transparent.  
We therefore introduce the (real space) Fourier coefficients 
\be\label{eq:Fcoeff}
\tilde y_n=\int_{-\pi}^\pi\frac{\mathrm d k}{2\pi}\cos(n k)y_k(t)\qquad \tilde\phi_n=\int_{-\pi}^\pi\frac{\mathrm d k}{2\pi}\cos(n k)\phi_k(t)\, ,
\ee
which are the expectation values of local operators with range $n$ and $n+1$ respectively.  
The system of equations \eref{eq:system} can then be recast as follows
\be\label{eq:systemF}
\fl\qquad\qquad\begin{array}{ll}
h=g-\lambda \tilde y_0\\
\tilde y_0''=4 h \tilde\phi_0-4\tilde\phi_{1}+16 h\tilde y_{1}-16\tilde y_{0}\\
\tilde y_n^{''}=4 h \tilde\phi_n-2(\tilde\phi_{n+1}+\tilde\phi_{n-1})+8 h (\tilde y_{n+1}+\tilde y_{n-1})-16\tilde y_{n}& (n>0)\\
\tilde \phi_n^{'}=-4h \tilde y_n'&  (n\geq 0)\, . \\
\end{array}
\ee
The similarity with the system of equations \eref{eq:YPHI} for the pre-relaxation  limit in XXZ spin chains is remarkable, although the meaning of the variables is  different.

For sufficiently large $n$ (in order to be outside of the (deformed) light cone), both $\tilde y_n$ and $\tilde\phi_n$ are small (in the noncritical case they are exponentially small in $n$). Thus, the error originated from truncating the system of equations to $n\leq N$ decays very fast to zero with $N$ and \eref{eq:systemF} can be conveniently reduced to a finite system of differential equations. This is the regularisation that we used in our numerical investigations\footnote{From a numerical point of view, this allows us to avoid integrating $y_k$ at each time step of the Runge-Kutta algorithm used for the resolution of \eref{eq:systemF}, at the cost of complicating the initial conditions. This is convenient because generally the number of time steps is much larger than $N$.}. 

The system of equations has at least one integral of motion, namely the energy per unit length $\varepsilon=\braket{\Psi_0|H|\Psi_0}/L$. This can be written as follows
\be\label{eq:energy}
\varepsilon=\frac{h^2-\tilde g^2}{4\lambda}-\frac{1}{8}\tilde\phi_0\, .
\ee
In addition, \eref{eq:TFICNI} has infinite noninteracting conservation laws that are odd under reflection symmetry \cite{F:pair,FE:RDM}. 
The Dzyaloshinskii-Moriya interaction
\be
H_{\rm D-M}=\sum_{\ell}\sigma_\ell^x\sigma_{\ell+1}^y-\sigma_\ell^y\sigma_{\ell+1}^x
\ee
is one of them. 
For generic initial states, this is sufficient to rule out thermalisation. 
However, we embrace the point of view of \cite{IsingNI} and wonder whether at infinite time after a quench from a reflection symmetric state some form of thermalisation arises. 

\subsection{To relax or not to relax}\label{ss:RONR} %

\begin{figure}[tbp]
\begin{center}
\includegraphics[width=0.8\textwidth]{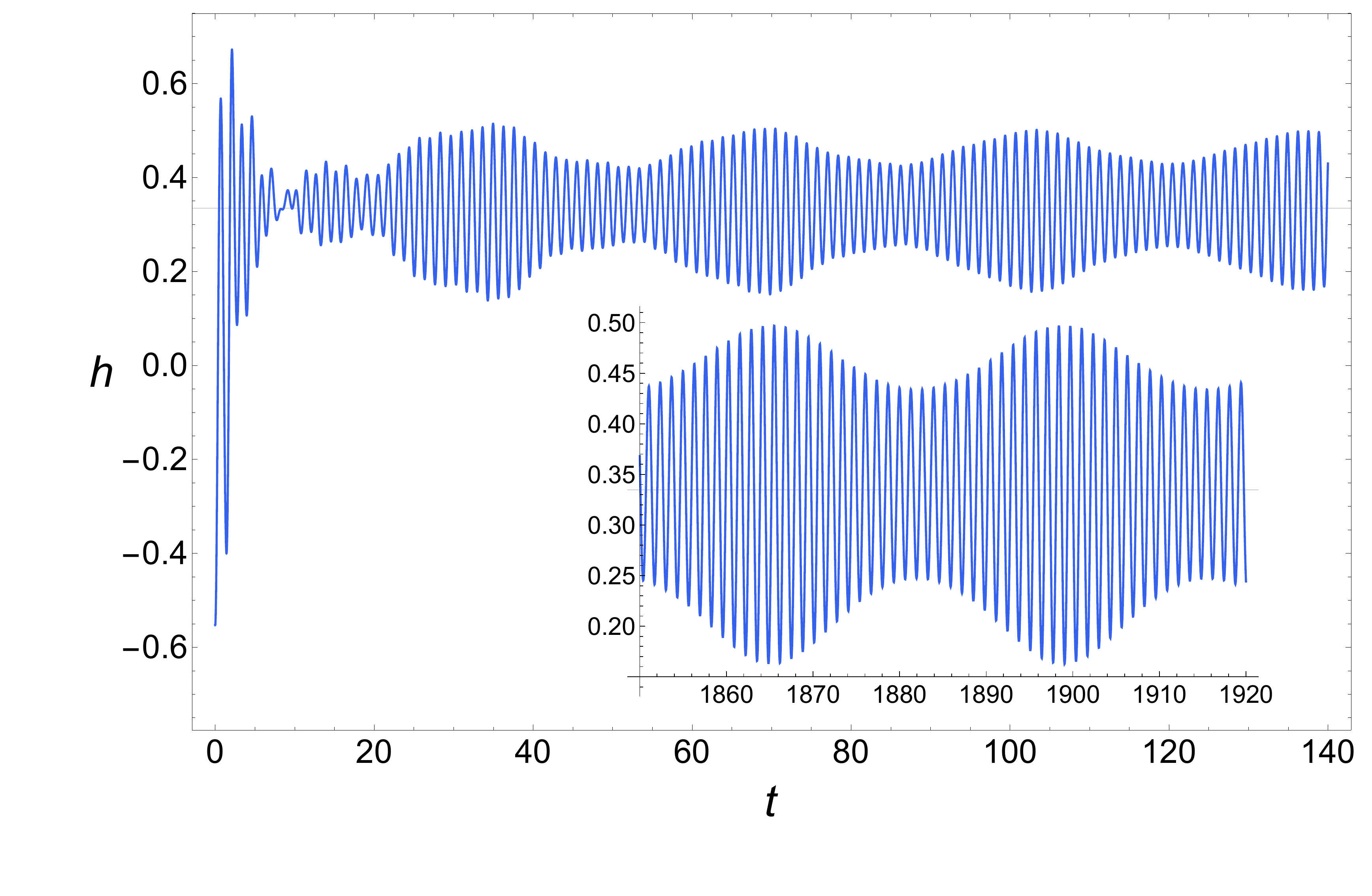}
\end{center}\caption{The time-dependent magnetic field $h(t)$ of the mean-field Hamiltonian \eref{eq:HMFIS} (essentially, the magnetisation along $z$, \emph{cf}. \eref{eq:hm}) after a quench from  the ground state of the TFIC \eref{eq:TFIC}, with  $g_0=1.5$, and Hamiltonian $H(\tilde g,\lambda)$ \eref{eq:TFICNI}, with $\tilde g=0.5$ and $\lambda=0.6$. 
The system does not seem to relax, indeed also at very large times (inset) there are (rather regular) persistent oscillations.
}\label{f:hosc}
\end{figure}

At late times the mean-field Hamiltonian \eref{eq:HMFIS} can result in two distinct behaviours: either the dynamics is governed by a (time independent) TFIC Hamiltonian (viz.  $\exists \lim_{t\rightarrow\infty} h(t)$), or there is no relaxation (viz. $\cancel \exists \lim_{t\rightarrow\infty} h(t)$).  

In the following we will focus on quantum quenches from the ground state of the TFIC Hamiltonian
\be\label{eq:TFIC}
H_0=-\sum_\ell(\sigma_\ell^x\sigma_{\ell+1}^x+g_0\sigma_\ell^z)\, .
\ee

A numerical analysis suggests that for generic values of $g_0$, $\tilde g$ and $\lambda$, the system of equation \eref{eq:system} does not always describe a relaxation process. We indeed found cases in which $h(t)$ oscillates with almost constant amplitude (see Figure \ref{f:hosc}).    

If there are (finite) regions of the parameter space associated with relaxation and regions that are instead characterised by persistent oscillations, some quantities shall behave non-analytically at the boundaries of the regions. 
For example, the \emph{relaxation parameter}
\be\label{eq:DeltahT}
\Delta h_T=\sqrt{\frac{1}{T}\int_{T}^{2T}\mathrm d \tau h^2(\tau)-\Bigl(\frac{1}{T}\int_{T}^{2T}\mathrm d \tau h(\tau)\Bigr)^2}
\ee
approaches zero as $T\rightarrow\infty $ in the regions of (local) relaxation.  On the other hand, if there are (sufficiently regular) persistent oscillations, $\Delta h_T$ remains nonzero for arbitrarily large $T$. 

\begin{figure}[tbp]
\begin{center}
\includegraphics[width=0.8\textwidth]{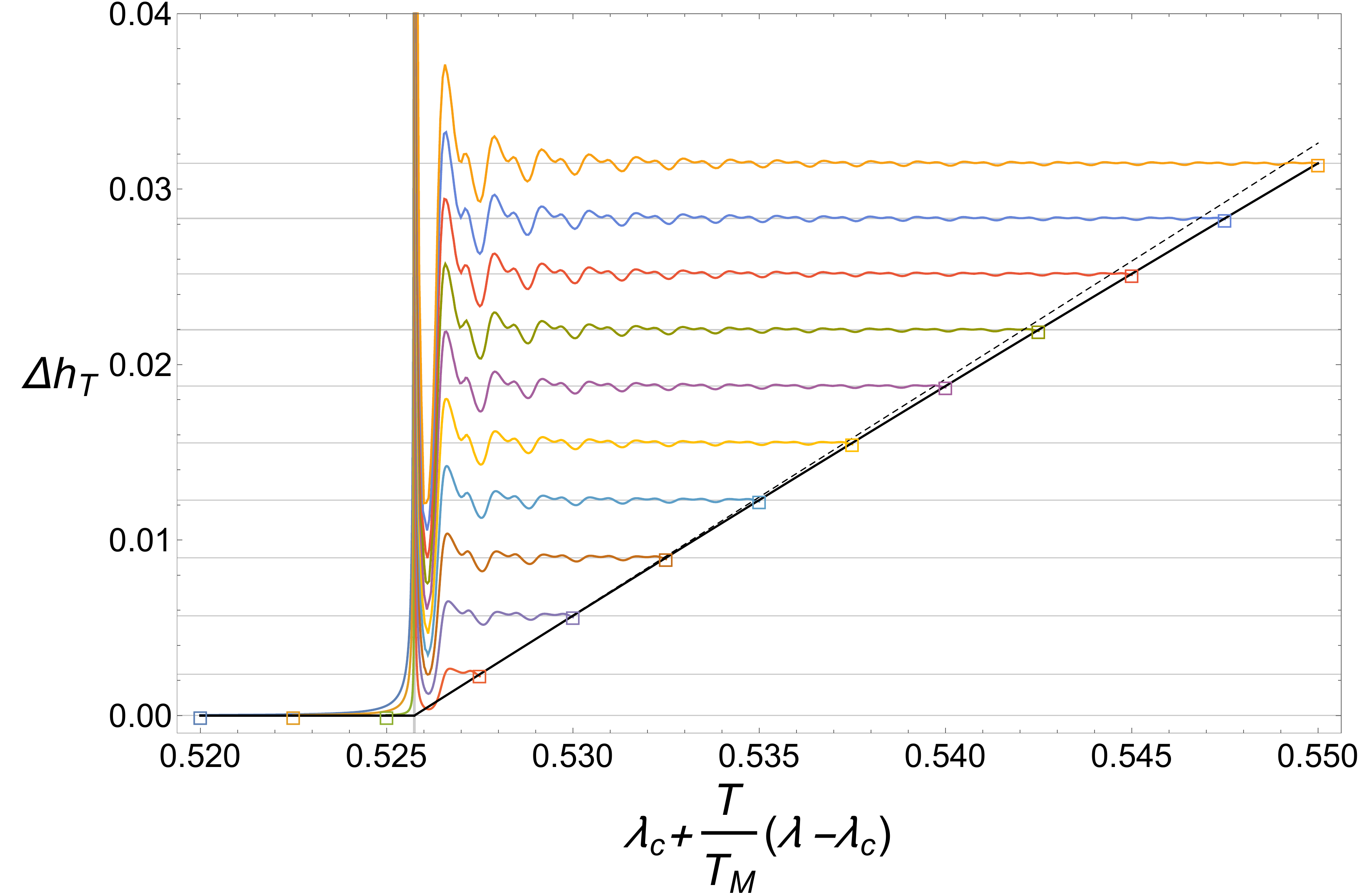}
\end{center}\caption{The relaxation parameter $\Delta h_T$ \eref{eq:DeltahT} for quenches from the ground state of the TFIC \eref{eq:TFIC}, with $g_0=1.5$, and Hamiltonian $H(\tilde g,\lambda)$ \eref{eq:TFICNI}, with $\tilde g=0.5$, as a function of the rescaled time $\lambda_c+\frac{T}{T_M}(\lambda-\lambda_c)$ for various values of $\lambda$. The maximal time considered for the time average is $T_M=1920$; at $T_M$ the abscissa is exactly equal to $\lambda$, which can therefore be identified with the abscissa of the open squares. The critical value $\lambda_c\approx 0.5257$ (grey vertical line, where all curves collapse) has been estimated by a parabolic fit (black solid line) of $\Delta h_T$ (open squares) for large $T$ as a function of $\lambda$. The dashed line is the linear term of the fit.}\label{f:relax}
\end{figure}

We  numerically analysed the region of the parameter space in which there is no relaxation (at least apparently). This generally happens for sufficiently large $\lambda$ (see Figures \ref{f:diagram} and \ref{f:hosc}). In the vicinity of (a numerical estimation of) $\lambda_c$, $\Delta h_T$ is nicely fitted by a line (\emph{cf.} Figure \ref{f:relax})
\be
\Delta h_T= \alpha (\lambda-\lambda_c)+O((\lambda-\lambda_c)^2) \qquad\lambda>\lambda_c\, .
\ee
Since $\Delta h_T$ is positive, linear behaviour is indicative of discontinuous derivative at $\lambda=\lambda_c$. Indeed, for $\lambda<\lambda_c$, $\Delta h_T$ is compatible with zero (see \emph{e.g.} the three solutions with $\lambda<\lambda_c$ in Figure \ref{f:relax}).  

\subsubsection{Small quench.}  %
In the case $g_0= h(0)$ the solution of \eref{eq:systemF} is independent of time (as a consequence of Corollary \ref{C:3}). This can be turned into a condition on the parameter $\tilde g$ of the Hamiltonian  as follows 
\be \label{eq:noquench1}
\tilde g=\bar g(g_0,\lambda)\, ,
\ee
where we defined
\be\label{eq:noquench}
\bar g(g_0,\lambda)=g_0+2\lambda\int_{-\pi}^\pi\frac{\mathrm d k}{2\pi}\frac{g_0-\cos k}{\sqrt{1+g_0^2-2 g_0\cos k}}\, .
\ee  
The initial state $\ket{\psi_{g_0}}$  corresponding to this quench is then an effective eigenstate of $H(\tilde g,\lambda)$ \eref{eq:TFICNI}.
In addition, in Appendix \ref{a:GS} we show that one of the solutions of \eref{eq:noquench1} corresponds to the state  $\ket{\psi_{g_0}}$ with minimal energy among Slater determinants. As a matter of fact, the numerical analysis indicates that $\ket{\psi_{g_0}}$ is the true ground state of \eref{eq:TFICNI}.  We can therefore use $\ket{\psi_{g_0}}$ as a reference state to define the limit of \emph{small quench}.

In this limit, the transition relaxation/no-relaxation can be understood more clearly. Indeed, choosing the parameters such that \eref{eq:noquench1} is approximately satisfied,   both $|\tilde y_n(t)-\tilde y_n(0)| $ and $|\tilde \phi_n(t)-\tilde \phi_n(0)| $ turn out to be small at any time. We can therefore linearise the system of equations \eref{eq:systemF}  isolating a time independent contribution from $\tilde y_n$ and $\tilde\phi_n$:
\be
\begin{array}{l}
h=g_0+(\tilde g- \bar g(g_0,\lambda)- \lambda\delta y_0)\\
\tilde y_n=\bar y_n+\delta y_n\\
\tilde \phi_n=\bar \phi_n+\delta \phi_n\, ,\\
\end{array}
\ee
where variables with a bar on top are expectation values calculated in $\ket{\psi_{g_0}}$. 
We then obtain
\be\label{eq:linsys}
\fl\qquad\qquad \delta y_n''\approx -16(1+g_0^2)\delta y_n+16 g_0(\delta  y_{n+1}+\delta  y_{n-1})-\lambda  \bar v_n\delta y_0+(\tilde g-\bar g)  \bar v_n\, ,
\ee
where 
\be
\bar v_n=4(\bar\phi_n+2\bar y_{n-1}+2\bar y_{n+1})\, .
\ee
The system of equations \eref{eq:linsys} can be readily solved. 
Since for quenches from the ground states of TFIC Hamiltonians $\tilde y_n'(0)=0$, we find
\be\label{eq:sollin}
\delta y_n(t)\approx (\tilde g-\bar g)\sum_j k_j \Bigl[1-\cos(\sqrt{a_j}t)\Bigr] w_{n, j}
\ee
where $a_j$ and $w_{n, j}$ are the eigenvalues and the components of the (right) eigenvectors (at fixed $j$) of 
\be\label{eq:A}
A_{\ell n}=16(1+g_0^2)\delta_{\ell n}-16 g_0\delta_{|\ell- n|,1}-16g_0\delta_{\ell 0}\delta_{n 1}
+\lambda \bar v_\ell \delta_{n 0}
\ee
and $k_j$ are given by
\be\label{eq:kappa}
\vec k=(AW)^{-1}\vec{\bar v}\qquad ([\vec k]_j=k_j\, ,\quad [W]_{nj}=w_{n,j}\, ,\quad [\vec{\bar v}]_j= \bar v_j)\, .
\ee
For $\lambda=0$, $A$ can be diagonalised in momentum space; 
the eigenvalues are given by
\be\label{eq:contsp}
a_j=16(1+g_0^2-2 g_0\cos k_j)\, ,
\ee
and, in the limit $N\rightarrow\infty$ ($N$ is our regularisation parameter), the spectrum becomes continuous (the density of $k_j$ is uniform in $(0,\pi)$) and the eigenvectors unbounded. It is not difficult to show that the rank-1 perturbation $\lambda \bar v_\ell \delta_{n 0}$ does not change the continuous part of the spectrum, which is still described by \eref{eq:contsp} and, as $N\rightarrow\infty$, the density of $k_j$ remains uniform. 

For sufficiently small $\lambda$ the spectrum is continuous. Given that $k_j w_{n,j}$ is a smooth function of $a_j$ for a fix $n$ (as we numerically checked),  we can apply the Riemann-Lebesgue lemma to extract the large time behaviour of \eref{eq:sollin}. We find that it relaxes to 
\be
\tilde y_n(\infty)\approx \bar y_n+(\tilde g-\bar g)\sum_j [A^{-1}]_{n j}\bar v_j\,.
\ee
The power-law corrections to this result can be obtained by stationary phase approximations.  

On the other hand, when $\pm \lambda$ exceeds some critical value $\pm \lambda_c^{\pm}$, $A$'s spectrum develops an isolated eigenvalue and the corresponding eigenvector is bounded, whereas the continuous part of the spectrum is still described by \eref{eq:contsp}. 
The main consequence is that the oscillations associated with the isolated eigenvalue do not cancel by dephasing mechanisms and local degrees of freedom keep oscillating at arbitrarily large times. 

\subsubsection{The bound state.} %

The isolated eigenvalue $a_0$ can be easily worked out in momentum space
\be\label{eq:boundw}
\fl\qquad\sum_nA_{\ell n}w_{n,0}=a_0 w_{\ell,0}\Rightarrow [a_0-16(1+g_0^2-2g_0\cos k)]w(k;0)=\lambda \bar v(k)w_{0,0}\, ,
\ee
with $w(k;j)=w_{0,j}+2\sum_{n>0}\cos(n k)w_{n,j}$ and $\bar v(k)=\bar v_0+2\sum_{n>0}\cos(n k)\bar v_n$. 
Isolated eigenvalues are such that \mbox{$\omega=\sqrt{a_0}/4$} is outside of the continuous band, given by the image of $\sqrt{1+g_0^2-2g_0\cos k}$. We are therefore allowed to write
\be
\frac{w(k;0)}{w_{0,0}}=\frac{\lambda}{16} \frac{\bar v(k)}{\omega^2-1-g_0^2+2g_0\cos k}\, ,
\ee
which, integrated over $k$, gives the condition
\be\label{eq:osc}
1=\frac{\lambda}{16}\int_{-\pi}^\pi\frac{\mathrm d k}{2\pi}\frac{\bar v(k)}{\omega^2-1-g_0^2+2g_0\cos k}\, .
\ee
The critical values $\lambda_c^{\pm}$ at the boundaries of the relaxation region correspond to the extrema of the continuous band where $\omega=1+\mathrm{sgn}(\lambda)|g_0|$. Using the explicit form of $\bar v_n$ we then find
\be\label{eq:lambdac}
\lambda_c^{\pm}=\left(\int_{-\pi}^\pi\frac{\mathrm d k}{2\pi}\frac{\sin^2 k}{(\mathrm{sgn}(\lambda)|g_0|+g_0\cos k)\sqrt{1+g_0^2-2g_0\cos k}}\right)^{-1}
\ee
and for $\lambda>\lambda_c^+$ or $\lambda<\lambda_c^-$ equation \eref{eq:osc} can be rewritten as
\be\label{eq:oscifreq}
\fl\qquad\qquad\int_{-\pi}^\pi\frac{\mathrm d k}{2\pi}\frac{\sin^2 k}{(\omega^2-1-g_0^2+2g_0\cos k)\sqrt{1+g_0^2-2g_0\cos k}}=\frac{1}{2\lambda}\, .
\ee
Taking the derivative of this expression with respect to $\omega$ we obtain 
\be
\frac{\rm{d}\lambda}{\rm{d}\omega}>0\qquad0\leq\omega< |1-|g_0||\, \lor\, \omega> 1+|g_0| \, ;
\ee
this fact, together with the observation that $\lambda>0$ if  $\omega> 1+|g_0|$ and $\lambda<0$ if  $0\leq\omega<|1-|g_0||$, implies that $\lambda$ is an increasing function of $\omega$  (in the allowed dominion), therefore it is injective. This means that \emph{there can not be more than one isolated eigenvalue} for a fixed $\lambda$.  

The right eigenvector corresponding to the isolated eigenvalue is given by
\be
\fl\qquad\qquad w_{n,0}\propto \int_{-\pi}^\pi\frac{\mathrm d k}{2\pi}\frac{\sin^2 k\cos(n k)}{(\omega^2-1-g_0^2+2g_0\cos k)\sqrt{1+g_0^2-2g_0\cos k}}\, ,
\ee
which decays exponentially with $n$, confirming that it is a bound state. 
Analogously, the left eigenvector $w^L_{0,n}$ reads as (the factor in front of the integral is due to $A$'s asymmetry)
\bea
w^L_{0,n}\propto (2-\delta_{n 0})\int_{-\pi}^\pi\frac{\mathrm d k}{2\pi}\frac{\cos(n k)}{\omega^2-1-g_0^2+2g_0\cos k}\nn
\qquad\qquad\qquad=\frac{(2-\delta_{n 0})(-1)^n e^{-n\theta}}{\sqrt{(1+g_0^2-\omega^2)^2-4 g_0^2}}\qquad (\lambda>0)\, ,
\eea
where
\be\label{eq:theta}
\theta=\mathrm{arccosh}\Bigl(\frac{\omega^2-1-g_0^2}{2g_0}\Bigr)\, .
\ee
Using \eref{eq:sollin} and \eref{eq:kappa} we can compute the entire contribution of the isolated eigenvalue to the solution of \eref{eq:linsys}:
\be\label{eq:iseig}
(\tilde g-\bar g)\frac{\vec {\bar v}\cdot \vec w_0^L\  \vec w_0}{\vec w_0^L\cdot \vec w_0}\frac{1-\cos(4\omega t)}{16\omega^2}\, .
\ee

\begin{figure}[tbp]
\begin{center}
\includegraphics[width=0.8\textwidth]{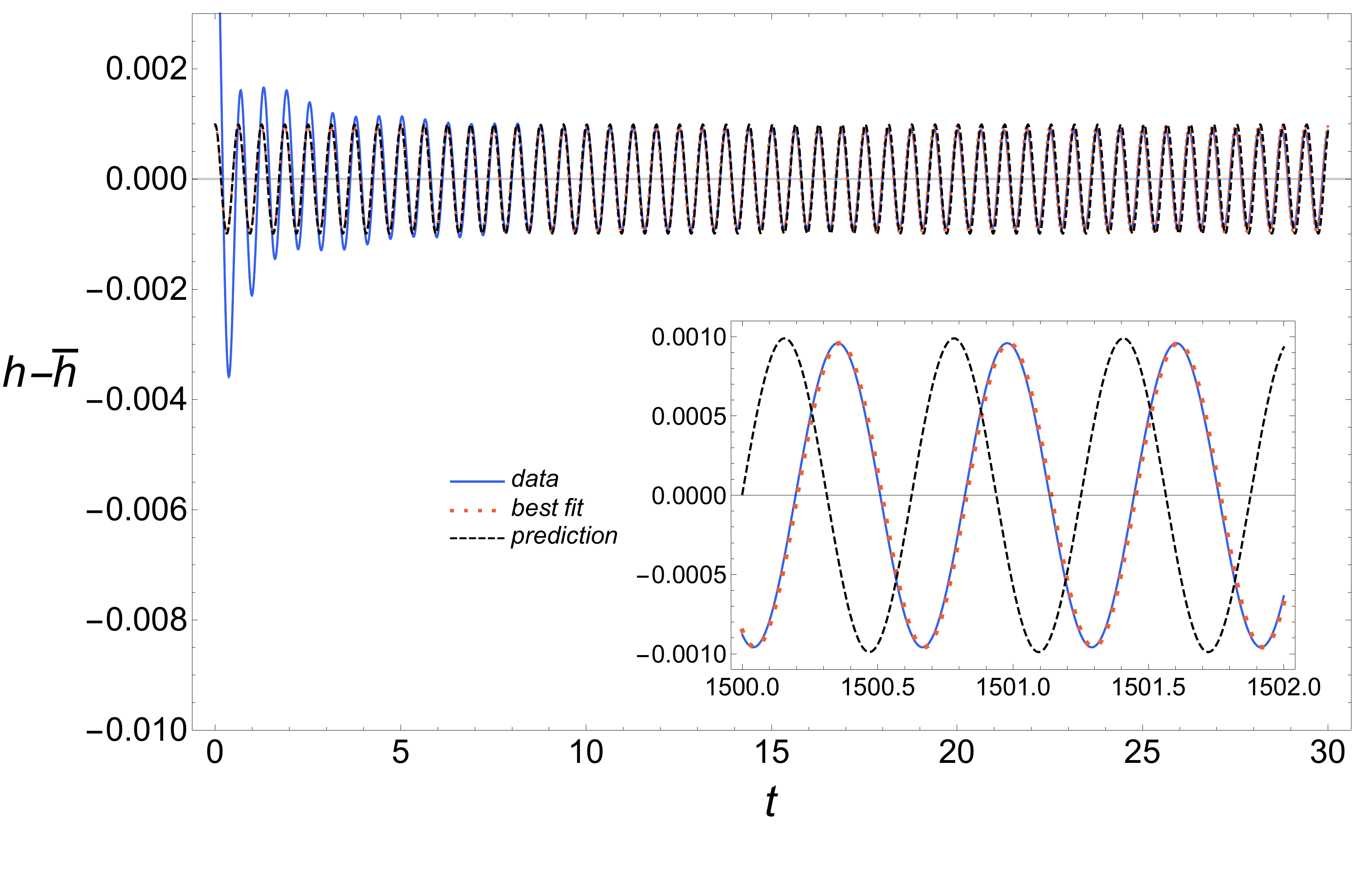}
\end{center}\caption{The time-dependent magnetic field $h(t)$ (minus its long time average $\bar h$) of the mean-field Hamiltonian \eref{eq:HMFIS} after a small quench from  the ground state of the TFIC \eref{eq:TFIC}, with  $g_0=1.5$, and Hamiltonian $H(\tilde g,\lambda)$ \eref{eq:TFICNI}, with $\tilde g=7.6513$ and $\lambda=3.5$. The agreement with a function of the form $h(t)\sim \bar h+c\cos(\mathcal E t+\varphi)$ becomes excellent at large times. 
The parameters of the fit are given by $c\approx 0.000958[0.000989]$, $\mathcal E \approx 10.0507[10.0353]$, $\varphi=0.0019[0]$, where in square brackets we reported the corresponding values based on the prediction \eref{eq:iseig}. Despite the parameters do not differ much from the (asymptotic) prediction, at large times (inset) the corrections to the frequency have conspicuous effects.
 }\label{f:smallquench}
\end{figure}

We stress that, within the linear approximation, the oscillation frequency is independent of $\tilde g$.
Figure \ref{f:smallquench} shows that the most important correction to \eref{eq:iseig} lies precisely in the frequency, essentially because it is multiplied by the time, which  has to be large for the subleading (time dependent) contributions to be negligible. However, in not-too-large time windows, the numerical data are in excellent agreement with \eref{eq:iseig} (the expression must be modified including a corrective phase shift if $(\tilde g -\bar g) J t$ is not small).

We point out that there is a third relevant point $\lambda_*<\lambda_c^-(<0)$ at which $\omega=0$:
\be
\fl\qquad\qquad\lambda_*=-\Bigl(\int_{-\pi}^\pi\frac{\mathrm d k}{\pi}\frac{\sin^2 k}{(1+g_0^2-2g_0\cos k)^{3/2}}\Bigr)^{-1}\, .
\ee
 For $\lambda<\lambda_*$, the isolated eigenvalue of $A$ becomes negative (\emph{i.e.} $\omega$ becomes purely imaginary). This would result in an exponential growth of \eref{eq:sollin}, which  after some finite time would no longer be consistent with the linearisation procedure. 
However, we point out that, for small quenches in which the energy is close to the ground state one, $\lambda$ is always larger than $\lambda_\ast$. 
Indeed, if we assume $\lambda\leq \lambda_*(g_0)$, 
it turns out that the ground state of \eref{eq:TFIC} is not equivalent to the ground state of $H(\bar  g(g_0,\lambda),\lambda)$, with $\bar g(g_0,\lambda)$ given by \eref{eq:noquench}. The latter state is instead associated with the ground state of the TFIC Hamiltonian \eref{eq:TFIC} with magnetic field $g_0'\neq g_0$ such that $\bar g(g_0',\lambda)=\bar g(g_0,\lambda)$ and $\lambda_\ast(g_0')<\lambda$. In the quench dephasing diagram of Figure \ref{f:diagram} we can indeed easily identify $\lambda_c^\pm$ (the `critical' curves for $\lambda$ positive and negative, respectively) but there is no trace of $\lambda_\ast$.

\paragraph{Interpretation.}   %

The bound state of the matrix $A$ may be put in relation with the existence of localised excitations in \eref{eq:TFICNI}. 
We emphasise that this is not an ab initio calculation but rather a physical picture that explains the observations. 

Since the time evolution of the expectation value of any local observable in $\ket{\psi_{g_0}}$ is stationary, we can assume that, to all intents and purposes, $\ket{\psi_{g_0}}$ is an eigenstate of $H(\bar g(g_0,\lambda),\lambda)$. 
We now consider the limit of small quench and assume $|\lambda|>|\lambda_c|$.
From \eref{eq:sollin} it follows that the projection on the bound state of $y_n''(t)+4i \omega y_n'(t)$ is proportional to an oscillating phase
\be\label{eq:guess}
\sum_n w^L_{0,n}(y_n''(t)+4i \omega y_n'(t))\propto e^{4i \omega t}\, .
\ee
The left hand side can be written as the expectation value $\braket{\psi_{\tilde g_0}(t)|B^\dag_{g_0}|\psi_{\tilde g_0}(t)}$ of a noninteracting operator with symbol 
\be
\sin k\frac{(h(t)-\cos(k))\sigma^x-\sin k\sigma^y+i \omega\sigma^z}{\omega^2-1-g_0^2+2g_0\cos k}\, ,
\ee
where $\tilde g_0$ approximately satisfyes \eref{eq:noquench1}. 
In the no-quench limit the mean-field parameter is constant $h(t)\rightarrow g_0$ (and $\tilde g_0\rightarrow g_0$), so  such operator can be written as
\be\label{eq:exit}
B_{g_0}^\dag \sim\sum_{\ell, n}(\begin{array}{cc}
a_\ell^x&a_\ell^y
\end{array})[\mathcal B_{g_0}^\dag]_{\ell-n}\Bigl(\begin{array}{c}
a_n^x\\
a_n^y
\end{array}\Bigr)\, ,
\ee
where
\be
[\mathcal B_{g_0}^\dag]_{\ell}=\int_{-\pi}^\pi\frac{\mathrm d k}{2\pi} e^{i \ell k} \sin k\frac{(g_0-\cos(k))\sigma^x-\sin k\sigma^y+i \omega\sigma^z}{\omega^2-1-g_0^2+2g_0\cos k}
\ee
and $a_\ell^\alpha$ are the Majorana fermions \eref{eq:Maj}. In \eref{eq:exit} we left out the normalisation. Importantly, $[\mathcal B_{g_0}^\dag]_{\ell}\sim e^{-|\ell|\theta}$, with $\theta$ defined in \eref{eq:theta}; 
we have therefore found a quasi-local operator whose expectation value approximately oscillates in time as in  \eref{eq:guess}. As a consequence, $\mathcal B_{g_0}^\dag$ acts like an excitation over the initial state. Indeed we have 
\be\label{eq:Lehmann}
\fl\qquad\qquad\braket{\psi_{\tilde g_0}(t)|B^\dag_{g_0}|\psi_{\tilde g_0}(t)}=\sum_{\ell, n}\braket{\psi_{\tilde g_0}|\ell}\braket{n|\psi_{\tilde g_0}}e^{i (E_\ell-E_n)}[B^\dag_{g_0}]_{\ell n}\sim e^{4i \omega t}
\ee
and if the state $\ket{\psi_{\tilde g_0}}$ has a sufficiently general representation (\emph{i.e.} the overlaps with the eigenstates of $H(\tilde g,\lambda)$ are generally nonzero), \eref{eq:Lehmann} tells us that $B^\dag_{g_0}$ connects only states with energy difference equal to $4\omega$.

In conclusion, the bound state of \eref{eq:A} seems to be a manifestation of a localised excitation of $H(\tilde g,\lambda)$.

\subsubsection{Remark on quenches from the ordered phase.}
In the previous section we ignored a subtlety that in principle could have invalidated part of the discussion (and part of the diagram in \Fref{f:diagram}). The mapping to a mean-field Hamiltonian relies on cluster decomposition properties but the ground state of the TFIC Hamiltonian in the ferromagnetic phase is the superposition of two Slater determinants \cite{CEF}, that separately do not possess cluster decomposition properties. Nevertheless, the discussion (and in turn \Fref{f:diagram}) remains correct also in this problematic case, at least for the operators commuting with $\prod_\ell\sigma_\ell^z$. This can be seen as follows:
\begin{enumerate}
\item The mean-field mapping is exact for the true ground state, which breaks the spin-flip symmetry realised by $\prod_\ell\sigma_\ell^z$. Thus \eref{eq:hm} is valid.
\item Using that $\sigma^z$ is a quadratic operator in the Jordan-Wigner fermions, in \eref{eq:hm} the ground state can be replaced by one of the two Slater determinants, therefore $h(t)$ is still solution of \eref{eq:system}. Analogously, the expectation value of any operator commuting with $\prod_\ell\sigma_\ell^z$ can be found replacing the initial state with one of the two Slater determinants and then using Wick theorem.
\item For any given $t$, the expectation value of operators $\mathcal O^{o}(\ell)$ that anticommute with $\prod_\ell\sigma_\ell^z$, like the order parameter, can be obtained from the large $r$ limit of  $\braket{\Psi_0| U^{\dag}_{\rm MF}(t) \mathcal O^o(\ell) \mathcal O^o(\ell+r)U_{\rm MF}(t)|\Psi_0}$, using cluster decomposition properties. 
\end{enumerate}
However, using similar general arguments we are not able to exclude that the expectation values of the odd operators might keep oscillating also when all the even operators relax. There is indeed a subtle problem of limits that comes from the trick of Point (iii). Nevertheless, in the cases considered we have never encountered this situation, suggesting that for the model under examination such complications do not arise. Here we provide a heuristic argument. When the limit of infinite time for $h(t)$ exists, at sufficiently long times the dynamics is essentially the same as for a quantum quench in the TFIC from a certain state with cluster decomposition properties. In the latter situation we can apply (a direct generalisation of) the results of \cite{FE:RDM}, which showed that the expectation values of odd operators decay to zero. This suggests that the diagram of \Fref{f:diagram} could be valid also taking into account the odd operators.

\subsection{Relaxation properties}\label{ss:relprop} %

We now focus on the regions (of the parameter space) in which the limit $\lim_{t\rightarrow\infty}h(t)\equiv h_\infty$ exists and investigate more closely the relaxation properties. 
Even beyond the linear approximation \eref{eq:sollin}, we can still guess an asymptotic form for the solution $y_k$ of \eref{eq:system}
\be\label{eq:asymp}
y_k\rightarrow \zeta_k+(A_k e^{2i\varepsilon_k t}+\mathrm{h.c.})\, ,
\ee
where $\varepsilon_k\sim2\sqrt{1+h_\infty^2-2h_\infty \cos k}$.
Equation \eref{eq:asymp} is compatible with the relaxation of $h(t)$ with corrections $O(t^{-1/2-j})$ (stationary phase approximation), where $j$ is an integer that depends on the behaviour of $A_k$ around the extremal points of the dispersion relation ($k=0\vee \pi$). 
Our numerical analysis for several quenches from TFIC initial states is compatible with $j=1$. This is not surprising since the same exponent governs the late time behaviour of $\braket{\sigma^z}$ after quenches in the TFIC \cite{CEF}. Thus, we conjecture the large time expansion
\be\label{eq:asympt}
\fl\qquad\quad h(t)\sim h_\infty+\frac{A_0\cos(4(1-h_\infty) t+\varphi_0)+A_\pi\cos(4(1+h_\infty) t+\varphi_\pi)}{t^{3/2}}\, ,
\ee
and similar behaviours for $\tilde y_n(t)$. 
Remarkably, the (leading) oscillatory frequency is only determined by $h_\infty$.  In Figure \ref{f:hrel}, \eref{eq:asympt} is compared against numerical data for a quench that leads to relaxation.  

\begin{figure}[tbp]
\begin{center}
\includegraphics[width=0.8\textwidth]{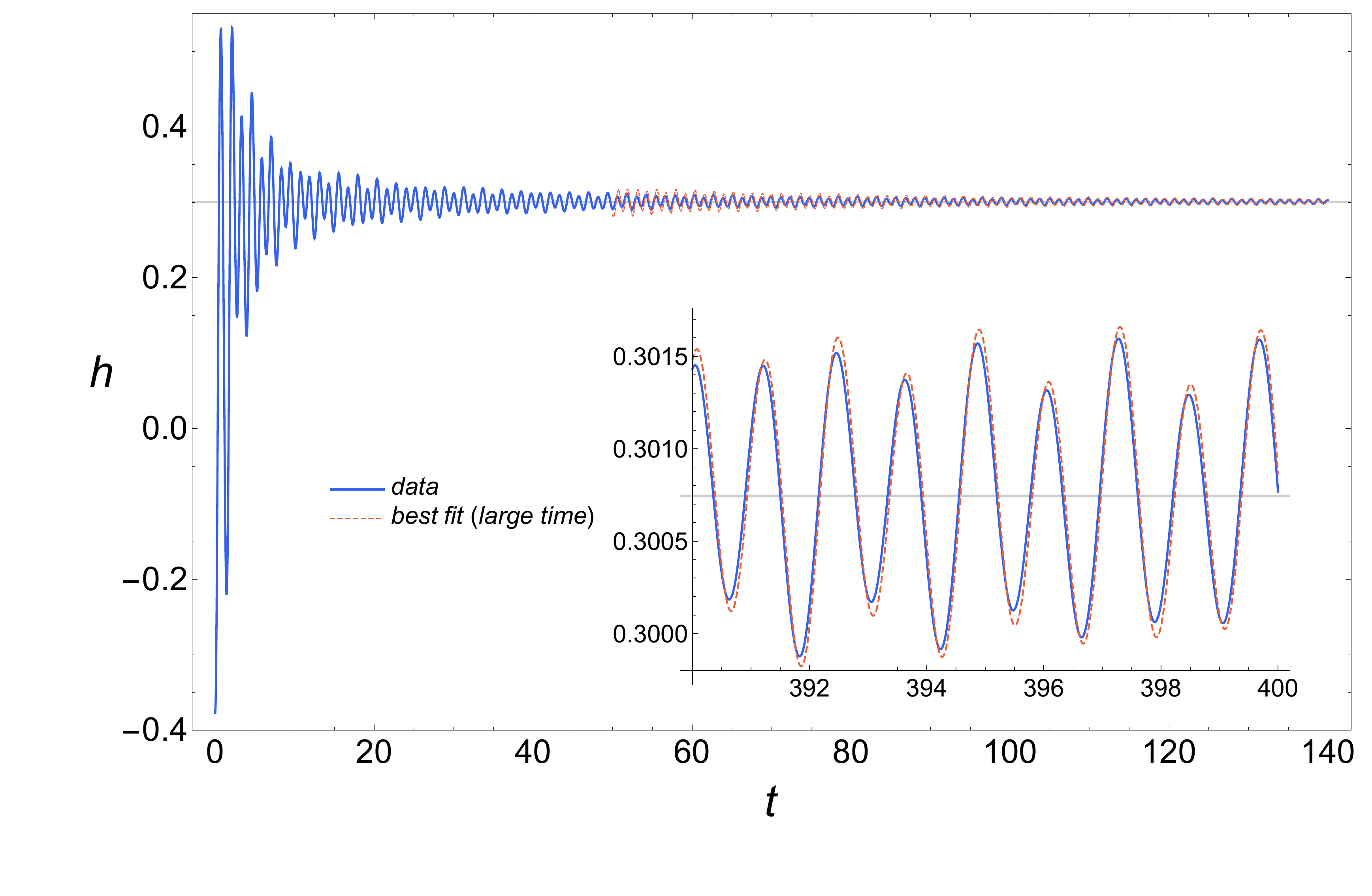}
\end{center}\caption{The time-dependent magnetic field $h(t)$ of the mean-field Hamiltonian \eref{eq:HMFIS} after a quench from  the ground state of the TFIC \eref{eq:TFIC}, with  $g_0=1.5$, and Hamiltonian $H(\tilde g,\lambda)$ \eref{eq:TFICNI}, with $\tilde g=0.5$ and $\lambda=0.5$. 
The dashed red line is the asymptotic prediction \eref{eq:asympt} with the coefficients estimated by fitting the data at very large times $t>1800$. The inset shows the goodness of the prediction in some intermediate time window. The horizontal grey line corresponds to the value of $h_\infty$ extracted from the fit.
}\label{f:hrel}
\end{figure}

Since for asymptotically large times the time evolution is equivalent to the one generated by the TFIC Hamiltonian
\be\label{eq:Hf}
H_f=-\sum_\ell(\sigma_\ell^x\sigma_{\ell+1}^x+h_\infty\sigma_\ell^z)\, ,
\ee
at late times local observables are described by a generalised Gibbs ensemble constructed with the local conservation laws of $H_f$.
However, the Lagrange multipliers can not be simply fixed by computing the corresponding integrals of motion at the initial time, as they are in fact conserved only at asymptotically large times.
This is an example of a stationary state written in terms of operators commuting with one another but not with the Hamiltonian. 
In fact, this is not the first time that such an unusual description emerges: in \cite{E:preT} the pre-thermalisation plateau was described by a GGE constructed with operators in involution that are however not conserved at the perturbative order that was worked out.

As it will be clarified in the next section, such a stationary state coincides with the pre-thermalisation plateau of \cite{IsingNI}. In the thermodynamic limit this is just the stationary state that emerges at \emph{infinite} time after the quench.
Indeed, in the regions in which there is relaxation, we have not found any indication of pre-relaxation/pre-thermalisation behaviour. 
Furthermore, `relaxation' is not synonym of `thermalisation'. 
Indeed it is not difficult to show that at late times the system still retains infinite information about the initial state. To this aim, as initial state we choose the ground state of the local Hamiltonian
\be\label{eq:Hi}
H[\{a\}]=\sum_j^n a_j H_j\, ,\qquad\qquad a_j\in\mathbb{R}\, ,
\ee
where $n$ is finite and $H_j$ are the most local reflection symmetric conservation laws of a TFIC model with Hamiltonian $H_{0}$.
Such state can be easily constructed \cite{AFC:exc} and, using the notations of Appendix \ref{a:GS}, is completely characterised by the function 
\be\label{eq:mform}
m(k)=-\mathrm{sgn}(\sum_j^n a_j \cos(j k))\, ,
\ee
which is equal to $1$ if and only if the excitation $\alpha^\dag_k$  of $H_0$ is present in the state. 
It is important to note that different characteristic functions $m(k)$ correspond to locally inequivalent states.
Thermal-like behaviour would imply that the only information about $m(k)$ that is retained at late times is the corresponding energy and magnetisation. 
We now consider the special cases in which the initial magnetisation is such that $H_0$ is the mean-field Hamiltonian at the initial time. In this way, by Corollary \ref{C:3}, the expectation value of local observables is independent of time and the late time stationary state is equivalent to the initial state. The only scenario compatible with thermal-like behaviour is that each distinct function $m(k)$ of the form \eref{eq:mform} corresponds to a distinct pair $\{h_m,\varepsilon_m\}$, where $h_m$ is the parameter of the late time mean-field Hamiltonian and $\varepsilon_m$ the energy.
The self-consistent conditions behind the no-quench limit are worked out in Appendix \ref{a:GS} and are given by
\be\label{eqe:hm}
h_m=g+2\lambda\int_{-\pi}^\pi\frac{\mathrm d k}{2\pi}m(k)\frac{h_m-\cos k}{\sqrt{1+h_m^2-2h_m\cos k}}\, .
\ee
\be\label{eqe:energyex}
\varepsilon_m=\frac{h_m^2-g^2}{4\lambda}-\int_{-\pi}^\pi\frac{\mathrm d k}{2\pi} m(k)\frac{h_m\cos k-1}{\sqrt{1+h_m^2-2h_m\cos k}}\, .
\ee
It is easy to see that the same pair $\{h_m,\epsilon_m\}$ is associated with infinitely many functions $m(k)$ of the form \eref{eq:mform} (it is enough to choose $n=3$ in \eref{eq:mform} to find (infinite) examples).  
Thus, the non-equilibrium time evolution under \eref{eq:TFICNI} does not generally result in thermalisation.
 
We conclude the analysis of \eref{eq:TFICNI} considering the expectation value $\braket{Q_n}_t$ of the local conservation laws of $H_f$ \eref{eq:Hf}, which can be written as follows \cite{FE:RDM}
\be
{\frac{Q_n}{L}}=\int_{-\pi}^\pi\frac{\mathrm d k}{2\pi} \cos(n k)\varepsilon_k\Bigl(\alpha^\dag_k\alpha^{\phantom \dag}_k-\frac{1}{2}\Bigr)\, ,
\ee 
where $\varepsilon_k$ is the dispersion relation of $H_f$ and $\alpha_k$ are noninteracting fermions that diagonalise $H_f$.  
Using free-fermion techniques we obtain 
\be\label{eq:Qcharge}
\braket{\frac{Q_n}{L}}_t=-\frac{h_\infty}{2}\tilde y_n-\frac{1}{8}\tilde\phi_n\, .
\ee
In particular, the expectation value of $H_f\equiv Q_0$ can be written as follows
\be
\braket{\frac{H_f}{L}}_t=\varepsilon-\frac{(h(t)-h_\infty)^2-(h_\infty-\tilde g)^2}{4\lambda}\, ,
\ee
from which it is clear that the relaxation exponent of $\braket{H_f}_t$ is twice the exponent of $h$ (which in the cases that we investigated is $3/2$, \emph{cf.} \eref{eq:asympt}). 
This result can be easily generalised to any local conservation law of $H_f$.
By taking the time derivative of \eref{eq:Qcharge}, the last equation of \eref{eq:systemF} implies
\be
\braket{\frac{Q_n}{L}}_t'=\frac{h-h_\infty}{2}y_n'\, .
\ee 
Since both $h-h_\infty$ and $y_n'$ are expected to decay as $t^{-3/2}$ (with oscillatory factors like in \eref{eq:asympt}), we immediately obtain
\be\label{eq:corrQ}
\braket{\frac{Q_n}{L}}_\infty-\braket{\frac{Q_n}{L}}_t\sim O(t^{-3})\, .
\ee

\subsubsection{Comparison with \cite{IsingNI}.}
It is not a coincidence that the same relaxation exponents were found in \cite{IsingNI} in a perturbative framework. 
In order to understand the relation between the two models we must come back to the modified version of Hamiltonian considered in \cite{IsingNI}, \emph{i.e.} \eref{eq:TFICNIS}. 
The mean-field Hamiltonian for that precise model reads
\be\label{eq:MFS}
\fl\qquad H_{\rm MF}(t)=-\sum_\ell\sigma_\ell^x\sigma_{\ell+1}^x-\texttt{g}\sum_\ell\sigma_\ell^z+2\lambda\braket{\sigma_\ell^z-\bar\sigma_\ell^z}_{t, \rm MF}\Bigl(\sum_\ell\sigma_\ell^z-\overline{\sum_\ell\sigma_\ell^z}\Bigr)\, ,
\ee
where the time average $\overline{\phantom{(}\!\!\!\cdots\phantom{)}\!\!\!}$ is taken with respect to the Hamiltonian with $\lambda=0$. It is convenient to introduce the auxiliary Hamiltonian
\be\label{eq:MFST}
\fl\qquad \tilde{H}_{\rm MF}(t)=-\sum_\ell\sigma_\ell^x\sigma_{\ell+1}^x-\texttt{g}\sum_\ell\sigma_\ell^z+2\lambda\braket{\sigma_\ell^z-\tilde\sigma_\ell^z}_{t, \rm MF}\Bigl(\sum_\ell\sigma_\ell^z-\widetilde{\sum_\ell\sigma_\ell^z}\Bigr)\, ,
\ee
where the time average $\widetilde{\phantom{(}\!\!\!\cdots\phantom{)}\!\!\!}$ is now taken with respect to  \mbox{$\tilde{H}_{f}\equiv\lim_{t\rightarrow\infty}\tilde{H}_{MF}(t)$}, limit which is assumed to exist. It is now simple to prove that $\tilde H_f=H(\texttt{g},0)$, where $H(\texttt{g},\lambda)$ is given in eq. \eref{eq:TFICNIS}.
To this aim let us consider the expectation value of $\sigma^z$ evolving via $\tilde{H}_{\rm MF}(t)$; it fulfils   
\bea
\fl\lim_{t\rightarrow\infty}\braket{\tilde \sigma_\ell^z}_{t,\rm MF} &=&\lim_{t\rightarrow\infty}\lim_{T\rightarrow\infty}\frac{1}{T}\int_{0}^{T}\textrm{d}s\, \braket{\Psi_0|\tilde{U}^{\dag}_{MF}(t)e^{i \tilde{H}_{f} s }\sigma_\ell^z e^{-i \tilde{H}_{f} s }\tilde{U}_{MF}(t)|\Psi_0} \nn
&=&\lim_{t\rightarrow\infty}\lim_{T\rightarrow\infty}\frac{1}{T}\int_{0}^{T}\textrm{d}s\, \braket{\Psi_0|\tilde{U}^{\dag}_{MF}(t+s) \sigma_\ell^z \tilde{U}_{MF}(t+s)|\Psi_0}  \nn
&=&\lim_{t\rightarrow\infty} \braket{ \sigma_\ell^z}_{t,\rm MF}= \braket{ \sigma^z}_{\infty,\rm MF}\,,
\eea
where $\tilde{U}_{MF}(t)$ is the time evolution operator constructed with $\tilde{H}_{MF}(t)$ and in the second step we replaced $e^{-i \tilde{H}_{f} s} \tilde{U}_{MF}(t)$  with $\tilde{U}_{MF}(t+s)$, as it is legitimate at late times. From this it follows $\tilde H_f=H(\texttt{g},0)$ and, in turn, the equivalence between \eref{eq:MFS} and \eref{eq:MFST} $H_{\rm MF}(t)=\tilde H_{\rm MF}(t)$. Importantly, this means that \eref{eq:MFS} and \eref{eq:TFIC} have the same infinite time limit if we set $h_\infty=\texttt{g}$.

As a matter of fact, the equivalence between \eref{eq:MFS} and \eref{eq:TFIC} is not restricted to infinite times.
The mean-field time evolution operator for \eref{eq:MFS} can indeed be written as follows
\be\label{eq:UMfactor}
\fl \qquad U_{\rm MF}(t)= e^{4i \lambda\int_0^t\mathrm d s\braket{\delta \sigma^z}_{s, \rm MF}\overline{S^z}}\mathrm{T}\exp\Bigl(-i\int_0^t\mathrm d s\ H_f+4\lambda\braket{\delta \sigma^z}_{s, \rm MF}S^z(s)\Bigr)\, .
\ee
Here $S^z=\frac{1}{2}\sum_\ell\sigma_\ell^z$, $\delta\sigma^z=\sigma^z-\bar\sigma^z$ and
\be
S^z(s)=e^{-4i \lambda\int_0^s\mathrm d s'\braket{\delta \sigma^z}_{s', \rm MF}\overline{S^z}}S^ze^{4i \lambda\int_0^s\mathrm d s'\braket{\delta \sigma^z}_{s', \rm MF}\overline{S^z}}\, .
\ee
If the magnetisation relaxes faster than $1/t^{1+\alpha}$, with $\alpha>0$, the operator at the exponent of the first term of \eref{eq:UMfactor} is a bounded function of the time, so that exponential can be safely expanded. In fact the entire term can be  neglected (it gives corrections $o(\lambda)$).  
The same holds true in $S^z(s)$, indeed the finiteness of $\sup_{s}\lambda\int_0^s\mathrm d s'\braket{\delta \sigma^z}_{s', \rm MF}$ guarantees that the series expansion of the exponentials in $S^z(s)$ can be truncated for any $s$ with an error that goes to zero as $\lambda\rightarrow 0$. By considering the first terms of the expansion one immediately realizes that the correction is $o(\lambda)$ and approaches zero for large $s$ as $1/s^{\alpha}$. Putting everything together, replacing $S^z(s)$ by $S^z$ in \eref{eq:UMfactor} introduces an error $o(\lambda)$, independently of the time.
Therefore we obtain
\be\label{eq:UMfactor1}
 U_{\rm MF}(t)\sim \mathrm{T}\exp\Bigl(-i\int_0^t\mathrm d s\ H_f+4\lambda\braket{\delta \sigma^z}_{s, \rm MF}S^z\Bigr)+o(\lambda)\, .
\ee
We can also replace $\braket{\delta \sigma^z}_{s, \rm MF}$ with $\braket{\sigma^z}_{s, \rm MF}-\braket{\sigma^z}_{\infty, \rm MF}$: the difference between the two terms is $o(\lambda^0)$ and approaches zero at large times at least as $1/t^{1+\alpha}$. 
In this way we have reduced the Hamiltonian \eref{eq:MFS} to \eref{eq:TFICNI},
provided that the condition  (see first equation of \eref{eq:systemF}) $\tilde g= \texttt{g}+4\lambda  m^z_\infty$ is satisfied. 
With this choice we recover the perturbative results of \cite{IsingNI}, \emph{e.g.} the relaxation exponents \eref{eq:corrQ}.  \emph{A posteriori} we note that the large time behaviour of $\braket{\sigma_\ell^z}$ under \eref{eq:TFICNI} (\emph{cf}. \eref{eq:asympt}) is sufficiently fast ($\alpha=1/2$) to justify the approximation of \eref{eq:UMfactor} by \eref{eq:UMfactor1}. 
We also checked that the mean-field solution of \eref{eq:TFICNI} is perfectly compatible with the results shown in Figure 1 of \cite{IsingNI} (see \Fref{f:mirror59}).

\begin{figure}[tbp]
\begin{center}
\includegraphics[width=0.8\textwidth]{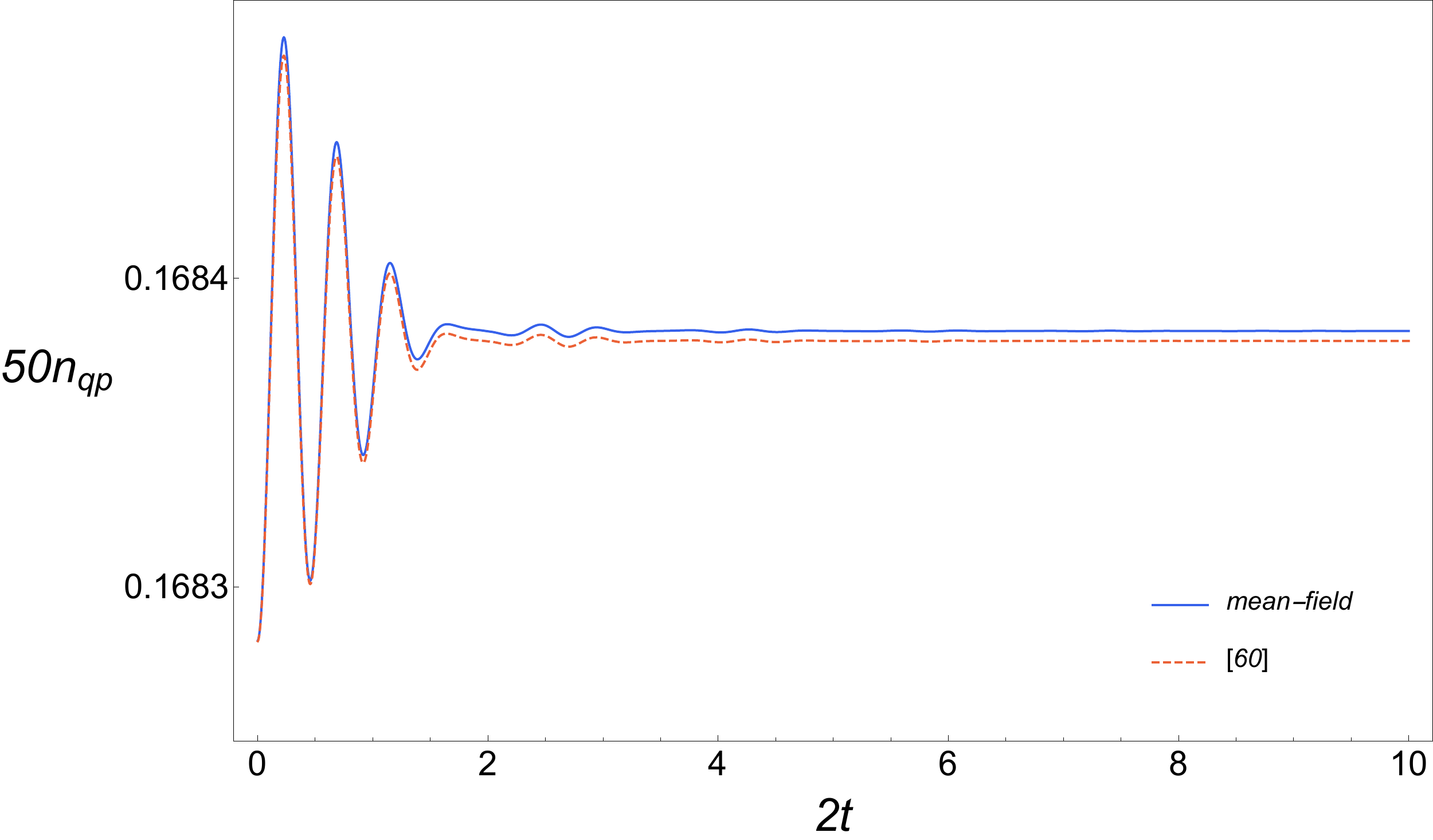}
\end{center}\caption{The time evolution of the number of quasiparticles $n_{qp}=\int_{-\pi}^\pi\frac{\mathrm d k}{2\pi} \alpha^\dag_k\alpha_k$ that diagonalise the late-time mean-field Hamiltonian for a quench with $g_0=8$, $\lambda=0.05$ and $\tilde g=3.597274\dots$. The parameters are chosen to reproduce the first figure of \cite{IsingNI} (dashed orange line). In particular, $\tilde g$ is such that the mean-field parameter $h(t)$ in the limit of infinite time approaches the value $\texttt{g}=3.5$, considered in \cite{IsingNI} (see the main text). The timescale and $\lambda$ differ from \cite{IsingNI} because of two small typos (the dispersion relation was unintentionally halved and the right hand side of Equation (3) of \cite{IsingNI} should have been multiplied by $4$). The (tiny) discrepancy is compatible with higher order corrections in $\lambda$.} \label{f:mirror59}
\end{figure}

Finally, we point out that  \cite{IsingNI} introduced the term $\overline{\sum_\ell\sigma_\ell^z}$ to fix some conditions in the long-time limit, where \eref{eq:TFICNI} and the Hamiltonian of \cite{IsingNI} turn out to be equivalent.
\Eref{eq:TFICNI} is therefore a sensible replacement for the Hamiltonian of \cite{IsingNI}. 

\subsection{Generalisations}
\begin{figure}[tbp]
\begin{center}
\includegraphics[width=0.8\textwidth]{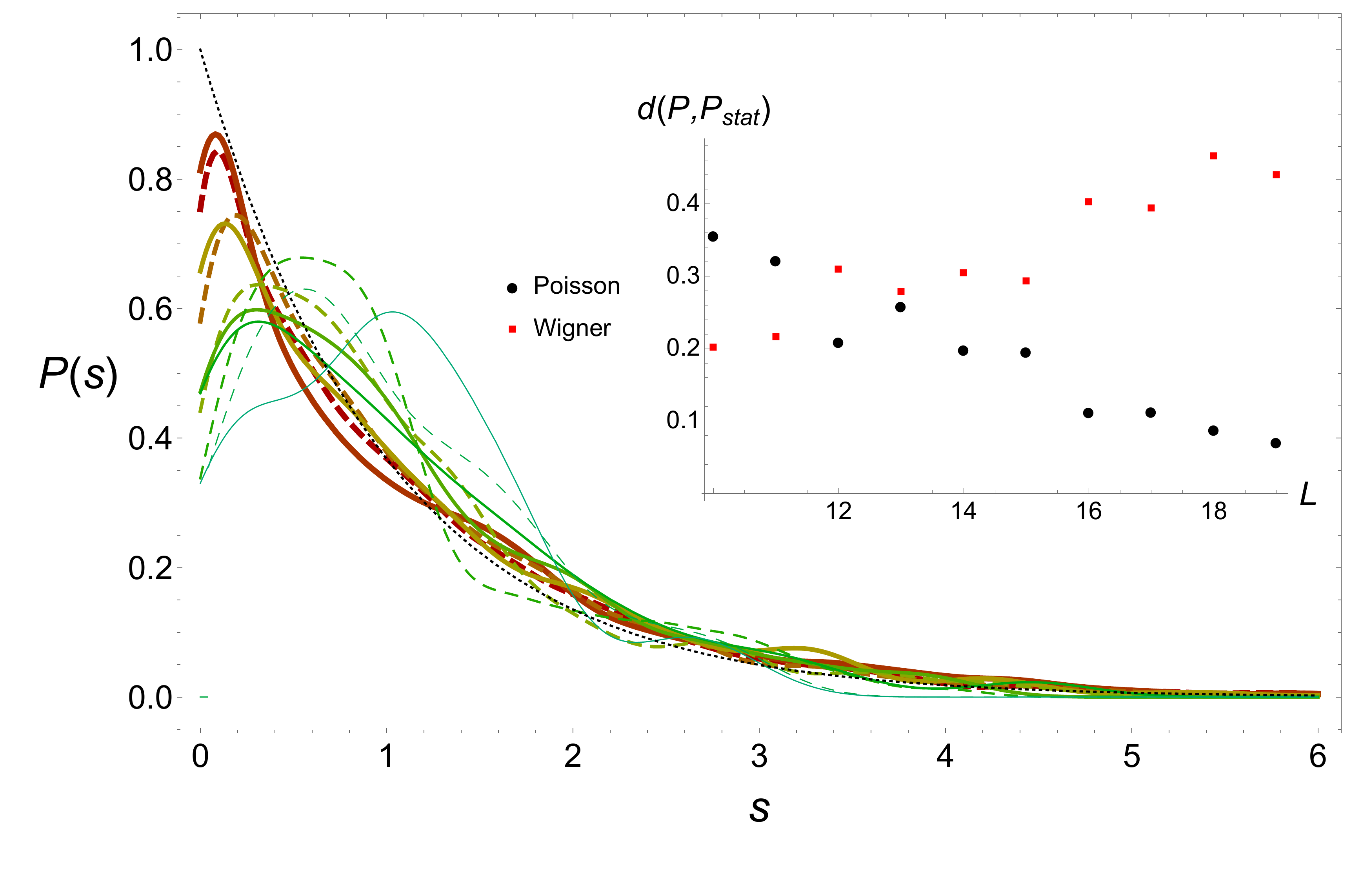}
\end{center}\caption{The nearest neighbour spacing distribution $P(s)$ of the Hamiltonian \eref{eq:TFICNI} with $\tilde g=0.5$ and $\lambda=0.5$ in the reflection symmetric sector of the zero momentum subspace with spin-flip parity $\prod_\ell\sigma_\ell^z$ equal to 1 for various chain sizes ($L=10\div 19$).  
As the size is increased the colours vary from green to brown and the lines become thicker. Dashed lines correspond to odd sizes. The dotted black line is the exponential distribution (Poisson statistics).  In the inset the distance $d(P,P_{stat})=\sqrt{\int_0^\infty(P(s)-P_{stat}(s))^2}$ from Poisson ($P_{stat}(s)=e^{-s}$) and Wigner ($P_{stat}(s)=\frac{\pi}{2} s e^{-\frac{\pi}{4}s^2}$).  
}\label{f:lsTFICNI}
\end{figure}

\begin{figure}[tbp]
\begin{center}
\includegraphics[width=0.8\textwidth]{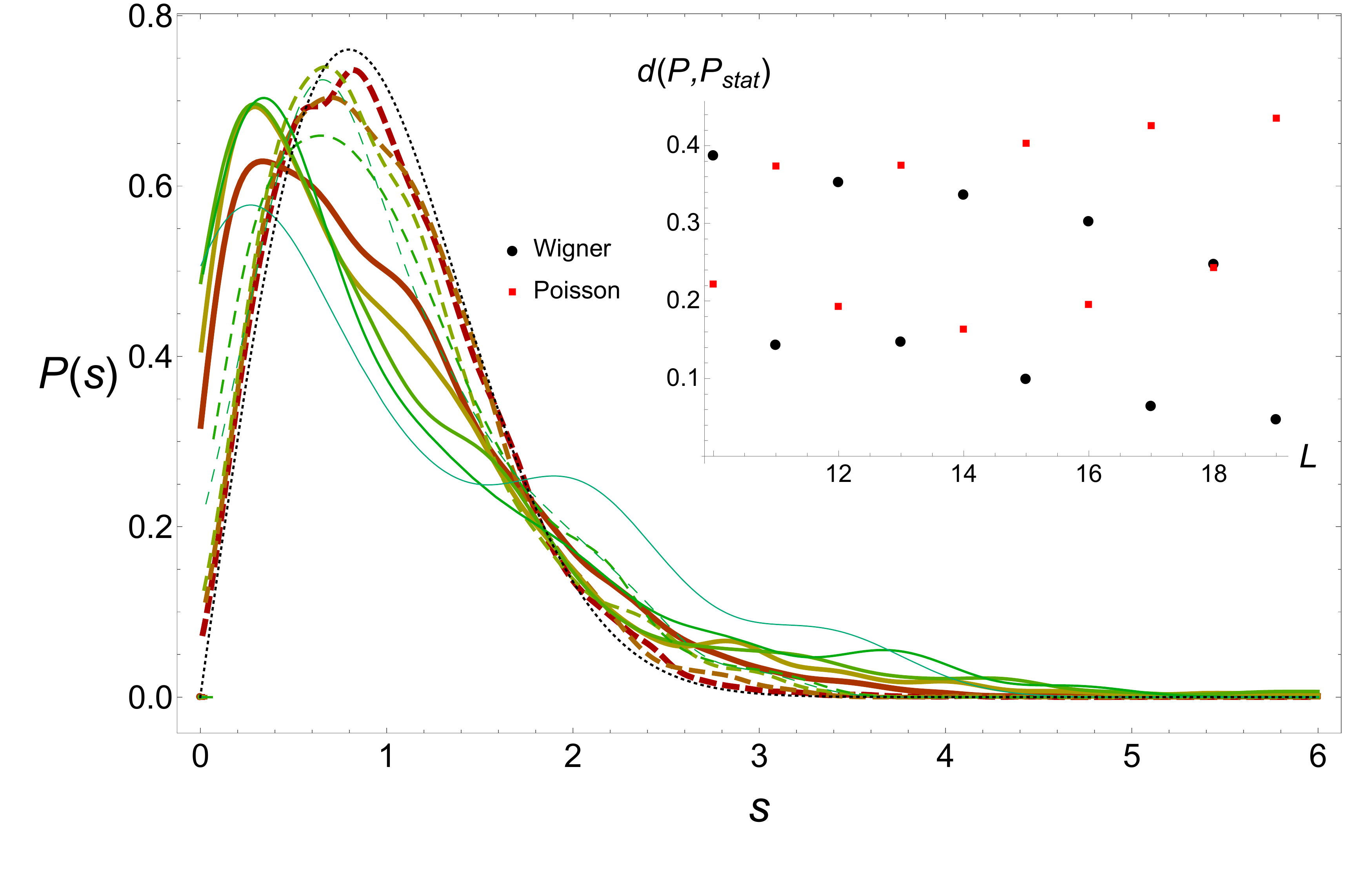}
\end{center}\caption{The nearest neighbour spacing distribution $P(s)$ of the Hamiltonian \eref{eq:HWig} with $\gamma=0.25$ and $\lambda=0.5$ in the reflection symmetric sector of the zero momentum subspace with spin-flip parity $\prod_\ell\sigma_\ell^z$ equal to 1 for various chain sizes ($L=10\div 19$).  
As the size is increased the colours vary from green to brown and the lines become thicker. Dashed lines correspond to odd sizes. The dotted black line is the Wigner distribution.  In the inset the distance $d(P,P_{stat})=\sqrt{\int_0^\infty(P(s)-P_{stat}(s))^2}$ from Poisson ($P_{stat}(s)=e^{-s}$) and Wigner ($P_{stat}(s)=\frac{\pi}{2} s e^{-\frac{\pi}{4}s^2}$). The distribution for odd chains converges rather quickly to Wigner. 
}\label{f:lsXYZNI}
\end{figure}

The construction of (low-entangled) stationary states that we proposed in the previous section (\eref{eq:Hi} and below) can be applied also to other Hamiltonians of the form \eref{eq:opform} if the corresponding mean-field Hamiltonian is integrable. In those cases we can rule out thermalisation if the self-consistent problem satisfied by the stationary solutions at fixed energy and mean-field parameters has more than one solution.

When the mean-field Hamiltonian is noninteracting, following the lines of the proof sketched for the nonlocal generalisation of the Ising model in \Sref{ss:relprop}, one can generally show that the solution is not unique. 

It is also reasonable to expect that, also in the presence of interactions, the finite number of constraints given by the energy conservation and the  late time values of the mean-field coupling constants could not reduce the parameter space of the initial Hamiltonian \eref{eq:Hi} to a single point. We indeed believe that thermalisation is unlikely to emerge if the mean-field Hamiltonian describes an integrable model at any time. 

Nevertheless, the interacting case exhibits counterintuitive behaviours, for example in the energy level-spacing statistics. Generally integrable models exhibit Poisson statistics, whereas generic models follow a Wigner distribution \cite{{BT:stat}, {GMGW:stat}}. There are many exceptions to this rule \cite{CM:rem}, however the nearest neighbour spacing distribution is probably the most reliable numerical check of integrability. 

In Figure \ref{f:lsTFICNI} the level spacing distribution is shown for various chain sizes for the Hamiltonian \eref{eq:TFICNI} with $\lambda=0.5$ and $\tilde g=0.5$ in the reflection symmetric sector of the zero momentum subspace with spin-flip parity $\prod_\ell\sigma_\ell^z$ equal to 1. The numerical data suggest that in the thermodynamic limit the curves collapse to an exponential distribution (Poisson statistics). This is consistent with our observation that at arbitrarily large times after a quantum quench the system keeps retaining infinite information about the initial state.

A completely different scenario appears for the Hamiltonian
\be\label{eq:HWig}
H=-\sum_{\ell}\Bigl(\frac{1+\gamma}{4}\sigma_\ell^x\sigma_{\ell+1}^x+\frac{1-\gamma}{4}\sigma_\ell^y\sigma_{\ell+1}^y\Bigr)+\frac{\lambda}{4 L}\Bigl(\sum_{\ell}\sigma_\ell^z\sigma_{\ell+1}^z\Bigr)^2\, .
\ee 
The corresponding mean-field Hamiltonian for a given one-site shift invariant initial state $\ket{\Psi_0}$ is given by
\be\label{eq:HWigMF}
\fl \quad H_{\rm MF}^{\Psi_0}(t)=-\sum_{\ell}\Bigl(\frac{1+\gamma}{4}\sigma_\ell^x\sigma_{\ell+1}^x+\frac{1-\gamma}{4}\sigma_\ell^y\sigma_{\ell+1}^y\Bigr)+\frac{\lambda}{2}\braket{\Psi_t|\sigma_\ell^z\sigma_{\ell+1}^z|\Psi_t}\sum_{\ell}\sigma_\ell^z\sigma_{\ell+1}^z\, .
\ee 
At fixed time, this describes an XYZ model, which is known to be integrable for any choice of the coupling constants. Therefore, assuming relaxation, the stationary properties of local observables should be described by a GGE constructed with the conservation laws of an XYZ model. 
We considered $\gamma=0.25$ and $\lambda=0.5$.
We found large finite size corrections in the level-spacing statistics (in the same sector as before) and, in particular, a remarkable even-odd parity effect (see Figure \ref{f:lsXYZNI}). 
However, it seems that increasing $L$ the curves collapse to a Wigner distribution, as it commonly happens in non-integrable models. 

This might appear in contradiction  with our conjecture that thermalisation should not be expected when the mean-field Hamiltonian is interacting and integrable.  
In fact, we have not taken into account that at late times the mean-field parameters are fixed.
In the previous section we ruled out thermalisation by constructing an infinite family of stationary states with the same energy and the same mean-field parameters. If this is possible, then we should find a signature of the huge degeneracy in the level-spacing statistics by restricting the space to the excited states that lie in some shell with mean-field parameters almost fixed. 

In the restricted space our preliminary analysis is indeed compatible with Poisson statistics also for the Hamiltonian \eref{eq:HWig}. However, our data turn out to be compatible with Poisson statistics even if the mean-field Hamiltonian does not describe an integrable model.
This is in contrast to our expectations that in generic systems there should not be more than a few parameters that characterise the stationary state, namely the energy and, at worst, the mean-field coupling constants at infinite time after the quench.

Our interpretation of these contradictory results is that we did not investigate sufficiently large chains, so our analysis of the energy-level statistics in the restricted space is not sufficiently indicative. We are confident that a more accurate analysis will show a different behaviour in the non-integrable case. 

Finally, we point out that the situation is trickier when there are isolated points in the parameter space of the mean-field coupling constants that correspond to integrable models. For example, it is not clear to us whether or not we should expect thermalisation when at asymptotically large times the coupling constants of the mean-field Hamiltonian match the integrability points. 

\subsection{Summary}

We showed that the time evolution under \eref{eq:TFICNI} has a quite rich phenomenology, including both cases of relaxation and cases of persistent oscillatory behaviour.
In the limit of small quench the latter has been interpreted as the effect of localised excitations that appear (or become relevant) when the Hamiltonian parameters cross some ``critical'' line. 

In addition, we confirmed the perturbative results of \cite{IsingNI} in a non-perturbative setup. Our analysis excludes that in the thermodynamic limit the late time behaviour of local observables could be described by a thermal-like ensemble.

More generally, we provided some argument that suggests that thermalisation is unlikely to emerge if the mean-field Hamiltonian describes an integrable model at any time after the quench. 

We also proposed a numerical check of thermalisation based on the analysis of the energy-level statistics on some restricted space, however our preliminary analysis on small chains ($L<20$) was not sufficient to discriminate between the cases in which we expect thermal-like behaviour and the cases in which instead also at late times  infinite information about the initial state is retained. 

\section{Conclusions}\label{s:conclusions}  %

Pre-relaxation is a dynamical phenomenon that arises when small perturbations break symmetries that affect the late time behaviour of local observables. 
When the perturbation breaks (abelian) integrability, this is usually called pre-thermalisation, which is generally thought as a two-step process in which local observables experience virtual relaxation before approaching thermal-like expectation values. However the relaxation process can also be more complicated, following many steps of quasi-stationary behaviour. This happens in particular when the model is close to a non-abelian integrable point. In order to extract the pre-relaxation behaviour one must therefore identify the correct time scale of the phenomenon. 

We have considered the problem of pre-relaxation after quantum quenches in weakly interacting models, starting from initial states with cluster decomposition properties. We focussed on the particular situation in which the unperturbed (one-site shift invariant) Hamiltonian has a non-abelian set of local conservation laws that break one-site shift invariance. In particular we considered interacting perturbations to the XY spin-$\frac{1}{2}$ chain and investigated both integrable extensions, like the Heisenberg XYZ model, and the effects of perturbations that break integrability.

We identified the inverse perturbation strength as a relevant time scale of pre-relaxation and studied the dynamics of local observables at times proportional to it. 

Despite the model being interacting, the noninteracting structure, remnant of the unperturbed Hamiltonian and manifested in the Wick theorem, survives the pre-relaxation limit. However interactions do affect the dynamics by introducing a nontrivial time dependence in the effective noninteracting Hamiltonian that generates the time evolution. The most striking effect is probably that, even if local degrees of freedom approach stationary values, these can not be generally predicted without following the entire dynamics.

We have shown how to recast the non-equilibrium problem into a system of nonlinear differential equations involving expectation values of quasi-local operators. 
The system of equations has qualitatively distinct solutions, which vary from trivial stationarity to persistent oscillatory behaviour over the entire time window considered. 
We have not found any relevant difference between integrable and non-integrable perturbations, suggesting that the scenario of thermalisation in generic models arises at much larger times. 

For the very nature of the local conservation laws of the XY model, in order to have a nontrivial time evolution the initial state must break one-site shift invariance. 
For a particular initial state of that kind we considered a limit in which the equations can be linearised and exhibited the analytic solution, in which one-site shift invariance is eventually restored. 
The regime worked out analytically shows quite clearly the \emph{importance of cluster decomposition} in the non-equilibrium problem. While, as mentioned above, the pre-relaxation limit is trivial for one-site shift invariant states, a shift symmetrisation of the two-site shift invariant initial state has a nontrivial time evolution. This is because cluster decomposition has been lost with the symmetrisation. It is important to take into account such aspect when analytic predictions of late time stationary behaviour are compared with numerical data at times in which one-site shift invariance is not yet restored.  

The crossover between oscillatory behaviour and relaxation is quite interesting \emph{per se}. This has been the main motivation for the analysis of a simplified model that shares most of the formal aspects with the effective description of pre-relaxation in the perturbed XY model, but that, in fact, has not been derived from a pre-relaxation  limit. 
We considered a transverse-field Ising chain with an additional nonlocal interaction proportional to the magnetisation squared per unit length. This model was already studied in \cite{IsingNI} in the framework of a perturbation theory. We used some general properties, proven for Hamiltonians of that form, to obtain nonperturbative results and showed that in the thermodynamic limit subsystems retain infinite information about the initial state, whatever large the time is. 
This is not in disagreement with \cite{IsingNI}, where thermalisation was conjectured for time averages in finite systems: for this model the diagonal ensemble could not be locally equivalent to the stationary state that emerges in the thermodynamic limit when the quench parameters are compatible with relaxation.     

We showed that the late time behaviour in the thermodynamic limit (which corresponds to the `pre-thermal' behaviour of \cite{IsingNI}) can not always be described by a stationary state. In the parameter space there are indeed `critical lines' that separate relaxation from persistent oscillatory behaviour. We defined a limit of small quench and, in that limit, exhibited the analytic expressions for such critical lines. The appearance of oscillatory behaviour has been interpreted as a consequence of the emergence of localised excitations.

We also discussed the generalisations to other Hamiltonians in which some terms have the form of interactions with macroscopic observables, like the magnetisation squared per unit length of the model above. In particular, we ruled out thermal-like behaviour in a large class of models of that kind. 

Finally, we would like to stress that our description of the pre-relaxation limit is based on a few hypotheses. In particular, we neglected some ``anomalous terms'', proving only the self-consistency of the conjecture. Some preliminary checks against iTEBD simulations are confirming the validity of the assumptions\cite{CF:entgr}; however, we leave a more rigorous analysis of the regimes of validity of our approximations to future research. 

\ack
We thank Mario Collura, Fabian Essler and Andrea Gambassi for useful discussions and Matteo Marcuzzi for having provided us with the data for \fref{f:mirror59}. This work was supported by the EPSRC under grants EP/I032487/1 (BB) and EP/J014885/1 (MF)
and by the LabEx ENS-ICFP: ANR-10-LABX-0010/ANR-10-IDEX-0001-02 PSL* (MF).

\appendix
\section*{APPENDICES}

\section{Free-fermion relations}\label{a:free}%

We briefly summarise some useful relations valid in noninteracting models. Additional details can be found in \cite{F:super} and \cite{FC:disjoint}.

Let us consider a local one-site shift invariant spin-chain Hamiltonian $H$ that is mapped to noninteracting fermions by the Jordan-Wigner transformation ($\{a_\ell^\alpha,a_n^\beta\}=2\delta_{\alpha \beta}\delta_{\ell n}$)
\be
a_\ell^\alpha=\Bigl(\prod_{j<\ell}\sigma_j^z\Bigr) \sigma_\ell^\alpha\qquad \alpha\in\{x,y\}\, .
\ee 
Up to boundary terms, $H$ reads as
\be\label{a:eq:free}
\fl\qquad \qquad H\sim\frac{1}{4}\sum_{\ell, m}^{L/n}(\begin{array}{ccccc}
a_{n\ell-n+1}^x&a_{n\ell-n+1}^y&\dots&a_{n\ell}^x&a_{n\ell}^y
\end{array})[\mathcal H^{(n)}]_{\ell m}\left(\begin{array}{c}
a_{n m-n+1}^x\\a_{n m-n+1}^y\\\vdots\\a_{n m}^x\\a_{n m}^y
\end{array}\right)\, ,
\ee
where $L$ is the chain length and $\mathcal H$ is a block-circulant\footnote{A circulant matrix $M$ is a Toeplitz matrix ($M_{nm}\equiv M_{n-m}$ ) in which any row is a right cyclic shift of the row above.} matrix
\be
[\mathcal H^{(n)}]_{\ell m}= \frac{n}{L}\sum_{k}e^{-i(m-\ell) k}\mathcal H^{(n)}(k)\qquad e^{i L k/n}=1\, ,
\ee
where
\be
 {\mathcal H^{(n)}}^\dag(k)=\mathcal H^{(n)}(k)\,,\qquad {\mathcal H^{(n)}}^T(k)=-\mathcal H^{(n)}(-k)\, .
\ee
The index $n$ is a divisor of $L$ (but in the thermodynamic limit any positive integer is allowed).  
We call the $2n$-by-$2n$ matrix $\mathcal H^{(n)}(k)$ the $n$-site representation of the  \emph{symbol}, which completely characterises the block-circulant matrix. In the following we will refer to $\mathcal H^{(n)}(k)$ also as the symbol of $H$.  

For given $n$, the following properties hold:
\begin{enumerate}
\item Any function of block-circulant matrices is a block circulant matrix, with symbol equal to the function of the respective symbols
\be
f(\mathcal A, \mathcal B, \dots)\rightarrow f(\mathcal A(k),\mathcal B(k),\dots)\, .
\ee 
\item Let $A$ and $B$ as in \eref{a:eq:free}. Their commutator $[A,B]$ has the form \eref{a:eq:free}, with symbol equal to the commutators of the symbols
\be
[A,B]\rightarrow[\mathcal A(k),\mathcal B(k)]
\ee
\item The time evolution in the Heisenberg picture under \eref{a:eq:free}  of a noninteracting operator $A$ of the same form is noninteracting, with symbol 
\be
\label{timeevolution}
e^{i H t} A e^{-i H t}\rightarrow e^{i \mathcal H(k) t} A(k)e^{-i \mathcal H(k) t}
\ee
\item (Wick theorem) The expectation value of any operator in a thermal state of  \eref{a:eq:free} (and in any Slater determinant state) can be expressed in terms of the correlation matrix
\be
\fl\qquad { \Gamma^{(n)}_{\ell m}}=\delta_{\ell m}\mathrm I-\braket{\left(\begin{array}{c}
a_{n m-n+1}^x\\a_{n m-n+1}^y\\\vdots\\a_{n m}^x\\a_{n m}^y
\end{array}\right)(\begin{array}{ccccc}
a_{n\ell-n+1}^x&a_{n\ell-n+1}^y&\dots&a_{n\ell}^x&a_{n\ell}^y
\end{array})}\, .
\ee
\item For a thermal state with inverse temperature $\beta$ the correlation matrix is given by
\be
\Gamma_\beta=-\tanh\bigl(\frac{\beta}{2}\mathcal H\bigr)\, .
\ee 
Therefore, the ``thermal'' ground state (which, in the presence of degeneracies, is equivalent to the incoherent superposition of the states) has correlation matrix
\be
\Gamma_\infty=-\mathrm{sgn}(\mathcal H)\, .
\ee
\item If $\mathcal H$ is block-circulant, $\Gamma$ is block-circulant as well, so it is completely characterised by its symbol $\Gamma(k)$. In particular, the symbol of the thermal ground state reads
\be
\Gamma_\infty(k)=-\mathrm{sgn}(\mathcal H(k))\, .
\ee
\item The correlation matrix of a Slater-determinant state that time evolves under the noninteracting Hamiltonian \eref{a:eq:free} has the symbol
\be
\Gamma(k;t)=e^{-i\mathcal H(k) t}\Gamma(k;0)e^{i\mathcal H(k) t}\, .
\ee
\item In the thermodynamic limit ($L\rightarrow\infty$), the expectation value of an operator $A$ of the form \eref{a:eq:free} in a Slater-determinant state with correlation matrix $\Gamma$ can be written as follows
\be
\lim_{L\rightarrow\infty }\frac{1}{L}\braket{A}=\frac{1}{4 n}\int_{-\pi}^{\pi}\frac{\mathrm d k}{2\pi}\tr[\Gamma(k)\mathcal A(k)]\, .
\ee
\end{enumerate}

\section{Time averages of interacting operators}\label{a:Wick} %

In this appendix we show the validity of Property \ref{L:Wick}. The Property can be more easily proven for Jordan-Wigner fermions 
\be
c_j=\frac{1}{2}(a_j^x-i a_j^y)\, ;
\ee
the relation for the Majorana fermions $a^x_j, a_j^y$ will then follow by linearity.

In order to proceed it is convenient to introduce the following notation 
\be
\mathbf{c}_{j}^{\alpha}(t)\equiv
\Biggl\{
\begin{array}{ll}
c^{\dag}_j(t) &\alpha=+\\
c_j(t) &\alpha=-\,.
\end{array}
\ee  
The relation between $c^\dag,c$ and the Bogoliubov fermions $b^\dag(k),b(k)$ that diagonalise the unperturbed (noninteracting) Hamiltonian $H_{XY}$ \eref{eq:XY} can be written as
\be
\label{eq:Bog}
\mathbf{c}^{\alpha}_{j}(t)=\frac{1}{\sqrt{L}}\sum_{k}\sum_{\beta=\pm} e^{i \alpha j k}U(k)^{\alpha}_{\beta} \mathbf{b}_\beta(k)e^{i \beta \varepsilon_k t}\,.
\ee
Here $U(k)$ is the $2\times2$ matrix defining the Bogoliubov transformation, $\varepsilon_k$ is the dispersion relation in the 1-site shift invariant representation
\be\label{a:disp}
\varepsilon_k=J\sqrt{\cos^2 k+\gamma^2\sin^2 k}
\ee
and we set
\be
 \mathbf{b}_\beta(k)\equiv
 \Biggl\{
\begin{array}{ll}
b^{\dag}(k) &\beta=+\\
b(-k)               &\beta=-\,.
\end{array}
 \ee
The relation  \eref{eq:Wick} is then equivalent to 
\be
 \label{eq:WickJWls}
\frac{1}{L}\overline{\sum_j \mathbf{c}_{j+n1}^{\alpha_1} \mathbf{c}_{j+n2}^{\alpha_2} \mathbf{c}_{j+n3}^{\alpha_3} \mathbf{c}_{j+n4}^{\alpha_4}}= \mathcal F_{\{n\}}^{\{\alpha\}}+ \mathcal A_{\{n\}}^{\{\alpha\}}\, ,
\ee
where ${\mathcal F_{\{n\}}^{\{\alpha\}}}$ is a factorised term
\be\label{eq:WickJWrs}
{\mathcal F_{\{n\}}^{\{\alpha\}}}=\sum_{s=0}^1\underbrace{\mathbf{c}_{n_1}^{\alpha_1} \mathbf{c}_{n_2}^{\alpha_2}}_{s}\underbrace{\mathbf{c}_{n_3}^{\alpha_3} \mathbf{c}_{n_4}^{\alpha_4}}_{s}-\underbrace{\mathbf{c}_{n_1}^{\alpha_1} \mathbf{c}_{n_3}^{\alpha_3}}_{s}\underbrace{\mathbf{c}_{n_2}^{\alpha_2} \mathbf{c}_{n_4}^{\alpha_4}}_{s}+\underbrace{\mathbf{c}_{n_2}^{\alpha_2} \mathbf{c}_{n_3}^{\alpha_3}}_{s}\underbrace{\mathbf{c}_{n_1}^{\alpha_1} \mathbf{c}_{n_4}^{\alpha_4}}_{s}
\ee
\be
\label{eq:2p}
\underbrace{\mathbf{c}_{n_1}^{\alpha} \mathbf{c}_{n_2}^{\beta}}_s= \overline{ \frac{1}{L}\sum_\ell (-1)^{s\ell}\mathbf{c}_{\ell+n_1}^{\alpha} \mathbf{c}_{\ell+n_2}^{\beta }}
\ee
and $ \mathcal A_{\{n\}}^{\{\alpha\}}$ is the remaining contribution.
Using \eref{eq:Bog} we can explicitly carry out the time average and the sum over $\ell$ in \eref{eq:2p}. We obtain 
\be
\label{eq:2pmomentumrep}
\underbrace{\mathbf{c}_{n_1}^{\alpha} \mathbf{c}_{n_2}^{\beta}}_s= \frac{1}{L}\sum_{k} e^{- i \alpha (n_2-n_1) k}e^{i n_2 s\pi}U(k)^{\alpha}_{\gamma}U({\alpha\beta}\bar k_s)^{\beta}_{\bar\gamma}\mathbf{b}_\gamma(k)\mathbf{b}_{\bar\gamma}({\alpha\beta}\bar k_s)\, ,
\ee
where $\bar \alpha = -\alpha$ and we defined
\be
k_s = k +\pi s\,.
\ee
Analogously, \eref{eq:WickJWls} reads as
\bea
\label{eq:averaged}
\fl\qquad\frac{1}{L}\overline{\sum_j \mathbf{c}_{j+n_1}^{\alpha_1} \mathbf{c}_{j+n_2}^{\alpha_2} \mathbf{c}_{j+n_3}^{\alpha_3} \mathbf{c}_{j+n_4}^{\alpha_4}}=\nn
\fl\qquad=\frac{1}{L^2}\sum_{\{k_i\}} \prod_{j=1}^4\left( e^{i \alpha_j n_j k_j} U(k_j)^{\alpha_j}_{\beta_j}\right)\mathbf{b}_{\beta_1}(k_1)\mathbf{b}_{\beta_2}(k_2)\mathbf{b}_{\beta_3}(k_3)\mathbf{b}_{\beta_4}(k_4)\nn
\fl\qquad\qquad\qquad\qquad\times\delta_{\alpha_1 k_1+\alpha_2 k_2+\alpha_3 k_3+\alpha_4 k_4}\delta_{\beta_1\varepsilon_{k_1}+\beta_2\varepsilon_{k_2}+\beta_3\varepsilon_{k_3}+\beta_4\varepsilon_{k_4}}\,.
\eea
In order to compute the sums over the momenta it is necessary to solve the constraints given by the delta functions, \emph{i.e.} 
\bea
\alpha_1 k_1+\alpha_2 k_2+\alpha_3 k_3+\alpha_4 k_4=0\nn
\beta_1\varepsilon_{k_1}+\beta_2\varepsilon_{k_2}+\beta_3\varepsilon_{k_3}+\beta_4\varepsilon_{k_4}=0\,.\label{eq:EnergyConstraint}
\eea
Some of the solutions to these equations can be found by requiring the terms of \eref{eq:EnergyConstraint}  to cancel in pairs.
This would give 
\be
\label{eq:solution}
\fl\qquad\qquad \delta_{\alpha_1 k_1+\alpha_2 k_2+\alpha_3 k_3+\alpha_4 k_4}\delta_{\beta_1\varepsilon_{k_1}+\beta_2\varepsilon_{k_2}+\beta_3\varepsilon_{k_3}+\beta_4\varepsilon_{k_4}}=\sum_{s=0,1}\Delta^s_1+\Delta^s_2+\Delta^s_3\,,
\ee
with 
\bea
\Delta^s_1=\delta_{\beta_1,\bar\beta_2}\delta_{\beta_3,\bar\beta_4}\delta_{\bar\alpha_1 k_1,\alpha_1 k_{2,s}}\delta_{\bar\alpha_3 k_3,\alpha_4 k_{4,s}}\\
\Delta^s_2=\delta_{\beta_1,\bar\beta_3}\delta_{\beta_2,\bar\beta_4}\delta_{\bar\alpha_1 k_1,\alpha_3 k_{3,s}}\delta_{\bar\alpha_2 k_2,\alpha_4 k_{4,s}}\\
\Delta^s_3=\delta_{\beta_1,\bar\beta_4}\delta_{\beta_2,\bar\beta_3}\delta_{\bar\alpha_1 k_1,\alpha_4 k_{4,s}}\delta_{\bar\alpha_2 k_2,\alpha_3 k_{3,s}}\,.
\eea
For a generic dispersion relation it is reasonable to expect these solutions to be the only possible. For the specific dispersion considered, \eref{a:disp}, 
equations \eref{eq:EnergyConstraint} admit other solutions in the thermodynamic limit. We call these \emph{anomalous solutions} because they strongly depend on the precise form of the dispersion relation. We now show that the term $ \mathcal A_{\{n\}}^{\{\alpha\}}$ in \eref{eq:WickJWls} is exactly the contribution arising from these solutions, \emph{i.e.} 
\bea
\fl \qquad \mathcal F_{\{n\}}^{\{\alpha\}}=\frac{1}{L^2}\sum_{k_1, k_2, k_3, k_4} \sum_{s=0,1} \prod_{j=1}^4\left( e^{i \alpha_j n_j k_j} U(k_j)^{\alpha_j}_{\beta_j}\right)\mathbf{b}_{\beta_1}(k_1)\mathbf{b}_{\beta_2}(k_2)\mathbf{b}_{\beta_3}(k_3)\mathbf{b}_{\beta_4}(k_4)\nn
\qquad\qquad\qquad\qquad\qquad\qquad \times\{\Delta^s_1+\Delta^s_2+\Delta^s_3\}\,.
\eea
We stress that the operator  $ \mathcal A_{\{n\}}^{\{\alpha\}}$ will be nonzero only in the thermodynamic limit. 

We consider for example the term containing $\Delta_2$. We have 
\bea\label{a:F}
\fl\frac{1}{L^2}\sum_{k_1, k_2, k_3, k_4} \sum_{s=0,1}\prod_{j=1}^4\left( e^{i \alpha_j n_j k_j} U(k_j)^{\alpha_j}_{\beta_j}\right)
\mathbf{b}_{\beta_1}(k_1)\mathbf{b}_{\beta_2}(k_2)\mathbf{b}_{\beta_3}(k_3)\mathbf{b}_{\beta_4}(k_4)\Delta^s_2\,\nn
\fl\quad=\frac{1}{L^2}\sum_{p,q}\sum_{s=0,1}e^{i \alpha_1 (n_1-n_3) p + i \alpha_2 (n_2-n_4) q}e^{i n_3 s \pi}e^{i n_4 s \pi}U(p)^{\alpha_1}_{\beta_1}U(q)^{\alpha_2}_{\beta_2}U({\alpha_1\alpha_3}\bar p_{s})^{\alpha_3}_{\bar\beta_1}U({\alpha_2\alpha_4}\bar q_{s})^{\alpha_4}_{\bar\beta_2}\nn
\fl\qquad\qquad\qquad\qquad\qquad\times\mathbf{b}_{\beta_1}(p)\mathbf{b}_{\beta_2}(q)\mathbf{b}_{\bar\beta_1}({\alpha_1\alpha_3}\bar p_{s})\mathbf{b}_{\bar\beta_2}({\alpha_2\alpha_4} \bar q_{s})\nn
\fl \quad =-\sum_{s=0,1}\underbrace{\mathbf{c}_{n_1}^{\alpha_1} \mathbf{c}_{n_3}^{\alpha_3}}_s\underbrace{\mathbf{c}_{n_2}^{\alpha_2} \mathbf{c}_{n_4}^{\alpha_4}}_s+\,O\bigl(L^{-1}\bigr)\, ,
\eea
where we used the commutation relations of the $\{\mathbf{b}_{\beta}(k)\}$ in the last step.
Although the terms $\mathcal F_{\{n\}}^{\{\alpha\}}$ and $\mathcal A_{\{n\}}^{\{\alpha\}}$ are in fact multiplied by $L$ in the time average of \eref{eq:V}, the possible corrections $O(L^{-1})$ in \eref{a:F} (which would result in corrections $O(L^0)$ in the effective Hamiltonian) are locally irrelevant, because their density approaches zero in the thermodynamic limit.   

We obtain analogous results for $\Delta_1$ and $\Delta_3$, that is to say  \eref{eq:WickJWls}.

\begin{remark}
We point out that for other dispersion relations (still with the properties $\varepsilon_k=\varepsilon_{k+\pi}$ and $\varepsilon_k\neq\varepsilon_{k+\pi/n}$ for generic $k$ and $n>1$) the anomalous terms could be factorised as well. 
Generally in such situations the factors have a very simple time dependence, \emph{e.g.} a single oscillation frequency.  
As a consequence, relaxation is ruled out. 
\end{remark} 

\section{Towards a mean-field description}\label{a:MF}
In this appendix we prove the Lemmas of Section \ref{s:mf}.

\begin{lemma}\label{L:opnorm}
If $\mathcal O\in \mathcal E$, the operator norm (\emph{i.e.} the maximal eigenvalue in absolute value) of $\mathcal O/L$ is bounded.
\end{lemma}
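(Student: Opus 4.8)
The plan is to reduce the statement to a bound on a single monomial of the form \eref{eq:form}. Since $\mathcal O\in\mathcal E$ is a \emph{finite} linear combination of such terms and the operator norm is subadditive, by the triangle inequality it suffices to bound each summand separately and add the (finitely many) resulting bounds. The whole argument is then a matter of power counting in $L$.

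First I would control the operator norm of a single local translation invariant factor $H_j$. Writing $H_j=\sum_{\ell=1}^L h_j(\ell)$, where $h_j(\ell)$ is the translate to site $\ell$ of a fixed local density supported on a bounded block of sites, translation invariance gives $\|h_j(\ell)\|=\|h_j\|$ for all $\ell$, and the triangle inequality yields $\|H_j\|\le L\,\|h_j\|$. The crucial point is that $\|h_j\|$ is a constant \emph{independent of} $L$: because the local Hilbert space is finite dimensional and the density is supported on a fixed finite number of sites, $h_j$ is a fixed finite-dimensional matrix whose norm does not grow with the chain length. This is exactly where the finite-dimensionality assumption stressed in \Sref{s:mf} enters.

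Next I would invoke submultiplicativity of the operator norm. For a monomial \eref{eq:form} this gives
\be
\Bigl\|\frac{1}{L^{n-1}}H_1\cdots H_n\Bigr\|\le \frac{1}{L^{n-1}}\prod_{j=1}^n\|H_j\|\le \frac{1}{L^{n-1}}\prod_{j=1}^n L\,\|h_j\|= L\prod_{j=1}^n\|h_j\|\, .
\ee
Dividing by $L$ exactly cancels the residual factor of $L$, so each term of $\mathcal O/L$ is bounded by $\prod_{j}\|h_j\|$, a constant independent of the chain length. Summing the finitely many such bounds proves that $\|\mathcal O/L\|$ is bounded uniformly in $L$, which is what we want.

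I do not expect a genuine obstacle here: the estimate is direct power counting combined with submultiplicativity, and the only ingredient that could spoil it — an $L$-dependent density norm — is excluded by the finite local Hilbert space. The one point deserving a short remark is that the lemma identifies the operator norm with the maximal eigenvalue in absolute value; for the non-Hermitian monomials this quantity is the spectral radius, which is always dominated by the operator norm, so the submultiplicative estimate above controls it as well.
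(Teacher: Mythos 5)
Your proof is correct and follows essentially the same route as the paper's: expand $\mathcal O/L$ into its finitely many monomials, bound each extensive factor's norm by $L$ times the ($L$-independent) norm of its local density, and use submultiplicativity of the operator norm so that the powers of $L$ cancel exactly. The closing remark on the spectral radius versus the operator norm for non-Hermitian terms is a sensible (if minor) addition that the paper leaves implicit.
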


\begin{proof}
The proof is straightforward. Let us expand $\mathcal O/L$ as in \eref{eq:opform}:
\be
\frac{\mathcal O}{L}=\sum_{j=1}^N \frac{1}{L^{n_j}}\mathcal O_1^{(j)}\cdots \mathcal O_{n_j}^{(j)}
\ee
where $\mathcal O_m^{(j)}$ have local densities, that is to say, they can be written as follows
\be
\mathcal O_m^{(j)}=\sum_\ell o_{m;\ell}^{(j)} \, ,
\ee
with $o_{m;\ell}^{(j)}$ local operators. We immediately find the chain of inequalities
\be
\fl \parallel \frac{\mathcal O}{L}\parallel\leq \sum_{j=1}^N\frac{1}{L^{n_j}} \parallel \mathcal O^{(j)}_1\cdots \mathcal O^{(j)}_{n_j} \parallel\leq\sum_{j=1}^N \parallel \frac{\mathcal O_1^{(j)}}{L}\parallel\cdots \parallel \frac{\mathcal O_{n_j}^{(j)}}{L}\parallel\leq  \sum_{j=1}^N (\max_{m,\ell}\parallel o_{m;\ell}^{(j)} \parallel)^{n_j}\, .
\ee 
The right hand side is clearly $O(L^0)$ because $N$ and $n_j$ are finite by definition and $o_{m;\ell}^{(j)}$ are local.
\end{proof}

\begin{lemma}\label{L:algebra}
If $\mathcal O,\tilde{ \mathcal O}\in \mathcal E$, then $[\mathcal O,\tilde{\mathcal O}]\in \mathcal E$ as well. 
\end{lemma}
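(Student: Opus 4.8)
The plan is to reduce the statement to a single pair of monomials and then expand the commutator with the Leibniz rule, checking at each stage that both the elementary factors and the powers of $L$ stay within the template \eref{eq:form} that defines $\mathcal E$. Since the commutator is bilinear and $\mathcal E$ is, by construction, a linear space (finite linear combinations of terms \eref{eq:form}), it suffices to treat the case in which $\mathcal O$ and $\tilde{\mathcal O}$ are themselves single terms, say $\mathcal O=L^{-(n-1)}H_1\cdots H_n$ and $\tilde{\mathcal O}=L^{-(m-1)}G_1\cdots G_m$, with $H_i,G_j$ local and translation invariant.

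First I would expand $[H_1\cdots H_n,G_1\cdots G_m]$ by iterating the derivation property $[AB,C]=A[B,C]+[A,C]B$ in both arguments. This yields a finite sum of $nm$ terms of the schematic form $H_1\cdots H_{i-1}G_1\cdots G_{j-1}[H_i,G_j]G_{j+1}\cdots G_m H_{i+1}\cdots H_n$, each containing exactly one inner commutator $[H_i,G_j]$ flanked, in some fixed order, by the remaining $n-1$ factors $H$ and $m-1$ factors $G$.

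The crux is to show that each $[H_i,G_j]$ is again a local translation invariant operator, hence a legitimate elementary factor in \eref{eq:form}. Writing $H_i=\sum_\ell h_\ell$ and $G_j=\sum_{\ell'}g_{\ell'}$ with $h,g$ local densities of finite range, the commutator $[h_\ell,g_{\ell'}]$ vanishes unless the supports overlap, i.e. unless $|\ell-\ell'|$ is bounded by those ranges. Therefore $[H_i,G_j]=\sum_\ell\bigl(\sum_r[h_\ell,g_{\ell+r}]\bigr)$, with $r$ ranging over a finite set, which manifestly exhibits a local density; translation invariance is inherited (if $H_i$ is $p$-site and $G_j$ is $q$-site shift invariant, the bracket is $\mathrm{lcm}(p,q)$-site shift invariant). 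This is the standard fact that the finite-range, translation invariant operators form a Lie algebra, and it is the only genuinely non-trivial input; it relies solely on the finiteness of the interaction ranges.

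Finally I would count. Each term in the expansion is $L^{-(n+m-2)}$ times a product of $(n-1)+(m-1)+1=n+m-1$ local translation invariant operators. Setting $k=n+m-1$, the prefactor is exactly $L^{-(k-1)}$, which is precisely the normalisation demanded of a $k$-fold product in \eref{eq:form}. Consequently $[\mathcal O,\tilde{\mathcal O}]$ is a finite linear combination of operators of the form \eref{eq:form}, and so $[\mathcal O,\tilde{\mathcal O}]\in\mathcal E$. The power counting works out automatically and, pleasingly, reveals that the $L^{-(n-1)}$ normalisation in the definition of $\mathcal E$ is tailored precisely so that the class is closed under the Lie bracket.
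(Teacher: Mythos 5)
Your proof is correct and follows essentially the same route as the paper's: reduce to single monomials by bilinearity, expand via the Leibniz rule into terms each containing one inner commutator $[H_i,G_j]$, observe that this commutator has a local density because local operators commute unless their supports overlap, and note that the number of extensive factors exceeds the exponent of $1/L$ by one. The only addition beyond the paper's argument is your explicit remark on the $\mathrm{lcm}$ of the shift periods, which is a harmless refinement of the same power-counting proof.
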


\begin{proof}
Without loss of generality, we can restrict to two single terms of the expansions \eref{eq:opform} of $\mathcal O$ and $\tilde{\mathcal O}$. The commutator of the two terms reads as
\begin{eqnarray}
\fl\quad [\frac{1}{L^{n_i-1}}\mathcal O_1^{(i)}\cdots \mathcal O_{n_i}^{(i)},\frac{1}{L^{n_j-1}}\tilde{\mathcal O}_1^{(j)}\cdots \tilde{\mathcal O}_{n_j}^{(j)}]=\nonumber \\
\fl\qquad\frac{1}{L^{n_j+n_j-2}}\sum_{k,p}\mathcal O_1^{(i)}\cdots \mathcal O_{k-1}^{(i)}\tilde{\mathcal O}_1^{(j)}\cdots \tilde{\mathcal O}_{p-1}^{(j)}[\mathcal O_k^{(i)},\tilde{\mathcal O}_p^{(j)}]\mathcal O_{k+1}^{(i)} \cdots\mathcal O_{n_i}^{(i)}\tilde{\mathcal O}_{p+1}^{(j)} \cdots\tilde{\mathcal O}_{n_j}^{(j)}\, .
\end{eqnarray}
Since $[\mathcal O_k^{(i)},\tilde{\mathcal O}_p^{(j)}]$ have local densities (the commutator of two local operators is nonzero only if there is a region on which they both act nontrivially; in addition, its range is smaller than the sum of the ranges of the two operators), the number of extensive operators exceeds by one the exponent of $1/L$. Thus, $[\mathcal O,\tilde{\mathcal O}]\in \mathcal E$.
\end{proof}

\begin{lemma}\label{a:L:1}
(viz. Lemma \ref{L:1}) Let $\mathcal O\in \mathcal E$ and $\ket{\Psi}$ a state with cluster decomposition properties. The expectation value of $\mathcal O/L$ in $\ket{\Psi}$ can be reduced to the expectation values of the local translation invariant operators it consists of:
\be\label{eq:fact}
\lim_{L\rightarrow\infty }\braket{\Psi|\frac{H_1}{L}\cdots \frac{H_n}{L}|\Psi}=\lim_{L\rightarrow\infty }\prod_j \frac{\braket{\Psi|H_j|\Psi}}{L}\, .
\ee
\end{lemma}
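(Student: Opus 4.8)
The plan is to expand each extensive factor in terms of its local density, writing $H_j=\sum_{x}h_j(x)$ with $h_j(x)$ supported near site $x$, so that
\[
\frac{1}{L^n}\braket{\Psi|H_1\cdots H_n|\Psi}=\frac{1}{L^n}\sum_{x_1,\dots,x_n}\braket{\Psi|h_1(x_1)\cdots h_n(x_n)|\Psi}\, .
\]
By translation invariance $\braket{\Psi|h_j(x)|\Psi}=\bar h_j$ is independent of $x$, hence $\braket{\Psi|H_j|\Psi}/L=\bar h_j$ and the right-hand side of \eref{eq:fact} equals $\prod_j\bar h_j$. It therefore suffices to show that the multiple sum above converges to this same value, which is essentially a law-of-large-numbers statement made rigorous by a counting argument together with cluster decomposition.

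First I would fix $R>0$ and split the index set into the \emph{coincidence} region $D_R$, in which at least one pair of positions is closer than $R$, and its complement, in which all pairwise distances exceed $R$. Since the on-site Hilbert space is finite dimensional, each $h_j(x)$ has bounded operator norm, so $|\braket{\Psi|h_1(x_1)\cdots h_n(x_n)|\Psi}|\le\prod_j\|h_j(0)\|\equiv M$ uniformly in the positions. A direct count gives $|D_R|\le\binom{n}{2}(2R+1)L^{n-1}$, because forcing one pair to be close costs a free summation index and leaves $O(R)$ choices for the partner. Consequently the contribution of $D_R$ to the normalised sum is bounded by $\binom{n}{2}(2R+1)M/L$, which vanishes as $L\to\infty$ at fixed $R$.

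On the complement, cluster decomposition applies: given $\epsilon>0$, the limit in the footnote to \Sref{s:summary} guarantees a threshold $R$ such that $\bigl|\braket{\Psi|h_1(x_1)\cdots h_n(x_n)|\Psi}-\prod_j\bar h_j\bigr|<\epsilon$ whenever $\min_{i\neq j}|x_i-x_j|>R$. Replacing the correlator by the factorised value on this region introduces an error of at most $\epsilon$ per term, hence at most $\epsilon$ after dividing by $L^n$, while the number of retained terms over $L^n$ tends to $1$ because $|D_R|/L^n\to0$. Combining the two estimates yields
\[
\limsup_{L\to\infty}\Bigl|\frac{1}{L^n}\braket{\Psi|H_1\cdots H_n|\Psi}-\prod_j\bar h_j\Bigr|\le\epsilon\, ,
\]
and since $\epsilon$ is arbitrary the limit equals $\prod_j\bar h_j=\lim_{L\to\infty}\prod_j\braket{\Psi|H_j|\Psi}/L$, which is \eref{eq:fact}.

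The only delicate point, and the step I expect to carry the real content, is the \emph{uniformity} of cluster decomposition: the footnote phrases factorisation as a single limit, whereas the argument needs a threshold $R$ that works simultaneously for \emph{all} configurations of large minimal separation. I would either adopt this uniform version as the precise meaning of the cluster property for the class of local densities at hand, or derive it from the pointwise statement by using translation invariance (which reduces the dependence to the $n-1$ relative coordinates) together with the uniform operator-norm bound, upgrading the pointwise limit to the required uniform estimate by a standard tail argument. The counting of $D_R$ and the norm bound are routine once the finite dimensionality of the on-site Hilbert space is invoked, and the extension from a single monomial to a general $\mathcal O\in\mathcal E$ follows immediately by linearity, since $\mathcal E$ consists of finite linear combinations of such terms.
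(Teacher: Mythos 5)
Your proof is correct and takes essentially the same route as the paper's: expand each $H_j$ in its local density, bound the contribution of configurations containing a close pair by a counting argument (at most ${n \choose 2}\,\xi L^{n-1}$ terms times a uniform operator-norm bound, hence $O(\xi/L)$ after normalisation), and factorise the well-separated configurations via cluster decomposition, with the same uniformity caveat that the paper absorbs into a function $f(\xi,L)$ satisfying $\lim_{\xi\rightarrow\infty}\lim_{L\rightarrow\infty}f(\xi,L)=0$. The one deviation is your appeal to translation invariance of $\ket{\Psi}$ to set $\braket{\Psi|h_j(x)|\Psi}=\bar h_j$, which is not among the lemma's hypotheses; the paper instead keeps the position-dependent expectations $\braket{\Psi|h_{j,\ell_j}|\Psi}$ and bounds the close-pair contribution to the factorised sum by the same counting, a replacement under which your argument goes through verbatim.
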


\begin{proof}
Let us consider a term \eref{eq:form} of the expansion \eref{eq:opform}. Its expectation value (per unit of length) is given by
\be\label{eq:locexp}
\braket{\Psi|\frac{H_1}{L}\cdots \frac{H_n}{L}|\Psi}=\frac{1}{L^n}\sum_{\ell_1,\dots,\ell_n}\braket{\Psi|h_{1,\ell_1}\cdots h_{n,\ell_n}|\Psi}\, ,
\ee 
where $h_{j,\ell_j}$ are local operators acting nontrivially only around $\ell_j$ and such that 
\be
H_j=\sum_\ell h_{j,\ell_j}\, .
\ee
By cluster decomposition we have
\be\label{eq:rhs}
\fl \sum_{\ell_1,\dots,\ell_n\atop |\ell_j-\ell_{j'}|>\xi\gg 1\ (\forall j\neq j')}\braket{\Psi|\frac{h_{1,\ell_1}}{L}\cdots \frac{h_{n,\ell_n}}{L}|\Psi}= \!\!\!\!\!\!\!\!\!\!\!\!\sum_{\ell_1,\dots,\ell_n\atop |\ell_j-\ell_{j'}|>\xi\gg 1\ (\forall j\neq j')}\!\!\!\!\!\!\!\!\!\!\!\!\frac{\braket{\Psi|h_{1,\ell_1}|\Psi}}{L}\cdots\frac{\braket{\Psi| h_{n,\ell_n}|\Psi}}{L}+f(\xi,L)\, ,
\ee
where
\be
\lim_{\xi\rightarrow\infty}\lim_{L\rightarrow\infty}f(\xi,L)=0\, .
\ee
The difference between \eref{eq:locexp} and the left hand side of \eref{eq:rhs} can be bounded from above as follows
\be
\fl\qquad\Bigl|\sum_{\ell_1,\dots,\ell_n\atop |\ell_j-\ell_{j'}|\leq \xi\ (\exists j\neq j')}\braket{\Psi|\frac{h_{1,\ell_1}}{L}\cdots \frac{h_{n,\ell_n}}{L}|\Psi}\Bigr|\leq {n \choose 2}\frac{\xi}{L} \max_{\{\ell\}}\Bigl|\braket{\Psi|h_{1,\ell_1}\cdots h_{n,\ell_n}|\Psi}\Bigr| \rightarrow 0\, .
\ee
Analogously
\be
\fl\qquad\Bigl|\sum_{\ell_1,\dots,\ell_n\atop |\ell_j-\ell_{j'}|\leq \xi\ (\exists j\neq j')}\!\!\!\!\!\!\!\!\!\!\!\!\frac{\braket{\Psi|h_{1,\ell_1}|\Psi}}{L}\cdots\frac{\braket{\Psi| h_{n,\ell_n}|\Psi}}{L}\Bigr|\leq {n \choose 2} \frac{\xi}{L} \max_{\{\ell\}}\Bigl|\prod_{j=1}^n\braket{\Psi|h_{j,\ell_j}|\Psi}\Bigr| \rightarrow 0\ ,
\ee
so that 
\be
\Bigl|\braket{\Psi|\frac{H_1}{L}\cdots \frac{H_n}{L}|\Psi}-\prod_j \frac{\braket{\Psi|H_j|\Psi}}{L}\Bigr|\leq |f(\xi,L)|+O(1/L)
\ee
Being $\xi$ arbitrary, we can take the limit $\lim_{\xi\rightarrow\infty}\lim_{L\rightarrow\infty}$, obtaining \eref{eq:fact}.
\end{proof}

\begin{lemma}\label{a:L:2}
If the state $\ket{\Psi_0}$ has cluster decomposition properties and $\mathcal O\in\mathcal E$, the mean-field Hamiltonian defined in \sref{s:mf} satisfies the following identity:
\be\label{eq:lemma2}
\fl \qquad\lim_{L\rightarrow\infty}\braket{\Psi_0|\bar U^\dag(t)[H^{\Psi_0}_{\rm MF}(t),\frac{\mathcal O}{L}]\bar U(t)|\Psi_0}=\lim_{L\rightarrow\infty}\braket{\Psi_0|\bar U^\dag(t)[H,\frac{\mathcal O}{L}]\bar U(t)|\Psi_0}\, ,
\ee
where $\bar U$ was defined in \eref{eq:U}.
\end{lemma}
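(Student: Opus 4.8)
The plan is to reduce both sides of \eref{eq:lemma2} to the same product of per-unit-length expectation values, exploiting linearity, the Leibniz rule, and the factorisation established in Lemma \ref{a:L:1}. First I would observe that it suffices to prove the identity for a single generic term $\frac{1}{L^{n-1}}H_1\cdots H_n$ of the expansion \eref{eq:opform} of $H$, since both commutators are linear in $H$ and there are finitely many such terms. For one term the Leibniz rule gives
\be
[\tfrac{1}{L^{n-1}}H_1\cdots H_n,\tfrac{\mathcal O}{L}]=\frac{1}{L^{n-1}}\sum_{k=1}^n H_1\cdots H_{k-1}[H_k,\tfrac{\mathcal O}{L}]H_{k+1}\cdots H_n\, ,
\ee
and by Lemma \ref{L:algebra} each factor $[H_k,\mathcal O/L]$ has local density while the remaining $H_\ell$ are extensive; hence every summand is an element of $\mathcal E$ divided by $L$, carrying exactly one factor of $1/L$ per extensive operator. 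This power counting is unchanged if $\mathcal O$ is itself a product, in which case $[H_k,\mathcal O/L]$ is a finite sum of products of extensive operators with the matching prefactor.

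The second step is to take the expectation value in $\bar U(t)\ket{\Psi_0}$ and apply Lemma \ref{a:L:1}. For this I must first secure that $\bar U(t)\ket{\Psi_0}$ has cluster decomposition properties. This follows because $H^{\Psi_0}_{\rm MF}(t)$ defined by \eref{eq:tran} is a genuine local (extensive) Hamiltonian whose coefficients $\prod_{\ell\neq j}\braket{\Psi_0|\bar U^\dag H_\ell\bar U|\Psi_0}/L$ are $O(L^0)$ and uniformly bounded in $L$ by Lemma \ref{L:opnorm}; thus $\bar U(t)$ is a finite-time evolution under a local Hamiltonian and Lieb--Robinson bounds guarantee that cluster decomposition is preserved. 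Granting this, the factorisation yields
\be
\fl\quad\lim_{L\to\infty}\braket{\Psi_0|\bar U^\dag \tfrac{1}{L^{n-1}}H_1\cdots[H_k,\tfrac{\mathcal O}{L}]\cdots H_n\bar U|\Psi_0}=\lim_{L\to\infty}\Bigl(\prod_{\ell\neq k}\tfrac{\braket{\Psi_0|\bar U^\dag H_\ell\bar U|\Psi_0}}{L}\Bigr)\braket{\Psi_0|\bar U^\dag[H_k,\tfrac{\mathcal O}{L}]\bar U|\Psi_0}\, ,
\ee
where the factor $\braket{\Psi_0|\bar U^\dag[H_k,\mathcal O/L]\bar U|\Psi_0}$ arises by regrouping the individual factorised expectation values (using that this commutator expectation factorises in the same way over the constituents of $\mathcal O$).

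Finally, summing over $k$ and comparing with the definition \eref{eq:tran}, the right-hand side is precisely $\lim_{L\to\infty}\braket{\Psi_0|\bar U^\dag[H^{\Psi_0}_{\rm MF}(t),\mathcal O/L]\bar U|\Psi_0}$, since $[H^{\Psi_0}_{\rm MF},\mathcal O/L]=\sum_k(\prod_{\ell\neq k}\braket{H_\ell}_t/L)[H_k,\mathcal O/L]$ carries exactly the same coefficients $\braket{H_\ell}_t=\braket{\Psi_0|\bar U^\dag H_\ell\bar U|\Psi_0}$ that appear in the factorisation. This is where the self-consistency of the construction enters: the mean-field couplings are defined through the very expectation values produced by factorisation, so the two commutators agree in the thermodynamic limit. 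I expect the main obstacle to be the justification of the factorisation step — establishing cluster decomposition of $\bar U(t)\ket{\Psi_0}$ from the locality of $H^{\Psi_0}_{\rm MF}(t)$ — together with the careful bookkeeping of the powers of $L$ when $\mathcal O$ is a nontrivial product; the remaining manipulations are routine once these are in place.
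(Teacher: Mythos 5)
Your proposal is correct and follows essentially the same route as the paper's proof: reduce to a single term of the expansion, expand the commutator via the Leibniz rule, note that $\bar U(t)\ket{\Psi_0}$ retains cluster decomposition because $H^{\Psi_0}_{\rm MF}(t)$ is local, and then apply the factorisation of Lemma \ref{a:L:1} to both sides, where the mean-field coefficients are by construction exactly the factorised expectation values. The only difference is cosmetic: you spell out the Lieb--Robinson justification for the cluster-decomposition step and the bookkeeping for composite $\mathcal O$, which the paper asserts more tersely.
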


\begin{proof}
Let us consider a generic term \eref{eq:form} of the expansion \eref{eq:opform} of $H$
\be
\tilde H=\frac{1}{L^{n-1}}H_1\cdots H_n\, .
\ee 
The corresponding term \eref{eq:tran} of the mean-field Hamiltonian \eref{eq:tran} is given by
\be
\tilde H_{\rm MF}^{\Psi_0}(t)=\sum_{\ell=1}^n \prod_{j\neq \ell} \frac{\braket{\Psi_0|\bar U^\dag(t)H_j\bar U(t)|\Psi_0}}{L}H_\ell\, .
\ee
By taking the commutators with $\mathcal O$ we find
\begin{eqnarray}\label{Htilde}
[\tilde H,\frac{\mathcal O}{L}]=\sum_{\ell=1}^n \prod_{j=1}^{\ell-1} \frac{H_j}{L}\frac{[H_{\ell},\mathcal O]}{L}\prod_{j=\ell+1}^n \frac{H_j}{L}\\
\label{HtildeMF}{}[\tilde H_{\rm MF}^{\Psi_0}(t),\frac{\mathcal O}{L}]=\sum_{\ell=1}^n \prod_{j\neq \ell} \frac{\braket{\Psi_0|\bar U^\dag(t)H_j\bar U(t)|\Psi_0}}{L}\frac{[H_\ell,\mathcal O]}{L}\, .
\end{eqnarray}
Because $\ket{\Psi_0}$ has cluster decomposition properties and the mean-field Hamiltonian is local at any time, the state $\bar U(t)\ket{\Psi_0}$ has cluster decomposition properties as well (the only difference with respect to $\ket{\Psi_0}$ is that the function $f$ of \eref{eq:rhs} is now time dependent).
Finally, by Lemma \ref{a:L:1}, in the thermodynamic limit the expectation values of \eref{Htilde} and \eref{HtildeMF} in the state $\bar U(t)\ket{\Psi_0}$ are identical, that is to say \eref{eq:lemma2}.
\end{proof} 

\begin{lemma}\label{a:L:3}
Let $\ket{\Psi_0}$ be a translation invariant state with cluster decomposition properties and $H,\mathcal O\in \mathcal E$. The time derivatives of the expectation value of $\mathcal O/L$ in the state evolving with $H_{\rm MF}^{\Psi_0}(t)$ fulfil
\be
\label{a:eq:derivs}
\fl\quad\frac{\textrm{d}^{\,n}}{\textrm{dt}^{\,n}}\lim_{L\rightarrow\infty}\braket{\Psi_0|\bar U^\dag(t)\frac{\mathcal O}{L}\bar U(t) |\Psi_0}=i^n\lim_{L\rightarrow\infty}\braket{\Psi_0|\bar U^\dag(t)\underbrace{\bigl[H,[H,...[H}_{n},\frac{\mathcal O}{L}\bigr]...]]\bar U(t) |\Psi_0}
\ee
\end{lemma}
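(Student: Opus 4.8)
The plan is to reduce the statement to the single first-order identity
\[
\frac{\mathrm d}{\mathrm d t}\,g_{\mathcal A}(t)=i\,g_{[H,\mathcal A]}(t),\qquad g_{\mathcal A}(t)\equiv\lim_{L\rightarrow\infty}\braket{\Psi_0|\bar U^\dag(t)\frac{\mathcal A}{L}\bar U(t)|\Psi_0},
\]
valid for every $\mathcal A\in\mathcal E$, and then iterate it. Since $[H,\mathcal A]\in\mathcal E$ by Lemma \ref{L:algebra}, the right-hand side is an object of the same type, so the $n$-fold nested commutator $[H,[H,\dots[H,\mathcal O]\dots]]$ never leaves $\mathcal E$ and applying the identity $n$ times, collecting the factors of $i$, reproduces exactly \eref{a:eq:derivs}. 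The existence of the limits $g_{\mathcal A}(t)$ is guaranteed by the factorisation Lemma \ref{L:1}, since $\bar U(t)\ket{\Psi_0}$ inherits cluster decomposition from $\ket{\Psi_0}$ (the mean-field Hamiltonian being local at every time).

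To establish the first-order identity I would start at finite $L$, where $\bar U(t)$ solves $\dot{\bar U}(t)=-iH_{\rm MF}^{\Psi_0}(t)\bar U(t)$ and everything is smooth. Differentiating through this equation yields
\[
\frac{\mathrm d}{\mathrm d t}\braket{\Psi_0|\bar U^\dag(t)\frac{\mathcal A}{L}\bar U(t)|\Psi_0}=i\braket{\Psi_0|\bar U^\dag(t)\Big[H_{\rm MF}^{\Psi_0}(t),\frac{\mathcal A}{L}\Big]\bar U(t)|\Psi_0}.
\]
Taking the thermodynamic limit of the right-hand side and applying Lemma \ref{a:L:2} (with $\mathcal A$ in place of $\mathcal O$) trades the mean-field Hamiltonian for $H$, so that the limit of the derivative equals $i\,g_{[H,\mathcal A]}(t)$. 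What remains — and what I expect to be the main obstacle — is to move the $t$-derivative outside the limit $L\rightarrow\infty$.

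I would settle this interchange through the integral form rather than by controlling derivatives directly. Writing $F_L(t)=\braket{\Psi_0|\bar U^\dag(t)\frac{\mathcal A}{L}\bar U(t)|\Psi_0}$, one has $F_L(t)=F_L(0)+\int_0^t F_L'(s)\,\mathrm d s$. The integrands are bounded uniformly in $L$ and in $s$ on any compact interval: $H_{\rm MF}^{\Psi_0}(s)$ is a finite linear combination of local translation invariant operators and hence lies in $\mathcal E$, so $[H_{\rm MF}^{\Psi_0}(s),\mathcal A]\in\mathcal E$ by Lemma \ref{L:algebra} and $\|[H_{\rm MF}^{\Psi_0}(s),\mathcal A]/L\|$ is $O(L^0)$ by Lemma \ref{L:opnorm}; the only $s$-dependence sits in the self-consistent coefficients of $H_{\rm MF}^{\Psi_0}$, which are themselves bounded by operator norms of local charges and therefore uniform in $s$. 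Since moreover $F_L'(s)\to i\,g_{[H,\mathcal A]}(s)$ pointwise by the previous paragraph, dominated convergence lets me pass to the limit under the integral, giving $g_{\mathcal A}(t)=g_{\mathcal A}(0)+i\int_0^t g_{[H,\mathcal A]}(s)\,\mathrm d s$. The integrand is continuous (being a locally uniform limit of the uniformly Lipschitz functions $F_L^{[H,\mathcal A]}$), so the fundamental theorem of calculus returns $g_{\mathcal A}'(t)=i\,g_{[H,\mathcal A]}(t)$, completing the first-order identity and, with the induction of the first paragraph, the lemma.
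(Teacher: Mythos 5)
Your proof is correct and follows essentially the same route as the paper: the paper also proceeds by induction on $n$, reducing everything to the first-order identity, using Lemma \ref{L:algebra} to keep the nested commutators inside $\mathcal E$ and Lemma \ref{a:L:2} to trade $H_{\rm MF}^{\Psi_0}(t)$ for $H$ in the inductive step. The only difference is that the paper performs the interchange of $\frac{\mathrm d}{\mathrm d t}$ with $\lim_{L\rightarrow\infty}$ implicitly, whereas you justify it explicitly via the integral form, the uniform operator-norm bounds of Lemma \ref{L:opnorm}, and dominated convergence — a genuine strengthening of a step the paper's proof takes for granted.
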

\begin{proof}
We proceed by induction. First of all we see that for $n=0$ the property is trivially satisfied; let then the property be true for $n$, we have 
\bea
\label{a:eq:derivs2}
\fl\frac{\textrm{d}^{\,n+1}}{{\textrm{d}t}^{\,n+1}}\lim_{L\rightarrow\infty}\braket{\Psi_0|\bar U^\dag(t)\frac{\mathcal O}{L}\bar U(t) |\Psi_0}=i^n\frac{\textrm{d}}{\textrm{d}t}\lim_{L\rightarrow\infty}\braket{\Psi_0|\bar U^\dag(t)\underbrace{\bigl[H,[H,...[H}_{n},\frac{\mathcal O}{L}\bigr]...]]\bar U(t) |\Psi_0}\nn
\fl\qquad\qquad\qquad\qquad\qquad\,\,\, = i^{n+1}\lim_{L\rightarrow\infty}\braket{\Psi_0|\bar U^\dag(t)[H_{MF}(t),\underbrace{\bigl[H,[H,...[H}_{n},\frac{\mathcal O}{L}\bigr]...]]\bar U(t) |\Psi_0}\nn
\fl\qquad\qquad\qquad\qquad\qquad\,\,\, = i^{n+1}\lim_{L\rightarrow\infty}\braket{\Psi_0|\bar U^\dag(t)\underbrace{\bigl[H,[H,...[H}_{n},\frac{\mathcal O}{L}\bigr]...]]\bar U(t) |\Psi_0}\,.
\eea
In the second step we used Lemma \ref{L:algebra} and Lemma \ref{a:L:2}. This concludes the proof.
\end{proof}
\begin{lemma}\label{a:T:1}
(viz. Lemma \ref{T:1})
Let $\ket{\Psi_0}$ be a translation invariant state with cluster decomposition properties and $H,\mathcal O\in \mathcal E$. 
Let the expectation value of $\mathcal O$ in the state that time evolves with $H_{\rm MF}^{\Psi_0}(t)$ be an analytic function of $t$ in the strip $|\mathrm{Im}[t]|<r$, with $r$ a nonzero constant.
In the thermodynamic limit, the time evolution with $H$ can be replaced by the time evolution with the mean-field Hamiltonian:
\be\label{eq:rep}
\lim_{L\rightarrow\infty}\braket{\Psi_0|e^{i H t}\frac{\mathcal O}{L} e^{-i H t}|\Psi_0}=\lim_{L\rightarrow\infty}\braket{\Psi_0|\bar U^\dag(t)\frac{\mathcal O}{L}\bar U(t) |\Psi_0}\, .
\ee

\end{lemma}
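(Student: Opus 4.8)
The plan is to compare the two functions of $t$ through their Taylor expansions about the origin and then use analyticity to propagate the equality along the real axis. Write $f_L(t)=\braket{\Psi_0|e^{iHt}\frac{\mathcal O}{L}e^{-iHt}|\Psi_0}$ for the exact evolution and $g_L(t)=\braket{\Psi_0|\bar U^\dag(t)\frac{\mathcal O}{L}\bar U(t)|\Psi_0}$ for the mean-field one, and set $F(t)=\lim_{L\to\infty}f_L(t)$, $G(t)=\lim_{L\to\infty}g_L(t)$. For finite $L$ the Hilbert space is finite dimensional, so $f_L$ is entire with $f_L^{(n)}(0)=i^n\braket{\Psi_0|\mathrm{ad}_H^n(\mathcal O/L)|\Psi_0}$, where $\mathrm{ad}_H(X)=[H,X]$. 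The crucial observation is that, evaluated at $t=0$ (where $\bar U(0)=\1$), Lemma \ref{a:L:3} yields exactly the same derivatives for the mean-field side, $G^{(n)}(0)=i^n\lim_{L\to\infty}\braket{\Psi_0|\mathrm{ad}_H^n(\mathcal O/L)|\Psi_0}$. Hence $F$ and $G$ share \emph{all} derivatives at the origin in the thermodynamic limit.

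Next I would control these common coefficients. Iterating Lemma \ref{L:algebra} shows $\mathrm{ad}_H^n(\mathcal O)\in\mathcal E$ for every $n$, so by Lemma \ref{L:opnorm} the operator $\mathrm{ad}_H^n(\mathcal O)/L$ has bounded norm and, by the cluster factorisation of Lemma \ref{a:L:1}, its expectation value admits a well-defined limit $c_n=i^n\lim_{L\to\infty}\braket{\Psi_0|\mathrm{ad}_H^n(\mathcal O/L)|\Psi_0}$. By hypothesis $G$ is analytic in the strip $|\mathrm{Im}[t]|<r$, so it equals its convergent Taylor series $\sum_n c_n t^n/n!$ there, and Cauchy estimates give $|c_n|\le M\,n!/\rho^{\,n}$ for every $\rho<r$. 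For $|t|<r$ one may then write $f_L(t)-G(t)=\sum_n \frac{t^n}{n!}\bigl(f_L^{(n)}(0)-G^{(n)}(0)\bigr)$, where each summand tends to zero as $L\to\infty$; the equality $F=G$ on $(-r,r)$ then reduces to exchanging the limit $L\to\infty$ with the sum over $n$.

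This interchange is the main obstacle. It requires a dominating bound $\frac{|t|^n}{n!}\bigl\|\mathrm{ad}_H^n(\mathcal O)/L\bigr\|\le M'/\rho'^{\,n}$ that is \emph{uniform in $L$} and summable in $n$ — equivalently, a lower bound on the radius of convergence of the entire functions $f_L$ that does not degrade as $L\to\infty$. Lemma \ref{L:opnorm} secures boundedness of each coefficient for fixed $n$, but not the factorial growth control $\bigl\|\mathrm{ad}_H^n(\mathcal O)/L\bigr\|\lesssim n!/\rho^{\,n}$ uniform in $n$; supplying the latter amounts to a Lieb--Robinson--type estimate tailored to the nonlocal class $\mathcal E$, in which the $1/L^{n_i-1}$ prefactors are precisely what keep the iterated commutators extensive and their densities bounded. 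I expect this to be the delicate step, and it is exactly where the finite local Hilbert space (used already in Lemma \ref{L:opnorm}) and the rigid form \eref{eq:form} are essential.

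Granting such a bound, dominated convergence gives $F(t)=G(t)$ for $|t|<r$. To reach all real times I would bootstrap: since Lemma \ref{a:L:3} holds at \emph{every} $t$, its $n$-th time derivative at an arbitrary base point $t_0$ is again the mean-field expectation of $\mathrm{ad}_H^n(\mathcal O/L)$ evolved to $t_0$, so one can re-expand both $F$ and $F$-vs-$G$ around any real $t_0$ at which $G$ is analytic and march step by step along the real axis, extending the equality throughout the domain of analyticity assumed in the statement. This establishes the claimed identity and completes the proof of the lemma.
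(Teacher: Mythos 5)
Your proposal follows essentially the same route as the paper's proof in Appendix \ref{a:MF}: match all Taylor coefficients at $t=0$ through the iterated-commutator identity of Lemma \ref{a:L:3}, use the assumed analyticity of the mean-field evolution to resum the series for $|t|<r$, and then extend the equality to all real times by re-expanding at successive base points in steps bounded by $r$ — your final bootstrap is exactly the paper's induction on the time $t_\ast$.

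The one place where the executions differ is the step you flag as delicate. You work at finite $L$ and must interchange $\lim_{L\to\infty}$ with the Taylor sum, for which you correctly note that a uniform-in-$L$ bound of the type $\|\mathrm{ad}_H^n(\mathcal O)/L\|\lesssim n!/\rho^n$ would suffice, and you leave that bound unproven. The paper never confronts such a bound: it works directly with the two-variable limit function $f(t,s)=\lim_{L\to\infty}\braket{\Psi_0|\bar U^\dag(t)e^{iHs}\frac{\mathcal O}{L}e^{-iHs}\bar U(t)|\Psi_0}$ and, in \eref{eq:step0}, simply asserts that the common Taylor series resums not only to $f(\tau,0)$ (which the analyticity hypothesis does guarantee) but also to $f(0,\tau)$ — i.e. that the limiting \emph{exact} dynamics is itself analytic and equal to its Taylor series; a smooth limit whose series converges need not converge to the function, so this is an assumption, not a consequence of the stated lemmas. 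It is logically the same assumption you isolate. In short, your proposal has no gap \emph{relative to the paper}: the step you surface explicitly is the step the paper passes over silently, and your finite-$L$ formulation has the merit of identifying what a fully rigorous argument would additionally require (a Lieb--Robinson-type estimate tailored to the class $\mathcal E$, or some other route to uniform analyticity in $L$).
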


\begin{proof}
We define
\be
f(t,s)=\lim_{L\rightarrow\infty}\braket{\Psi_0|\bar U^\dag(t)e^{i H s}\frac{\mathcal O}{L}e^{-i H s}\bar U(t)|\Psi_0}\, .
\ee
By Lemma \ref{a:L:3} we have
\be\label{eq:TaylorC}
\frac{\partial^n}{\partial t^n} f(t,0)=\frac{\partial^n}{\partial s^n}\Bigr|_{s=0}f(t,s)\, ,
\ee
indeed 
\be
\frac{\partial^n}{\partial t^n}\Bigr|_{t=0} e^{i H t}\frac{\mathcal O}{L} e^{-i H t}=i^n \underbrace{\bigl[H,[H,...[H}_{n},\frac{\mathcal O}{L}\bigr]...]]\, .
\ee
By assumption, $f(t,0)$ (which corresponds to the time evolution with the mean-filed Hamiltonian) is analytic in the strip $|\mathrm{Im}[t]|<r$, so the convergence radius of the Taylor expansion at $t=0$ is larger than or equal to $r$.
Thus we have
\be
\label{eq:step0}
\fl\quad\, f(\tau,0)=\sum_n \frac{\tau^n}{n!}\frac{\partial^n}{\partial t^n}\Bigr|_{t=0}f(t,0)=\sum_n \frac{\tau^n}{n!}\frac{\partial^n}{\partial t^n}\Bigr|_{t=0}f(0,t)=f(0,\tau)\,,\qquad |\tau|<r\, .
\ee
Let us call $t_\ast$ a time such that $f(t,0)=f(0,t)$ for any $0\leq t< t_\ast$.  
As before, the function $f(t+\tau,0)$ is analytic in the strip $|\mathrm{Im}[\tau]|<r$, so we have
\be
\fl\quad f(t+\tau,0)=\sum_n \frac{\tau^n}{n!}\frac{\partial^n}{\partial t^n}f(t,0)=\sum_n \frac{\tau^n}{n!}\frac{\partial^n}{\partial t^n}f(0,t)=f(0,t+\tau)\,,\qquad |\tau|<r\, .
\ee
That is to say
\be 
\fl\qquad\qquad f(t,0)=f(0,t) \quad \forall t< t_\ast\quad\Longrightarrow\quad  f(t,0)=f(0,t) \quad \forall t< t_\ast+\tau\, .
\ee
Since $\tau$ is finite and \eref{eq:step0} holds, we conclude 
\be
f(t,0)=f(0,t) \qquad\forall\,t\,,
\ee
which is exactly \eref{eq:rep}. 
\end{proof}

\begin{corollary}\label{a:C:1}
(viz. Corollary \ref{C:1})
Lemma \ref{a:T:1} holds true in particular for local operators.
\end{corollary}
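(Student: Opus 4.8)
The plan is to deduce the corollary directly from Lemma \ref{a:T:1} by promoting the local operator $\mathcal O$ to an extensive, translation invariant operator that lies in the class $\mathcal E$, and then exploiting translation invariance to convert its density back into the local one-point function. The mechanism is the factor $1/L$ already present in \eref{eq:rep}: for a translation invariant extensive operator this normalisation is exactly what reproduces a single local expectation value.

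First I would introduce the symmetrised operator $\mathcal O_{\rm TI}=\sum_\ell \tau_\ell(\mathcal O)$, where $\tau_\ell$ denotes the shift by $\ell$ sites (with periodic boundary conditions). Since $\mathcal O$ acts nontrivially only on a finite region, $\mathcal O_{\rm TI}$ is a translation invariant operator with local density, hence it is of the form \eref{eq:form} with $n=1$ and therefore belongs to $\mathcal E$. In particular Lemma \ref{a:T:1} is applicable to it; the analyticity hypothesis required there for $\mathcal O_{\rm TI}/L$ coincides with the analyticity of the local one-point function of $\mathcal O$ under the mean-field evolution, so it is simply inherited.

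Next I would use the translation invariance of the three ingredients of the problem. The state $\ket{\Psi_0}$ is translation invariant by assumption, $H$ is translation invariant, and the mean-field Hamiltonian $H^{\Psi_0}_{\rm MF}(t)$ is translation invariant as well, because its building blocks $H_j$ are translation invariant and the self-consistent coefficients in \eref{eq:tran} are scalars. Consequently both $e^{-iHt}$ and $\bar U(t)$ commute with every shift $\tau_\ell$, so that $\braket{\Psi_0|e^{iHt}\tau_\ell(\mathcal O)e^{-iHt}|\Psi_0}$ and $\braket{\Psi_0|\bar U^\dag(t)\tau_\ell(\mathcal O)\bar U(t)|\Psi_0}$ are independent of $\ell$. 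Summing over $\ell$ and dividing by $L$ then gives, on the Heisenberg side,
\be
\frac{1}{L}\braket{\Psi_0|e^{iHt}\mathcal O_{\rm TI}e^{-iHt}|\Psi_0}=\braket{\Psi_0|e^{iHt}\mathcal O e^{-iHt}|\Psi_0}\, ,
\ee
together with the analogous identity obtained by replacing $e^{-iHt}$ with $\bar U(t)$ and $e^{iHt}$ with $\bar U^\dag(t)$.

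Finally I would feed $\mathcal O_{\rm TI}\in\mathcal E$ into the equality \eref{eq:rep} of Lemma \ref{a:T:1}: by the two identities just written, its left-hand side is the Heisenberg one-point function of the \emph{local} operator $\mathcal O$ and its right-hand side is the corresponding mean-field one-point function, which is precisely the statement of the corollary. The step I expect to demand the most care is the verification that $H^{\Psi_0}_{\rm MF}(t)$ is genuinely translation invariant for all $t$, so that $\bar U(t)$ commutes with the shifts; this rests on the fact that the self-consistent averages in \eref{eq:tran} are evaluated in the translation invariant state $\ket{\Psi_0}$ and hence cannot break the shift symmetry. Everything else is routine bookkeeping of translation invariance.
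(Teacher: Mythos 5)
Your proposal is correct and follows essentially the same route as the paper: the paper's proof is precisely the observation that, by translation invariance, the expectation value of a local operator $\mathcal O$ equals the expectation value per unit length of the symmetrised operator $\mathcal O_\ast=\sum_\ell\tau_\ell(\mathcal O)\in\mathcal E$, to which Lemma \ref{a:T:1} applies. Your additional verifications (that $\mathcal O_{\rm TI}$ is of the form \eref{eq:form} with $n=1$, and that $H^{\Psi_0}_{\rm MF}(t)$ is translation invariant so that $\bar U(t)$ commutes with shifts) are exactly the bookkeeping the paper leaves implicit.
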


\begin{proof}
By translation invariance, the expectation value of any local operator $\mathcal O$ is equal to the expectation value per unit of length of the operator $\mathcal O_\ast\in\mathcal E$, obtained by shifting $\mathcal O$ along the chain and summing all the ($L$) terms. 
\end{proof}

\begin{corollary}\label{a:C:2}
(viz. Corollary \ref{C:2})
Let $\ket{\Psi_0}$ a translation invariant state with cluster decomposition properties and $H\in \mathcal E$. 
In the thermodynamic limit, the time evolution of the reduced density matrix (RDM) of some spin block $S$ is equal to the RDM in the state that time evolves with the mean-field Hamiltonian: 
\be\label{eq:RDM}
\rho_S(t)=\tr_{\bar S}[e^{-i H t}\ket{\Psi_0}\bra{\Psi_0}e^{i H t}]=\tr_{\bar S}[\bar U(t)\ket{\Psi_0}\bra{\Psi_0}\bar U^\dag(t)]\, .
\ee
\end{corollary}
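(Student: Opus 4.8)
The plan is to reduce this statement about reduced density matrices to the statement about local observables already established in Corollary~\ref{a:C:1}. The crucial structural input is that the block $S$ is finite and the on-site Hilbert space is finite dimensional, so the algebra of operators acting nontrivially only on $S$ is a \emph{finite-dimensional} vector space. First I would fix a Hermitian basis $\{\mathcal O_a\}$ of this algebra, for instance the tensor products of Pauli matrices (and identities) supported on the sites of $S$. Each such $\mathcal O_a$ is by construction a local operator, so Corollary~\ref{a:C:1} applies to it.

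Next I would use the defining property of the reduced density matrix: for every $\mathcal O_a$ one has $\tr[\rho_S(t)\mathcal O_a]=\braket{\Psi_0|e^{iHt}\mathcal O_a e^{-iHt}|\Psi_0}$, while on the mean-field side $\tr[\rho_S^{\rm MF}(t)\mathcal O_a]=\braket{\Psi_0|\bar U^\dag(t)\mathcal O_a\bar U(t)|\Psi_0}$, where $\rho_S^{\rm MF}(t)=\tr_{\bar S}[\bar U(t)\ket{\Psi_0}\bra{\Psi_0}\bar U^\dag(t)]$. Applying Corollary~\ref{a:C:1} to each $\mathcal O_a$ then gives, in the thermodynamic limit, the equality of these two scalars for every element of the basis.

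The final step is to argue that matching all these expectation values forces the two matrices to coincide. Since $\{\mathcal O_a\}$ spans the full operator space on $\mathcal H_S$, the linear map sending a density matrix $\rho$ to the collection $(\tr[\rho\,\mathcal O_a])_a$ is injective; hence $\rho_S(t)$ and $\rho_S^{\rm MF}(t)$, having identical images, must be equal, which is exactly the asserted equality~\eref{eq:RDM}. Finiteness of the basis is what makes this clean: only finitely many scalar identities must be checked, each supplied by Corollary~\ref{a:C:1}.

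The point to handle with care -- rather than a deep obstacle -- is the interplay of limits. Corollary~\ref{a:C:1} yields equality of expectation values only after $L\to\infty$, so $\rho_S(t)$ and $\rho_S^{\rm MF}(t)$ are to be understood as thermodynamic-limit RDMs, and one must check that these limits exist; they do componentwise, because each of the finitely many matrix elements converges by Corollary~\ref{a:C:1}. The finite dimensionality of the local Hilbert space, emphasised in Section~\ref{s:mf} as a fundamental assumption, is precisely what guarantees both that the spanning set of local observables is finite and that convergence of this finite collection of matrix elements is equivalent to convergence of the RDM itself.
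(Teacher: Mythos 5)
Your proof is correct and is precisely the detailed expansion of the paper's own one-line argument, which simply states that the corollary is a direct consequence of Corollary \ref{a:C:1}: expanding a basis of Hermitian operators on the finite block, applying the local-observable result to each, and invoking injectivity of the expectation-value map is exactly what that "direct consequence" amounts to. Your additional care about interpreting the RDMs as thermodynamic-limit objects with componentwise convergence is a sensible clarification but does not change the route.
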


\begin{proof}
This is a direct consequence of Corollary \ref{a:C:1}. 
\end{proof}

\begin{corollary}\label{a:C:3}
(viz. Corollary \ref{C:3})
Let $H\in\mathcal E$ and $\ket{\Psi}$ a state with cluster decomposition properties. 
If $\ket{\Psi}$ is an excited state of the corresponding mean-field Hamiltonian $H_{\rm MF}^{\Psi}$
\be
H_{\rm MF}^{\Psi}\ket{\Psi}=E_\Psi\ket{\Psi}\, ,
\ee
the expectation value of local observables in $e^{-i H t}\ket{\Psi}$ is independent of time. Therefore, $\ket{\Psi}$ behaves locally as an excited state of $H$. 

The reverse is also true. If an excited state of $H$ is locally equivalent to a state with cluster decomposition properties, then the latter is an excited state of the corresponding mean-field Hamiltonian.
\end{corollary}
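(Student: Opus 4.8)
The plan is to establish the two implications of the corollary separately, in both cases reducing the time evolution under the nonlocal $H\in\mathcal E$ to the \mbox{(quasi-)local} mean-field dynamics through Lemma \ref{a:T:1} and Corollary \ref{a:C:1}.

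\emph{Forward implication.} Suppose $H_{\rm MF}^\Psi(0)\ket\Psi=E_\Psi\ket\Psi$ (the mean-field Hamiltonian of the statement being the one at $t=0$, where $\bar U(0)=\1$). First I would check that the \emph{constant} mean-field evolution $\bar U(t)=e^{-iH_{\rm MF}^\Psi(0)t}$ is a self-consistent solution: since $\bar U(t)\ket\Psi=e^{-iE_\Psi t}\ket\Psi$, every coefficient $\prod_{\ell\neq j}\braket{\Psi|\bar U^\dag(t)H_\ell\bar U(t)|\Psi}/L$ defining $H_{\rm MF}^\Psi(t)$ collapses to its value at $t=0$, so $H_{\rm MF}^\Psi(t)=H_{\rm MF}^\Psi(0)$ for all $t$, consistently with the posited $\bar U$. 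Because (see the Remark following Lemma \ref{a:T:1}) the self-consistency problem can be written as $\dot{\vec u}=\vec F(\vec u,t)$ with $\vec F$ polynomial, its analytic solution is unique, so this is the mean-field evolution to which Corollary \ref{a:C:1} refers. Then $\lim_{L\to\infty}\braket{\Psi|e^{iHt}(\mathcal O/L)e^{-iHt}|\Psi}=\lim_{L\to\infty}\braket{\Psi|\bar U^\dag(t)(\mathcal O/L)\bar U(t)|\Psi}=\lim_{L\to\infty}\braket{\Psi|\mathcal O/L|\Psi}$, the last equality following because the eigenstate phases cancel; hence local observables are time independent and $\ket\Psi$ behaves locally as an excited state of $H$.

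\emph{Reverse implication.} Let $\ket E$ be an eigenstate of $H$ locally equivalent to the cluster-decomposing state $\ket\Psi$. The key preliminary observation is that \emph{local equivalence to a cluster-decomposing state confers cluster decomposition}: the connected correlators of finitely supported operators entering the cluster property, together with their large-separation limits, are all fixed by the finite-block reduced density matrices, which $\ket E$ shares with $\ket\Psi$; therefore $\ket E$ too satisfies cluster decomposition and Lemma \ref{a:L:1} applies to it. Writing $\mathcal O_\ast\in\mathcal E$ for the translation-invariant sum of a local $\mathcal O$, I would expand $\braket{\Psi|e^{iHt}(\mathcal O_\ast/L)e^{-iHt}|\Psi}=\sum_n\frac{(it)^n}{n!}\braket{\Psi|\mathcal C_n/L|\Psi}$ with $\mathcal C_n=\underbrace{[H,[H,\dots,\mathcal O_\ast]]}_n\in\mathcal E$ (Lemma \ref{L:algebra}). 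For $n\ge1$ one has $\braket{E|\mathcal C_n/L|E}=0$, since $\braket{E|[H,X]|E}=0$ for any $X$ when $H\ket E=E\ket E$. By Lemma \ref{a:L:1}, applied to both $\ket E$ and $\ket\Psi$, each $\braket{\Psi|\mathcal C_n/L|\Psi}$ and $\braket{E|\mathcal C_n/L|E}$ reduces to the same factorised product of local per-length expectation values, which coincide by local equivalence; hence $\braket{\Psi|\mathcal C_n/L|\Psi}=0$ for all $n\ge1$ and local observables of $e^{-iHt}\ket\Psi$ are time independent. By Corollary \ref{a:C:1} the mean-field evolution $\bar U(t)\ket\Psi$ then also leaves every local observable invariant, i.e. $\ket\Psi$ is locally stationary under the local Hamiltonian $H_{\rm MF}^\Psi$, which for a cluster-decomposing state means $\ket\Psi$ is (equivalent to) an eigenstate of $H_{\rm MF}^\Psi$.

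\emph{Anticipated obstacles.} The forward part is routine once the self-consistency of the constant solution and its uniqueness are granted. In the reverse part the delicate step is the final inference from ``all local observables are stationary under the local evolution $\bar U$'' to ``$\ket\Psi$ is an eigenstate of $H_{\rm MF}^\Psi$'': two distinct pure states may share all finite reduced density matrices, which is exactly why only equivalence \emph{up to local indistinguishability} can be claimed (the ``(equivalent to)'' qualifier of the main-text statement). I would handle this by arguing that local stationarity under a \emph{local} Hamiltonian, together with cluster decomposition, determines the state uniquely within its local equivalence class. A secondary point requiring care throughout is the interchange of $\lim_{L\to\infty}$ with the Taylor series in $t$, which is legitimate precisely under the strip-analyticity hypothesis already invoked in Lemma \ref{a:T:1}.
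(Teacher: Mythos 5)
Your proposal is correct and follows essentially the same route as the paper, whose (three-line) proof likewise observes that the constant mean-field Hamiltonian solves the self-consistency condition \eref{eq:tran}, that $e^{-i H_{\rm MF}^{\Psi}t}\ket{\Psi}\propto\ket{\Psi}$, and then invokes Corollary \ref{a:C:1}, dismissing the reverse implication as holding ``for analogous reasons''. Your fleshed-out reverse argument (nested commutators via Lemma \ref{L:algebra}, factorisation via Lemma \ref{a:L:1}, transfer by local equivalence) is precisely the natural expansion of that remark, and the one delicate step you flag --- passing from local stationarity under $H_{\rm MF}^{\Psi}$ to (equivalence with) an eigenstate --- is left equally implicit in the paper's own proof.
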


\begin{proof}
Clearly the mean-field Hamiltonian $H_{\rm MF}^{\Psi}$ is the solution of \eref{eq:tran}. Being $e^{-i H_{\rm MF}^{\Psi}t}\ket{\Psi}\propto \ket{\Psi}$, by Corollary \ref{a:C:1} the expectation value of local observables is independent of time. The reverse holds true for analogous reasons. 
\end{proof}

\section{Self-consistency check of condition \eref{eq:condA}}\label{a:self}%

Here we show that neglecting the \emph{anomalous term} $L\mathcal A_{\{n\}}^{\{\alpha\}}$ (\emph{cf}. \eref{eq:WickJWls}) in the time averaged Hamiltonian is a self-consistent approximation. To this aim, we consider the time evolution of a Slater determinant $\ket{\Psi_0}$ under the Hamiltonian $\tilde{H}$, obtained from $H$ by removing $L\mathcal A_{\{n\}}^{\{\alpha\}}$.
In \Sref{s:mf} and Appendix \ref{a:MF} we proved that, as long as $\mathcal O$ is a local operator (but the class of allowed operators is in fact larger), $e^{-i \tilde{H}t}$ can be replaced by the mean-field time evolution operator $\bar U(t)$ \eref{eq:U}. 
 Here we show that inserting $L\mathcal A_{\{n\}}^{\{\alpha\}}$ back at time $t$ does not change the expectation value of local observables.  
In other words we are going to prove
\be
\label{a:eq:selfcons}
\lim_{L\rightarrow\infty}L\braket{\Psi_0|\bar U^\dag(t)[\mathcal A_{\{n\}}^{\{\alpha\}},\mathcal O]\bar U(t)|\Psi_0}= 0\qquad\qquad\forall t\,,
\ee 
where $\mathcal O$ is a generic local operator. 

Using the notations of Appendix \ref{a:Wick}, any local operator can be written as a linear combination of operators of the form
\be
\fl\qquad\mathcal O =  \mathbf{c}_{\ell_1}^{\gamma_1}\cdots\mathbf{c}_{\ell_{n}}^{\gamma_{n}}=\frac{1}{L^{n/2}}\sum_{\{p_i\}}{\mathcal F_{n}}^{\{\gamma_i\}}_{\{\sigma_i\}}(\{\ell_i\}|\{p_i\})\mathbf{b}_{\sigma_1}(p_1)\cdots\mathbf{b}_{\sigma_{n}}(p_{n})\,,
\ee
where
\be
{\mathcal F_{n}}^{\{\alpha_i\}}_{\{\beta_i\}}(\{j_i\}|\{p_i\})\equiv\prod_{i=1}^{n}\left( e^{i \alpha_i j_i p_i}U(p_i)^{\alpha_i}_{\beta_i}\right)\,.
\ee 
If $n$ is odd then \eref{a:eq:selfcons} is trivially satisfied because $\bar U(t)\ket{\Psi_0}$ is a Slater determinant by assumption and hence the expectation value of an odd number of fermions vanishes. 
We therefore focus on the case $n=2m$. The anomalous term $\mathcal A_{\{n\}}^{\{\alpha\}}$ of \eref{eq:WickJWls} can be written as follows
\be
\fl\mathcal A_{\{n\}}^{\{\alpha\}}=\frac{1}{L^2}\sum_{k_1, k_2} \sum_{\bar k_3,\bar k_4}{\mathcal F_{2}}^{\{\alpha_i\}}_{\{\beta_i\}}(\{j_i\}|\{k_i\}){\mathcal F_{2}}^{\{\alpha_i\}}_{\{\beta_i\}}(\{j_i\}|\{\bar k_i\})
\mathbf{b}_{\beta_1}(k_1)\mathbf{b}_{\beta_2}(k_2)\mathbf{b}_{\beta_3}(\bar k_3)\mathbf{b}_{\beta_4}(\bar k_4)
\ee
where $\bar k_3$ and $\bar k_4$ are the \emph{anomalous solutions} of system \eref{eq:EnergyConstraint}, \emph{i.e.} they are implicit functions of $k_1$ and $k_2$ defined by the system \eref{eq:EnergyConstraint} and in addition fulfilling
\be
\label{eq:conditionskbar}
k_{1,2}\pm\bar k_{3,4}\neq0\,,\qquad\qquad\bar k_{3}\pm\bar k_{4}\neq0\,,
\ee
almost everywhere.
Since $\ket{\tilde{\Psi}_t}=\bar U(t)\ket{\Psi_0}$ is a Slater determinant, we can use the Wick theorem to compute expectation values.  
We then have 
\be\label{eq:scaling3}
\fl\qquad L\braket{\tilde{\Psi}_t|\mathcal A_{\{n\}}^{\{\alpha\}} \mathcal O|\tilde{\Psi}_t}=L\braket{\tilde{\Psi}_t|\mathcal A_{\{n\}}^{\{\alpha\}}|\tilde{\Psi}_t}\braket{\tilde{\Psi}_t|\mathcal O|\tilde{\Psi}_t}+\mathcal C_2[\mathcal A_{\{n\}}^{\{\alpha\}} \mathcal O]_t +\mathcal C_4[\mathcal A_{\{n\}}^{\{\alpha\}} \mathcal O]_t
\ee   
where $\mathcal C_2[\mathcal A_{\{n\}}^{\{\alpha\}} \mathcal O]_t$ contains terms in which two of the $\mathbf{b}$'s in $\mathcal A_{\{n\}}^{\{\alpha\}}$ are contracted together and the other two are contracted with two $\mathbf{b}$'s in $\mathcal O$;  $\mathcal C_4[\mathcal A_{\{n\}}^{\{\alpha\}} \mathcal O]_t$ contains all the terms in which any $\mathbf{b}$ in $\mathcal A_{\{n\}}^{\{\alpha\}}$ is contracted with a $\mathbf{b}$ in $\mathcal O$.

According to the definition of $\mathcal A_{\{n\}}^{\{\alpha\}}$, any Wick contraction among $\mathbf{b}$'s in it gives zero (because of \eref{eq:conditionskbar}), hence the only non zero contribution to \eref{eq:scaling3} arises from $\mathcal C_4[\mathcal A_{\{n\}}^{\{\alpha\}} \mathcal O]_t$. To conclude the proof we will show that the terms in $\,\mathcal C_4[\mathcal A_{\{n\}}^{\{\alpha\}} \mathcal O]_t$ scale as $O(L^{-1})$ and in the thermodynamic limit their contribution can thus be neglected. To this end it is sufficient to consider a typical element of $\,\mathcal C_4[\mathcal A_{\{n\}}^{\{\alpha\}} \mathcal O]_t$
\be
\fl \frac{1}{L^{m+1}}
G_{\mathbf j,\mathbf k,\boldsymbol \ell,\mathbf p}^{\boldsymbol \alpha, \boldsymbol \beta,\boldsymbol \gamma,\boldsymbol \sigma}
\contraction{\mathbf{b}_{\beta_1}(k_1)\mathbf{b}_{\beta_2}(k_2)\mathbf{b}_{\beta_3}(\bar k_3)}{\mathbf{b}}{_{\beta_4}(\bar k_4)}
{\mathbf{b}} \contraction[2ex]{}{\mathbf{b}}{_{\beta_1}(k_1)\mathbf{b}_{\beta_2}(k_2)\mathbf{b}_{\beta_3}(\bar k_3)\mathbf{b}_{\beta_4}(\bar k_4)
\mathbf{b}_{\sigma_1}(p_1)\mathbf{b}_{\sigma_2}(p_2)
\mathbf{b}_{\sigma_3}(p_3)\cdots}{\mathbf{b}}\contraction[3ex]{\mathbf{b}_{\beta_1}(k_1)}{\mathbf{b}}{_{\beta_2}(k_2)\mathbf{b}_{\beta_3}(\bar k_3)\mathbf{b}_{\beta_4}(\bar k_4)
\mathbf{b}_{\sigma_1}(p_1)\mathbf{b}_{\sigma_2}(p_2)
}{\mathbf{b}}
\contraction[4ex]{\mathbf{b}_{\beta_1}(k_1)\mathbf{b}_{\beta_2}(k_2)}{\mathbf{b}}{_{\beta_3}(\bar k_3)\mathbf{b}_{\beta_4}(\bar k_4)
\mathbf{b}_{\sigma_1}(p_1)}
{\mathbf{b}}
\mathbf{b}_{\beta_1}(k_1)\mathbf{b}_{\beta_2}(k_2)\mathbf{b}_{\beta_3}(\bar k_3)\mathbf{b}_{\beta_4}(\bar k_4)
\mathbf{b}_{\sigma_1}(p_1)\mathbf{b}_{\sigma_2}(p_2)
\mathbf{b}_{\sigma_3}(p_3)\cdots\mathbf{b}_{\sigma_{2m}}(p_{2m})\, ,
\ee 
where sums over the momenta $\{p_i\}, \{k_i\}$ and the indices are understood, and we defined
\be
\contraction{}{\mathbf{b}}{_{\alpha}(p)}{\mathbf{b}}\mathbf{b}_{\alpha}(p)\mathbf{b}_{\beta}(q)=\braket{\tilde{\Psi}_t|\mathbf{b}_{\alpha}(p)\mathbf{b}_{\beta}(q)|\tilde{\Psi}_t}
\ee
\be
\fl \qquad G_{\mathbf j,\mathbf k,\boldsymbol \ell,\mathbf p}^{\boldsymbol \alpha, \boldsymbol \beta,\boldsymbol \gamma,\boldsymbol \sigma}\equiv{\mathcal F_{2}}^{\{\alpha_i\}}_{\{\beta_i\}}(\{j_i\}|\{k_i\}){\mathcal F_{2}}^{\{\alpha_i\}}_{\{\beta_i\}}(\{j_i\}|\{\bar k_i\}){\mathcal F_{n}}^{\{\gamma_i\}}_{\{\sigma_i\}}(\{\ell_i\}|\{p_i\})\,.
\ee
The $2m+2$ sums over the momenta are reduced to $m$ by the Kronecker deltas arising from the Wick contractions. 
Because the number of factors $L^{-1}$ exceeds by one the number of sums, the term turns out to be $O(L^{-1})$. The validity of equation \eref{a:eq:selfcons} is then established.

\section{Additional properties of \eref{eq:TFICNI}}\label{a:GS}
\subsection{Ground state and maximal energy state}
Here we show that the TFIC ground state $\ket{\psi_{g_0}}$, with $g_0$ satisfying \eref{eq:noquench1}, is the Slater determinant that minimises the energy \eref{eq:energy} of the Hamiltonian \eref{eq:TFICNI}.

The symbol of the correlation matrix of the most general reflection symmetric one-site shift invariant Slater determinant is
\be\label{a:eq:sym}
\Gamma(k)=n_x(k)\sigma^x+n_y(k)\sigma^y+n_z(k)\sigma^z\, ,
\ee
with $n_\alpha(k)$ real functions that satisfy 
\bea
n_x^2+n_y^2+n_z^2=1\label{a:eq:sympure}\\
n_{x,z}(k)=-n_{x,z}(-k)\\
n_y(k)=n_y(-k)\, .
\eea
\Eref{a:eq:sympure} manifests the fact that the initial state is pure, which indeed implies that the eigenvalues of $\Gamma(k)$ are $\pm 1$. The other conditions simply mean that the correlation matrix is a purely imaginary skew-symmetric matrix.  The absence of a term proportional to the identity in \eref{a:eq:sym} is a consequence of reflection symmetry. 

The energy \eref{eq:energy} can be written as follows:
\be
\fl\qquad\varepsilon=\lambda\Bigl(\int_{0}^\pi\frac{\mathrm d k}{\pi}n_y(k)\Bigr)^2-\tilde g\int_{0}^\pi\frac{\mathrm d k}{\pi}n_y(k)+\int_{0}^\pi\frac{\mathrm d k}{\pi} (\cos k\ n_y(k)-\sin k\ n_x(k))\, .
\ee
The minimisation can be worked out by zeroing the variation of the functional
\be
\Phi[n_x,n_y,n_z,\mu]=\varepsilon-\int_{0}^\pi\frac{\mathrm d k}{\pi}\mu(k) (n_x^2(k)+n_y^2(k)+n_z^2(k))
\ee
with respect to its arguments. 
We immediately see that the variation with respect to $n_z$ results in
\be
\mu(k)n_z(k)=0\, .
\ee 
If we assume $\mu(k)\neq 0$, then  $n_z(k)=0$ and we can  enforce \eref{a:eq:sympure} by setting $n_x=-\sin\theta$ and $n_y=\cos\theta$. Instead of working with the functional $\Phi$ we can therefore express the energy as
\be
\fl\quad\varepsilon[\theta]=\lambda\Bigl(\int_{0}^\pi\frac{\mathrm d k}{\pi}\cos\theta_k\Bigr)^2-\tilde g\int_{0}^\pi\frac{\mathrm d k}{\pi}\cos\theta_k+\int_{0}^\pi\frac{\mathrm d k}{\pi} (\cos k\ \cos\theta_k+\sin k\ \sin\theta_k)
\ee
and consider its variation with respect to $\theta_k$. We obtain
\be
 \Bigl[\tilde g-2\lambda\Bigl(\int_{0}^\pi\frac{\mathrm d p}{\pi}\cos\theta_p\Bigr)-\cos k\Bigr]\sin\theta_k+\sin k\ \cos\theta_k=0
\ee
which is solved by
\be\label{a:eq:GS}
e^{i\theta_k}=\frac{g_0-e^{i k}}{\sqrt{1+g_0^2-2 g_0\cos k}}\, ,
\ee
where $g_0$ satisfies \eref{eq:noquench1}.
One of the solutions of \eref{eq:noquench1} minimises the energy. 
The energy is instead maximal for
\be\label{a:eq:AGS}
e^{i\theta_k}=-\frac{g_1-e^{i k}}{\sqrt{1+g_1^2-2 g_1\cos k}}\, ,
\ee
where $g_1$ is a solution of
\be
\tilde g=g_1-2\lambda\Bigl(\int_{0}^\pi\frac{\mathrm d p}{\pi}\frac{g_1-\cos p}{\sqrt{1+g_1^2-2 g_1\cos p}}\Bigr)\, .
\ee
We emphasise that numerical data obtained by exact diagonalization of \eref{eq:TFICNI} for small chains are compatible with the two states of \eref{a:eq:GS} and \eref{a:eq:AGS} being the actual ground state and the maximal energy state, respectively, of \eref{eq:TFICNI} in the thermodynamic limit.  

\subsection{Excited states}
Following Corollary \ref{C:3}, we construct an infinite number of states for which the expectation value of local observables that time evolve under the Hamiltonian \eref{eq:TFICNI} is independent of time. To all intents and purposes, those states can be considered eigenstates of \eref{eq:TFICNI}.

A generic stationary solution of \eref{eq:system} should have $\tilde y'=0$ and 
\be
4(h-\cos k)\phi_k+16(h\cos k  -1)y_k=0\, ,
\ee
namely
\be
-16(h-\cos k)\tr[\Gamma(k)\sigma^ye^{i k\sigma^z}]+16(h\cos k  -1)\tr[\Gamma(k)\sigma^y]=0\, .
\ee
This is solved by
\be\label{a:eq:Gammaexc}
\Gamma_m(k)=-m(k)\frac{(h_m-\cos k)\sigma^y+\sin k\ \sigma^x}{\sqrt{1+h_m^2-2h_m\cos k}}\, ,
\ee
where $m(k)=m(-k)$ and we used $\tilde y'=0$ to remove any term proportional to $\sigma^z$.
The parameter $h_m$ is the solution of the equation
\be\label{a:eq:hm}
h_m=\tilde g+2\lambda\int_{-\pi}^\pi\frac{\mathrm d k}{2\pi}m(k)\frac{h_m-\cos k}{\sqrt{1+h_m^2-2h_m\cos k}}\, .
\ee
We recognise $\Gamma_m(k)$ as the symbol of the correlation matrix of a reflection symmetric (Slater determinant) excited state of the TFIC model with magnetic field $h_m$ \cite{AFC:exc, BKC:exc, KBC:exc, AEFS-B}. Within this interpretation, if $h_m$ was independent of $m$, $m(k_j)=\pm 1$ (for finite chains) would have provided an orthonormal basis of reflection symmetric excited states. We remind the reader that in the thermodynamic limit infinite excited states become locally equivalent and, in sufficiently regular cases, can be characterised by a function $-1\leq m(k)\leq 1$. 
However, in the latter characterisation, which exactly describes the local properties of the excited states, the manifest orthonormality properties of the basis are lost. This problem is even more pronounced in our case, where $h_m$ depends on the excitation, and a naive finite volume regularisation does not produce an orthonormal basis also for $m(k_j)=\pm 1$ (this is not unexpected, being our description valid only in the thermodynamic limit). 
Nevertheless, the noninteracting states with correlation matrices \eref{a:eq:Gammaexc} are locally equivalent to the excited states of \eref{eq:TFICNI}. 
In addition, it seems that they span the entire space of reflection symmetric states. 

We notice that in the thermodynamic limit generally there are infinite $m$ that give rise to the same $h_m$. Indeed we can always add to $m(k)$ an even function $\delta m(k)$ such that $-1\leq m+\delta m\leq 1$ and 
\be
\int_{0}^\pi\frac{\mathrm d k}{\pi}\delta m(k)\frac{h_m-\cos k}{\sqrt{1+h_m^2-2h_m\cos k}}=0\, ,
\ee
which physically means the the excitation must have the same magnetisation per unit of length in the $z$ direction. 

For the sake of completeness we report the expression of the energies of the excited states:
\be\label{a:eq:energyex}
\varepsilon_m=\frac{h_m^2-\tilde g^2}{4\lambda}-\int_{-\pi}^\pi\frac{\mathrm d k}{2\pi} m(k)\frac{h_m\cos k-1}{\sqrt{1+h_m^2-2h_m\cos k}}\,.
\ee

\begin{figure}[tbp]
\begin{center}
\includegraphics[width=0.8\textwidth]{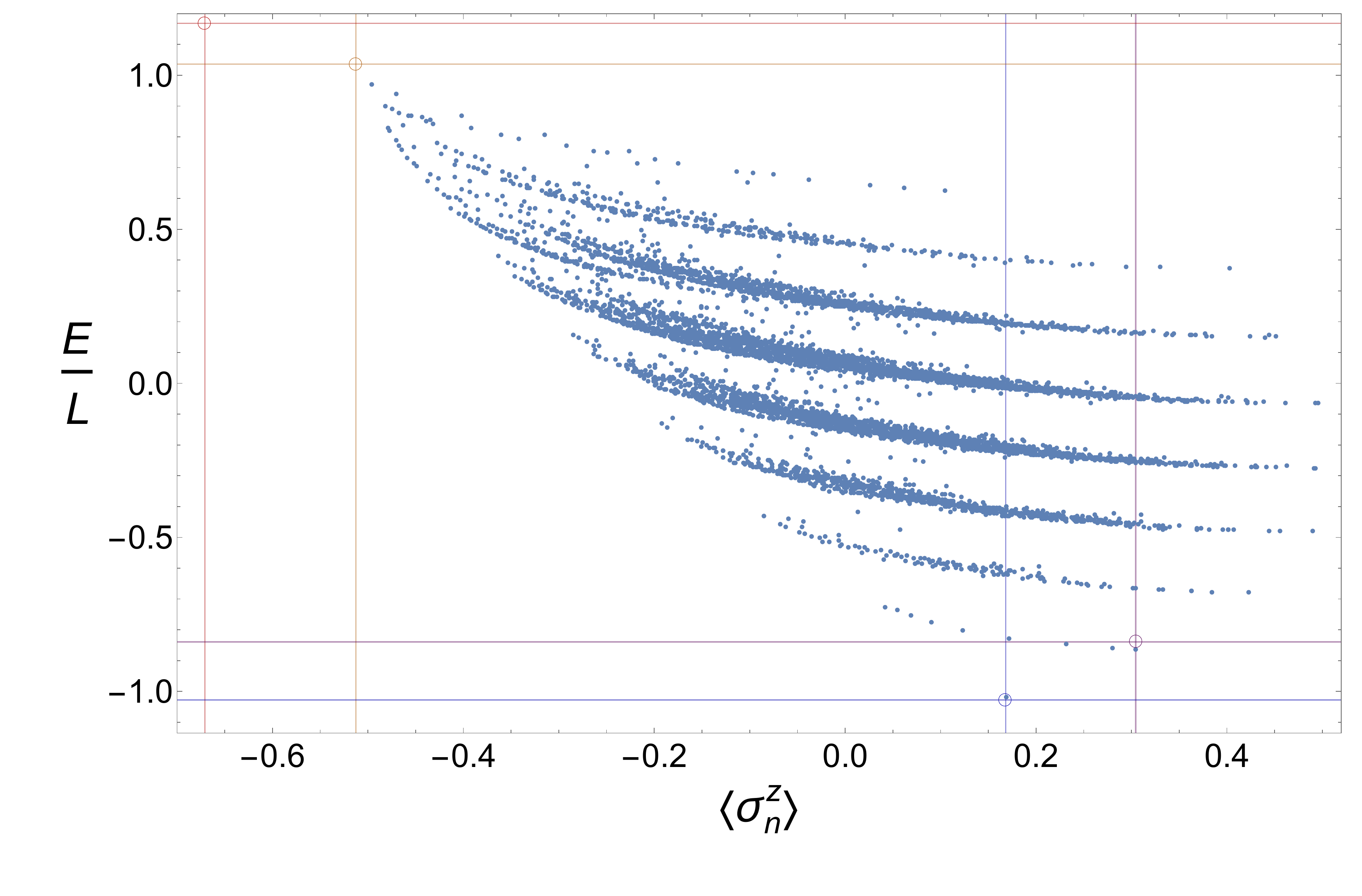}
\end{center}\caption{ The energy per unit of length as a function of the magnetisation along $z$ for the Hamiltonian \eref{eq:TFICNI} with $\tilde g=0.5$ and $\lambda=0.5$ in the reflection symmetric sector of the zero momentum subspace with spin-flip parity $\prod_\ell\sigma_\ell^z$ equal to 1 and $L=19$. The gridlines and circles indicate the values corresponding to the two lowest energy eigenvalues and the two highest ones, obtained self-consistently from the mean-field Hamiltonian with $L=19$. In the thermodynamic limit the eigenstates should be equivalent to excited states of the mean-field Hamiltonian with the associated magnetisation (\emph{cf}. Corollary \ref{C:3}), \emph{i.e.} points with the same abscissa are in correspondence with some excited states of the underlying Ising Hamiltonian.
}\label{f:IsingNI}
\end{figure}

\begin{figure}[tbp]
\begin{center}
\includegraphics[width=0.8\textwidth]{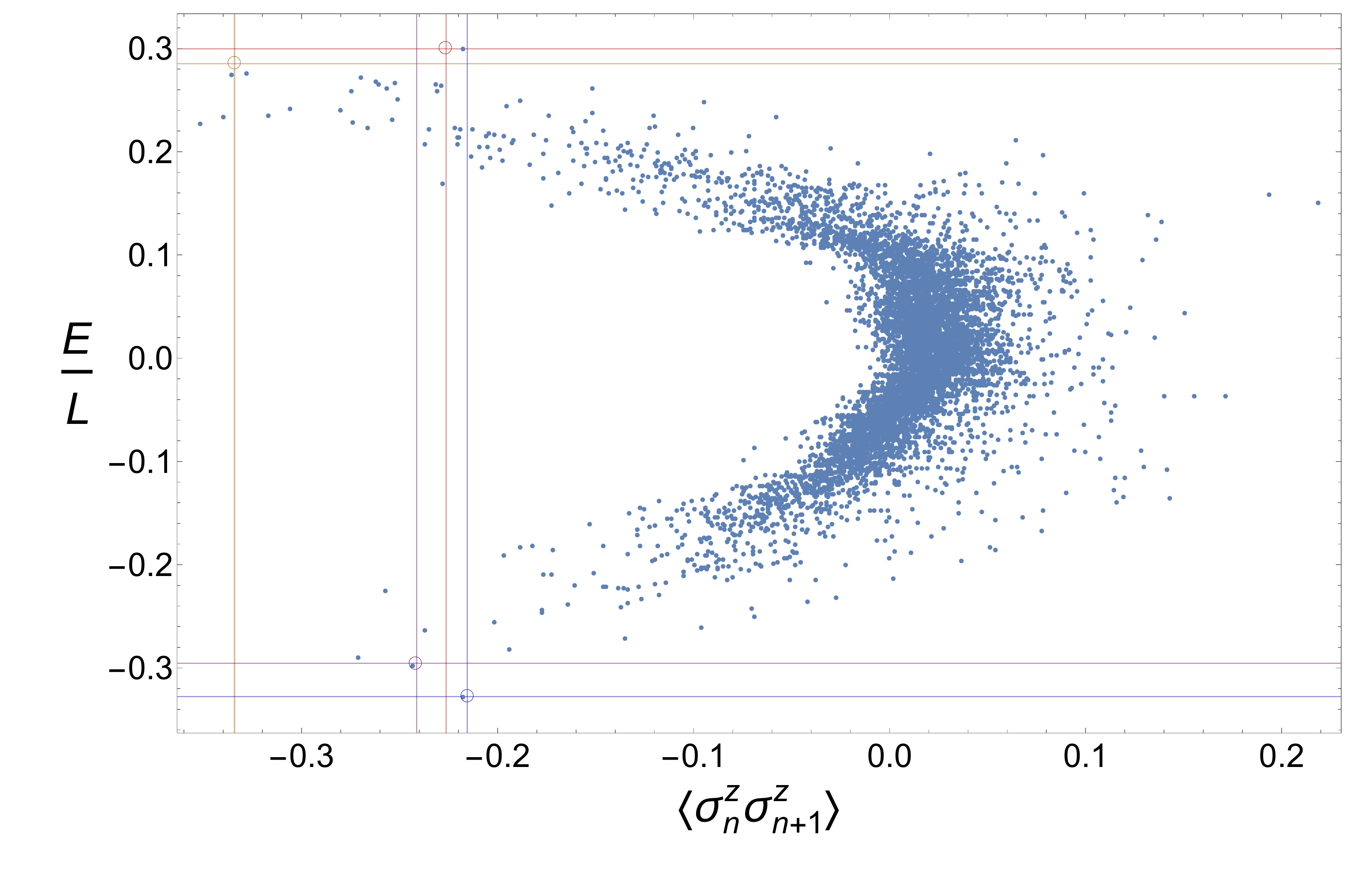}
\end{center}\caption{The energy per unit of length as a function of $\braket{\sigma_n^z\sigma_{n+1}^z}$ for the Hamiltonian \eref{eq:HWig} with $\gamma=0.25$ and $\lambda=0.5$ in the reflection symmetric sector of the zero momentum subspace with spin-flip parity $\prod_\ell\sigma_\ell^z$ equal to 1 and $L=19$. The notations are the same as in \Fref{f:IsingNI}.}\label{f:XYZ}
\end{figure}

We conclude this appendix with a comparison between finite chain data and expectations based on the mean-field correspondence, which strongly relies on the thermodynamic limit.  
In particular, we tried to apply Corollary \ref{a:C:3} to a finite chain to estimate the lowest and highest excitations. 
Figures \ref{f:IsingNI} and \ref{f:XYZ} show that the mean-field predictions are in fairly good agreement with numerical data even in small chains (notice that there can be $\mathcal O(1/L)$ corrections in both directions). 

\section*{References} %

\end{document}